\renewcommand{\algorithmicrequire}{\textbf{Input:}}
\newtheorem{thm}{Theorem}[section]
\newtheorem{df}[thm]{Definition}
\newtheorem{rem}[thm]{Remark}
\newtheorem{prop}[thm]{Proposition}
\title[AAMAS-2022 Formatting Instructions]{Agent-Temporal Attention for Reward Redistribution in\\ Episodic Multi-Agent Reinforcement Learning}
\author{Baicen Xiao}
\affiliation{
  \institution{University of Washington}
  \city{Seattle, WA}
  \country{USA}
}
\email{bcxiao@uw.edu}
\author{Bhaskar Ramasubramanian}
\affiliation{
  \institution{Western Washington University}
  \city{Bellingham, WA}
  \country{USA}
}
\email{ramasub@wwu.edu}
\author{Radha Poovendran}
\affiliation{
  \institution{University of Washington}
  \city{Seattle, WA}
  \country{USA}
}
\email{rp3@uw.edu}
\begin{abstract}
This paper considers %the multi-agent temporal credit assignment problem for 
multi-agent reinforcement learning (MARL) tasks where agents receive a shared global reward at the end of an episode. 
The delayed nature of this reward affects the ability of the agents to assess the quality of their actions at intermediate time-steps. 
%The delayed nature of this reward results in the absence of feedback on the quality of actions taken by agents at intermediate time-steps. 
This paper focuses on developing methods to learn a temporal redistribution of the episodic reward to obtain a dense reward signal. 
%We study the problem of learning a temporal redistribution of the episodic reward to obtain a dense reward signal. 
Solving such MARL problems requires addressing two challenges:  
%It is important to identify the relative importance of: i) states along the length of an episode (along time), and; ii) each agent's state at any single time-step (among agents). 
identifying (1) relative importance of states along the length of an episode (along time), and (2) relative importance of individual agents' states at any single time-step (among agents). 
In this paper, we introduce  \textbf{Agent-Temporal Attention for Reward Redistribution in Episodic Multi-Agent Reinforcement Learning (AREL)} to address these two challenges. 
\emph{AREL} uses attention mechanisms to characterize the influence of actions on state transitions along trajectories (\emph{temporal attention}), and how %any single 
each agent is affected by other agents at each time-step (\emph{agent attention}). 
%The attention modules enable learning a temporal redistribution of the episodic reward, and can be integrated with any existing MARL algorithm. 
The redistributed rewards predicted by \emph{AREL} are dense, and can be integrated with any given MARL algorithm. 
We evaluate \emph{AREL} on challenging tasks from the Particle World environment %, and three maps in 
and the StarCraft Multi-Agent Challenge. 
\emph{AREL} results in higher rewards in Particle World, and improved win rates in StarCraft compared to three state-of-the-art reward redistribution methods. 
Our code is available at \textbf{\url{https://github.com/baicenxiao/AREL}}.
\end{abstract}
\keywords{Multi-agent reinforcement learning, credit assignment, episodic rewards, attention mechanism}
\newcommand{\BibTeX}{\rm B\kern-.05em{\sc i\kern-.025em b}\kern-.08em\TeX}
\begin{document}

%%% The following commands remove the headers in your paper. For final 
%%% papers, these will be inserted during the pagination process.

\pagestyle{fancy}
\fancyhead{}

%%% The next command prints the information defined in the preamble.

\maketitle

\section{Introduction}
Cooperative multi-agent reinforcement learning (MARL) involves multiple autonomous agents that learn to collaborate to complete tasks in a shared environment by maximizing a global reward \citep{busoniu2008comprehensive}.
Examples of %multi-agent 
systems where MARL has been used include autonomous vehicle coordination \cite{sallab2017deep}, %multi-player 
and video games \citep{samvelyan19smac, tampuu2017multiagent}.%, and analysis of social dilemmas \cite{leibo2017multi}. 

One approach to enable better coordination %among agents 
is to use a single centralized controller that can access observations of all agents \cite{jiang2018learning}. 
In this setting, algorithms designed for single-agent RL can be used for the multi-agent case.   
However, this may not be feasible when trained agents are deployed independently or when communication costs between agents and the controller is prohibitive. 
In such a situation, agents will need to be able to learn decentralized policies. 

The centralized training with decentralized execution (CTDE) paradigm, introduced in \cite{lowe2017multi, rashid2018qmix}, enables agents to learn decentralized policies efficiently. 
Agents using CTDE can communicate with each other during training, but are required to make decisions independently at test-time. 
The absence of a centralized controller will require each agent to
%In the absence of a centralized controller, each agent will be required to 
assess how its own actions can contribute to a shared global reward. 
This is called the \emph{multi-agent credit assignment problem}, and has  
%Addressing this problem has 
been the focus of recent work in MARL, such as COMA \cite{foerster2018counterfactual}, QMIX \cite{rashid2018qmix} and QTRAN \cite{son2019qtran}. %Actor-attention-critic \cite{iqbal2019actor}. 
Solving the multi-agent credit assignment problem alone, however, is not adequate to efficiently learn agent policies when the (global) reward signal is delayed until the end of an episode.

In reinforcement learning, agents seek to solve a sequential decision problem guided by reward signals at intermediate time-steps. 
This is called the \emph{temporal credit assignment problem} \cite{sutton2018reinforcement}. 
%The agent learns to complete a task by allocating a higher credit to better states and actions. 
In many applications, rewards %signals 
may be delayed. %, thereby making it difficult to ascertain the relative importance of states and actions at each time-step. 
For example, in molecular design %problems 
\cite{olivecrona2017molecular}, \emph{Go} \cite{silver2016mastering}, and computer games such as \emph{Skiing} \cite{bellemare2013arcade}, a summarized score is revealed only at the end of an episode. 
The \emph{episodic} reward implies absence of feedback on quality of actions at intermediate time steps, making it difficult %for RL algorithms 
to learn good policies. 
%Delayed rewards have been shown to introduce a large bias \cite{arjona2019rudder} or variance \cite{ng1999policy} in the performance of RL algorithms. 
The long-term temporal credit assignment problem has been studied in single-agent RL by performing return decomposition via contribution analysis \cite{arjona2019rudder} and using sequence modeling \cite{liu2019sequence}. 
These methods do not directly scale well to MARL since size of the joint observation space grows exponentially with number of agents \cite{lowe2017multi}.

%In reinforcement learning (RL), agents seek to solve a sequential decision problem guided by reward signals at intermediate time-steps. 
%This is called the \emph{temporal credit assignment problem} \cite{sutton2018reinforcement}. 
%%The agent learns to complete a task by allocating a higher credit to better states and actions. 
%In many applications, a reward signal may be delayed. %, thereby making it difficult to ascertain the relative importance of states and actions at each time-step. 
%For example, in molecular design problems \cite{olivecrona2017molecular} and computer games such as \emph{Skiing} \cite{bellemare2013arcade} and \emph{Go} \cite{silver2016mastering}, a summarized score is revealed only at the end of an episode. 
%The \emph{episodic} reward implies the absence of feedback on the quality of actions at intermediate time steps, which makes it difficult %for RL algorithms 
%to learn good policies. 
%%Delayed rewards have been shown to introduce a large bias \cite{arjona2019rudder} or variance \cite{ng1999policy} in the performance of RL algorithms. 
%The long-term temporal credit assignment problem has been studied in single-agent RL environments by performing return decomposition via contribution analysis \cite{arjona2019rudder} and using sequence modeling \cite{liu2019sequence}. 
%These methods do not scale well to multiple agents since the size of the joint observation space grows exponentially with number of agents \cite{lowe2017multi}.
%
Besides scalability, addressing temporal credit assignment in MARL with episodic rewards presents two challenges. %that need to be addressed in tandem. 
It is critical to identify the relative importance of: 
%In order to solve the temporal credit assignment problem in multi-agent environments with episodic rewards, it is critical to identify the relative importance of: 
i) each agent's state at any single time-step (\emph{agent dimension});  
ii) states along the length of an episode (\emph{temporal dimension}). 
We introduce \textbf{Agent-Temporal Attention for Reward Redistribution in Episodic Multi-Agent Reinforcement Learning (AREL)} to address these challenges.% in tandem. 

\emph{AREL} uses attention mechanisms \cite{vaswani2017attention} to carry out \emph{multi-agent temporal credit assignment} by concatenating:  % by learning the importance of each agent's observations relative to that of other agents, and the importance of states at each time step relative to states at other time steps. 
%The attention mechanism in \emph{ATAC} consists of a concatenation of: 
i) a \emph{temporal attention module} to characterize the influence of actions on state transitions along trajectories, and; ii) an \emph{agent attention module}, to determine how any single agent is affected by other agents at each time-step.
The attention modules enable learning a redistribution of the episodic reward %decomposition of the episodic reward in order to redistribute it effectively 
along the length of the episode, resulting in a \emph{dense} reward signal. 
%Rather than concatenating agent observations, \emph{AREL} overcomes the challenge of scalability by analyzing observations of each agent using a temporal attention module that is shared among agents, and whose output is passed to an agent-attention module. 
%To overcome the challenge of scalability, instead of concatenating agents' observations, \emph{AREL} analyzes observations of each agent using a temporal attention module that is shared among agents, and whose output is passed to an agent-attention module.
To overcome the challenge of scalability, instead of working with the concatenation of (joint) agents' observations, \emph{AREL} analyzes observations of each agent using a temporal attention module that is shared among agents. 
The outcome of the temporal attention module is passed to an agent attention module that characterizes the relative contribution of each agent to the shared global reward. 
The output of the agent attention module is then used to learn the redistributed rewards. 
%The number of possible redistributions increases with episode length. 
%To encourage learning simpler redistributions, \emph{AREL} uses a variance-based regularization based on the principle of maximum entropy \cite{cover1999elements}.

When rewards are delayed or episodic, it is important to identify `critical' states that contribute to the reward. 
The authors of \cite{gangwani2020learning} recently demonstrated that rewards delayed by a long time-interval make it difficult for temporal-difference (TD) learning methods to carry out temporal credit assignment effectively. 
AREL overcomes this shortcoming by using attention mechanisms to effectively learn a redistribution of an episodic reward. 
%In comparison, by design, attention mechanisms in AREL are able to effectively learn a redistribution of an episodic reward. 
This is accomplished by identifying critical states through capturing long-term dependencies between states and the episodic reward.
%Since the number of possible decompositions will increase with episode length, we use a variance-based regularization stemming from the principle of maximum entropy \cite{cover1999elements} to encourage learning decompositions that are robust to over-fitting. 

Agents that have identical action and observation spaces are said to be \emph{homogeneous}. % in their contributions. 
%As an illustration, c
Consider a task where two homogeneous agents need to collaborate to open a door by locating two buttons and pressing them simultaneously. 
In this example, while locations of the two buttons (states) are important, the identities of the agent at each button are not. 
This property is termed \emph{permutation invariance}, and can be utilized to make the credit assignment process sample efficient \cite{gangwani2020learning, liu2019sequence}. 
Thus, a redistributed reward must identify whether an agent is in a `good' state, and should also be invariant to the identity of the agent in that state. 
\emph{AREL} enforces this property by designing the credit assignment network with permutation-invariant operations among homogeneous agents, and can be integrated with MARL algorithms to learn agent policies.

We evaluate \emph{AREL} on three tasks from the Particle World environment \cite{lowe2017multi}, and three combat scenarios in the StarCraft Multi-Agent Challenge \cite{samvelyan19smac}.  
In each case, 
agents receive a summarized reward only at the end of an episode. 
We compare \emph{AREL} with three state-of-the-art reward redistribution techniques, and observe that
%Our results indicate that \emph{AREL} accelerates learning of policies, and agents obtain higher rewards %compared to three state-of-the-art methods 
%in Particle World. 
%Agents equipped with \emph{AREL} achieve better win rates in StarCraft.
\emph{AREL} results in accelerated learning of policies and higher rewards in Particle World, and improved win rates in StarCraft. % compared to three state-of-the-art methods. 

%We note that our experiments require computational resources to train the attention modules of \emph{AREL} in addition to those needed to train deep RL algorithms. 
%Using these resources might result in higher energy consumption, especially as the number of agents grows. 
%This is a potential limitation of the methods studied in this paper. 
%However, we believe that \emph{AREL} partially addresses this concern by sharing certain modules among all agents in order to improve scalability.

\section{Related Work}
%
%An RL %reinforcement learning 
%agent aims to determine a sequence of actions to maximize a long-term (cumulative) reward. 
%%that leads to the maximization of a long-term reward. 
%This is termed the \emph{credit assignment problem} \cite{sutton2018reinforcement}. 
%However, sparse or delayed reward signals can affect the performance of RL algorithms \cite{liu2019sequence}. 
%This section provides an overview of related work on credit assignment in single and multi-agent RL. 
%Credit assignment techniques can be distinguished based on 
%the availability of information about the agent's environment. 
%whether knowledge of the agent's environment is available or not. 
%In multi-agent environments, these methods have primarily focused on ascertaining the contribution of individual agents to a shared global reward at each time-step. 
%
Several techniques have been proposed to address temporal credit assignment when prior knowledge of the problem domain is available. 
Potential-based reward shaping is one such method that provided theoretical guarantees in single \cite{ng1999policy} and multi-agent \cite{devlin2011theoretical, lu2011policy} RL, and %environments with sparse rewards. 
%This 
was %empirically 
shown to accelerate learning of policies in \cite{devlin2011empirical}. 
Credit assignment was also studied by incorporating human feedback through imitation learning \cite{kelly2019hg, ross2011reduction} and demonstrations \cite{brown2019extrapolating, huang2020ma}.%, taylor2011integrating, wang2017improving}. 
%These approaches, though, required prior knowledge of the problem domain in order to work effectively. 

When prior knowledge of the problem domain is not available, recent work has studied temporal credit assignment in single-agent RL with delayed rewards. %when rewards are delayed in single-agent RL. % in the single-agent case.  
An approach named RUDDER \cite{arjona2019rudder} used contribution analysis %presented in  
to decompose episodic rewards by computing the difference between predicted returns at successive time-steps. % using `contribution analysis' \cite{arjona2019rudder}. 
%RUDDER computes the difference between predicted returns at successive time-steps and uses this as the redistributed reward. % in this work.
Parallelly, the authors of \cite{liu2019sequence} proposed using natural language processing models for carrying out temporal credit assignment for episodic rewards. 
%Parallelly, in another work, temporal credit assignment for episodic rewards %in single agent RL 
%using natural language processing models was proposed by  \cite{liu2019sequence}. 
%The above methods could be applied in multi-agent scenarios by concatenating observations of individual agents.
The scalability of the above methods to MARL, though, can be a challenge due to the exponential growth in the size of the joint observation space \cite{lowe2017multi}.
%While the above methods perform well in the single-agent case, their scalability in MARL can be a challenge due to the exponential growth in the size of the joint observation space.
%The above methods, though, were not analyzed in scenarios with more than one agent.

In the multi-agent setting, recent work has studied performing multi-agent credit assignment at each time-step. 
Difference rewards were 
%In the multi-agent setting, difference rewards have been 
used to assess the contribution of an agent to a global reward in \cite{agogino2006quicr, devlin2014potential, foerster2018counterfactual} 
%This was accomplished 
by computing a counterfactual term that marginalized out actions of that agent while keeping actions of other agents fixed. 
%Difference and potential rewards were combined in \cite{devlin2014potential} to define a counterfactual term that was used to assign credit. 
%$Q-$learning was combined with difference rewards to carry out multi-agent credit assignment along the length of a trajectory in discrete-state discrete-action environments in \cite{agogino2006quicr}. 
%Difference rewards were used in \cite{foerster2018counterfactual} to define counter-factual multi-agent policy gradients in deep cooperative MARL. 
%
%The authors of \cite{sunehag2018value} proposed value decomposition networks that 
Value decomposition networks, proposed in \cite{sunehag2018value}, 
decomposed a centralized value into a sum of %individual 
agent values to assess each one's contributions. 
A monotonicity assumption on value functions 
was imposed in QMIX \cite{rashid2018qmix} to assign credit to individual agents. A 
%A 
generalized approach to decompose a joint value into individual agent values was presented in QTRAN \cite{son2019qtran}. %, and 
The Shapley Q-value was used in \cite{wang2020shapley} to distribute a global reward to identify each agent's contribution. %in contrast to the shared reward approach. 
The authors of \cite{yang2020q} %proposed a method %Q-value path decomposition 
%to decompose 
decomposed global Q-values along trajectory paths, %to perform credit assignment, 
while \cite{zhou2020learning} used an entropy-regularized method to encourage exploration to aid multi-agent credit assignment. 
%An adaptive entropy-regularized method presented in \cite{zhou2020learning} encouraged consistent exploration to aid multi-agent credit assignment. 
%These methods were evaluated on tasks where rewards were dense, and actions were discrete. 
The above techniques %were developed to identify contributions of individual agents to a dense global reward at each time-step, 
did not address long-term temporal credit assignment and hence will not be adequate for learning policies efficiently when rewards are delayed.
%Since these methods did not carry out a temporal redistribution of the reward, they may not be amenable for use in situations with episodic rewards.
%Moreover, these methods did not carry out a temporal redistribution of the reward, and hence cannot be used in situations with episodic rewards. 

Attention mechanisms have been used for multi-agent credit assignment in recent work.
The authors of \cite{mao2019modelling} used an attention mechanism with a CTDE-based algorithm to enable each agent effectively model policies of other agents (from its own perspective). 
%proposed ATT-MADDPG in which the centralized critic is enhanced by attention mechanisms such that the teammates' policies can be modelled in an effective way.
Hierarchical graph attention networks proposed in \cite{ryu2020multi} modeled hierarchical relationships among agents and used two attention networks to effectively represent individual and group level interactions.
%Attention mechanism was incorporated with centrally computed critics in \cite{iqbal2019actor} to select relevant information for each agent to pay attention to.
The authors of \cite{jiang2019graph, liu2020multi} combined attention networks with graph-based representations to indicate the presence and importance of interactions between any two agents. 
%modeled the relationship between agents by a complete graph and adopt an attention network to learn if an interaction exists between two agents.
%Similarly, \cite{jiang2019graph} also utilized graph models and proposed to apply multi-head attention as the convolution kernal and used graph convolution to extract the relation representation between nodes where the encoding of local observation of agent is the feature of node.
The above approaches used attention mechanisms primarily to identify relationships between agents at a specific time-step. 
They did not consider long-term temporal dependencies, and therefore may not be sufficient to learn policies effectively when rewards are delayed. 
%only considered applying attention mechanism to extract agents' relation at a specific time and did not considered agents' long-term dependencies and therefore will not be enough for efficient policy learning if rewards are delayed. 

A method for temporal redistribution of episodic rewards in single and multi-agent RL was recently presented in \cite{gangwani2020learning}. 
%Recently, \cite{gangwani2020learning} presented a method for temporal redistribution of episodic rewards in single and multi-agent RL. 
A `surrogate objective' was used to uniformly redistribute an episodic reward along %to states and actions in 
a trajectory. 
However, this work did not use information from sample trajectories to characterize the relative contributions of agents at intermediate time-steps along an episode.
%identify and distinguish between the quality of states of the agents. 

Our approach differs from the above-mentioned related work in that it uses attention mechanisms for multi-agent temporal credit assignment. % in environments with episodic rewards. 
\emph{AREL} overcomes the challenge of scalability by analyzing observations of each agent using temporal and agent attention modules, which 
%Temporal and agent attention modules 
respectively characterize the effect of actions on state transitions along a trajectory and how each agent is influenced by other agents at each time-step. 
%A temporal attention module characterizes the influence of actions on state transitions along a trajectory, and an agent attention module determines how each agent is influenced by other agents at each time-step. 
Together, these modules will enable an effective redistribution of an episodic reward. % along the length of the episode. 
\emph{AREL} does not require human intervention to guide agent behaviors, and can be integrated with MARL algorithms to learn decentralized agent policies in environments with episodic rewards. 

\section{Background}
\begin{figure*}[!htp]
	\centering
	\includegraphics[width=5.75 in]{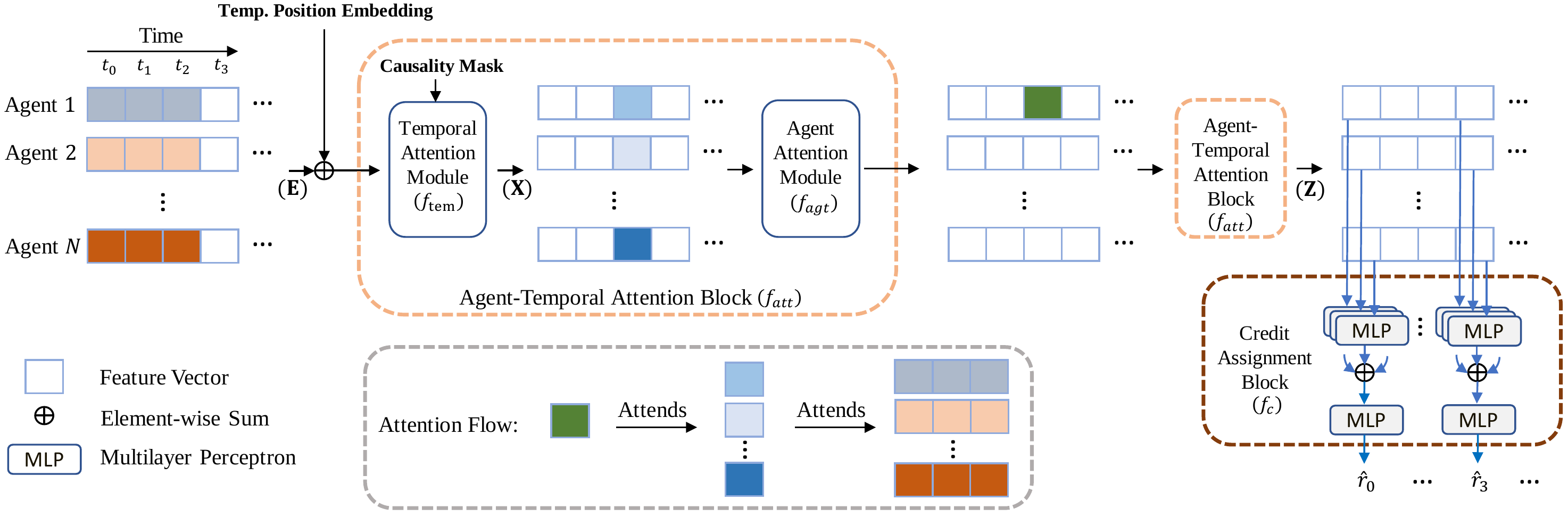} 
	\caption{Schematic of AREL. 
	The agent-temporal attention block concatenates temporal and agent attention modules, and summarizes input feature (e.g. observation) vectors. This is accomplished by establishing relationships between (\emph{attending to}) information along time and among agents. 
	The attention flow indicates that an output feature vector of the agent-temporal attention block for an agent at a time $t$ (green square) can attend to input features from all other agents before and including time $t$. Multiple agent-temporal attention blocks can be concatenated to each other to improve expressivity. 
	The output of the last such block is fed to the credit assignment block, which applies shared multi-layer perceptrons to each attention feature. The output is the redistributed reward, which is integrated with MARL algorithms (e.g. MADDPG, QMIX) to learn agent policies. 
%	It has two main components: 1) Agent-Temporal attention block, which summarize input feature vectors (e.g., observation vectors) by attending information along temporal dimension and agent dimension. 
%	Attention flow shows that each output feature vector at time $t$ of the agent-temporal attention block can indirectly attend input feature from all agents up to time $t$.
%	; and 2) credit assignment block, which applies a single shared multilayer perceptron on each attention feature vector and outputs predicted reward at each time step. 
}\label{AttnNetwSchematic}
\end{figure*}
%
%This sections provides some of the needed formalisms for our framework.
%\subsection{Decentralized Partially Observable MDPs and Episodic MARL}

A fully cooperative multi-agent task can be specified as a decentralized partially observable Markov decision process (Dec-POMDP) \cite{oliehoek2016concise}. 
%Adopting notation from \cite{oliehoek2016concise}, a 
A Dec-POMDP is a tuple $G = (S, A, P, r, Z, O, n, \gamma)$, where  
$s \in S$ describes the environment state. 
Each agent $i \in \{1,2,\dots,n\}$ receives an observation $o^i \in O^i$ according to an observation function $Z(s,i): S \times \mathbb{N} \rightarrow O$. 
At each time step, agent $i$ chooses action $a^i \in A^i$ according to its policy $\pi^i : O^i \times A^i \rightarrow [0,1]$. 
$A^1 \times \dots \times A^n := A$ forms the joint action space, and the environment transitions to the next state according to the function $P: S \times A^1 \times \dots \times A^n \rightarrow S$. 
All agents share a global reward $r: S \times A \rightarrow \mathbb{R}$. 
The goal of the agents is to determine their individual policies to maximize the %total expected 
\emph{return}, %given by 
$J:= \mathbb{E}_{s \sim P, a^1 \sim \pi^1, \dots, a^n \sim \pi^n} [\sum_{t=0}^T \gamma^t r_t(s, a^1,\dots,a^n)]$, where $\gamma$ is a discount factor, and $T$ is the length of the horizon. 
Let $a_t:=(a^1_t,\dots,a^n_t)$ %, $\pi:=(\pi^1,\dots,\pi^n)$, 
and $R_t:=\sum_{l=0}^{T-t} \gamma^l r_{t+1}$. 
A trajectory of length $T$ is an alternating sequence of observations and actions, $\tau:=(o_0, a_0,o_1,a_1,\dots,o_T)$. 

In a typical MARL task, agents receive reward $r(s,a)$ immediately following execution of action $a$ %=(a^1,\dots,a^n)$ 
at state $s$. 
The expected return can then be determined by accumulating rewards at each time step. % in the usual way. 
%policy gradients estimated following Eqn. (\ref{PolGradMulti}) can have large variance. 
%This is because most policy gradient methods are tailor-made for the setting where rewards are \emph{dense}. 
%These dense rewards will be able to provide the necessary information for policy improvement and value-function estimation at each time-step. 
%
In episodic RL, a reward is revealed %to the agents 
only at the end of an episode at time $T$, and agents do not receive a reward at intermediate time-steps. 
As a consequence, the expected return for all $t<T$ will be the same (when $\gamma = 1)$. Therefore, the quality of information available for learning policies will be poor at all intermediate time steps. 
%However, when the reward is only observed at the end of an episode, 
Moreover, delayed rewards have been shown to introduce a large bias \cite{arjona2019rudder} or variance \cite{ng1999policy} in the performance of RL algorithms.
%
%\subsection{Centralized Training with Decentralized Execution}

%Agents will need to learn decentralized policies when dimensions of state and action spaces are large, or when agents are unable to communicate with each other. %, they will need to learn decentralized policies. % using only their individual observations. 
The CTDE paradigm \cite{foerster2018counterfactual, lowe2017multi} can be adopted to learn decentralized policies effectively when dimensions of state and action spaces are large. 
%Decentralized policies can be learned in a centralized manner by adopting the CTDE paradigm \cite{foerster2018counterfactual, lowe2017multi}. 
During training, an agent can make use of information about other agents’ states and actions to aid its own learning. 
At test-time, decentralized policies are executed. 
This paradigm has been used to successfully complete tasks in complex MARL environments \cite{gupta2017cooperative, iqbal2019actor, rashid2018qmix}.%, sunehag2018value, yang2020q}. 

\section{Approach}
%\begin{figure*}[!htp]
%	\centering
%	\includegraphics[width=4.95 in]{figures/attention_schematic.pdf} 
%	\caption{Schematic of AREL. 
%	The agent-temporal attention block concatenates temporal and agent attention modules, and summarizes input feature (e.g. observation) vectors. This is accomplished by establishing relationships between (\emph{attending to}) information along time and among agents. 
%	The attention flow indicates that an output feature vector of the agent-temporal attention block for an agent at a time $t$ (green square) can attend to input features from all other agents before and including time $t$. Multiple agent-temporal attention blocks can be concatenated to each other to improve expressivity. 
%	The output of the last such block is fed to the credit assignment block, which applies shared multi-layer perceptrons to each attention feature. The output is the redistributed reward, which is integrated with MARL algorithms (e.g. MADDPG, QMIX) to learn agent policies. 
%%	It has two main components: 1) Agent-Temporal attention block, which summarize input feature vectors (e.g., observation vectors) by attending information along temporal dimension and agent dimension. 
%%	Attention flow shows that each output feature vector at time $t$ of the agent-temporal attention block can indirectly attend input feature from all agents up to time $t$.
%%	; and 2) credit assignment block, which applies a single shared multilayer perceptron on each attention feature vector and outputs predicted reward at each time step. 
%}\label{AttnNetwSchematic}
%\end{figure*}

This paper considers MARL tasks where agents share the same global reward, which 
%However, this reward 
is received only at the end of an episode. 
The objective is to redistribute this episodic reward for effective multi-agent temporal credit assignment. 
To accomplish this goal, it is critical to identify the relative importance of: i) individual agents’ observations at each time-step, and; ii) observations  along the length of a trajectory. 
%The objective is to learn a redistribution of the episodic reward along the length of the episode, and to identify the contribution of each agent to the global reward at each time-step, and use the redistributed rewards with MARL algorithms to learn agent policies.
%Moreover, the learned redistributed rewards must be suitable for integration with any MARL algorithm to learn agent policies. 
%When there are multiple agents, individual agents may contribute differently to the shared global reward. 
%
%Our goal is to devise a method under the following conditions: 
%(1) a model of the environment is not available, and rewards are provided only at the end of a training episode; 
%(2) %we want to 
%learn a decomposition of the reward along the length of the episode; % in order to compute policy gradients; 
%(3) %we are also interested in determining 
%determine the contribution of each agent to the global reward at each time step; 
%(4) the learned decomposed rewards must enable integration with any MARL algorithm to learn agent policies. 
We introduce \emph{AREL} to address the above challenges. 
%We introduce \emph{AREL} to address the challenge of multi-agent temporal credit assignment in tasks with episodic rewards. 
\emph{AREL} uses an agent-temporal attention block to infer relationships among states at different times, and among agents. 
A schematic is shown in Fig. \ref{AttnNetwSchematic}, and we describe its key components and overall workflow in the remainder of this section. 
%We describe the details of \emph{ATAC} in this section. 
%
\subsection{Agent-Temporal Attention}

In order to redistribute an episodic reward %in a multi-agent setup 
in a meaningful way, we need to be able to extract useful information from trajectories. Each trajectory contains a sequence of observations involving all agents. 
%Consider an episode of length $T$. 
At each time-step of an episode of length $T$, a feature of dimension $D$ corresponds to the embedding of a single observation. 
When there are $N$ agents, %we denote a trajectory from an episode 
a trajectory is denoted by 
$\mathbf{E} \in \mathbb{R}^{T \times N \times D}$. 
The objective is to learn a mapping $f_{arel}(\mathbf{E}): \mathbb{R}^{T \times N \times D} \rightarrow \mathbb{R}^{T}$ to assign credit to the agents at each time-step. 
%In the following, we describe how to construct and learn this mapping. 
%
The information in a trajectory $\mathbf{E}$ comprises two parts: 
(1) \emph{temporal information} between (embeddings of) observations at different time steps: this provides insight into the influence of actions on transitions between states; 
(2) \emph{structural information}: this provides insight into how any single agent is affected by other agents.
%\begin{enumerate}
%\item temporal information between (embeddings of) observations at different time steps: this can provide insight into the influence of actions on transitions of the state; 
%\item structural information: this can provide insight into how any single agent is affected by other agents. 
%\end{enumerate}

These two parts are coupled, and hence studied together. 
The process of learning these relationships is termed \emph{attention}. 
We propose an \emph{agent-temporal attention structure}, inspired by the Transformer \cite{vaswani2017attention}. % to make full use of these two types of information. 
This structure selectively pays attention to different types of information- either from individual agents, or at different time-steps along a trajectory. 
This is accomplished by associating a %higher or lower 
weight to an observation %in the trajectory 
based on its relative importance to other observations along the trajectory. 
The agent-temporal attention structure is formed by concatenating one agent attention module with one temporal attention module. 
The temporal attention modules determine how entries of $\mathbf{E}$ at different time-steps are related (along the `first' dimension of $\mathbf{E}$). 
The agent attention module determines how agents influence one another (along the `second' dimension of $\mathbf{E}$).
%The agent attention module uses the input embeddings at a single time-step to calculate how agents influence one another. 
%For each agent, the temporal attention module determines how the input at different time-steps are `related' (\textbf{use of word INPUT needs to be discussed}). 
%
\subsubsection{\textbf{Temporal-Attention Module}}
The input is a trajectory $\mathbf{E} \in \mathbb{R}^{T \times N \times D}$. 
To calculate the temporal attention feature, we obtain the transpose of $\mathbf{E}$ as $\bar{\mathbf{E}} \in \mathbb{R}^{N \times T \times D}$.
Adopting notation from \cite{vaswani2017attention}, each row $\mathbf{e}_i \in \mathbb{R}^{T \times D}$ of $\bar{\mathbf{E}} $ is transformed to a \emph{query} $Q^{tem}_i:=\mathbf{e}_iW^{tem}_q$, \emph{key} $K^{tem}_i:=\mathbf{e}_iW^{tem}_k$, and \emph{value} $V^{tem}_i: =\mathbf{e}_iW^{tem}_v$. 
%Here 
$W^{tem}_q, W^{tem}_k, W^{tem}_v \in \mathbb{R}^{D \times D}$ are learnable parameters, and $i \in \{0,\dots, N-1\}$.
The $t^{th}$ row $x_{i,t}$ of the temporal attention feature $\mathbf{x}_i \in \mathbb{R}^{T \times D}$ is a weighted sum %of values, % at each time step: 
$x_{i,t} := \alpha _{i,t}^T V^{tem}_i$. 
%\begin{align}
%	x_{i,t} = \alpha _{i,t}^T V^{tem}_i.
%\end{align}
%
The attention weight vector $\alpha_{i,t} \in \mathbb{R}^{T\times 1}$ is %computed as 
a normalization (\emph{softmax}) of the inner-product between the $t^{th}$ row of $Q^{tem}_i$, $q^{tem}_{i,t}$, and the key matrix $K^{tem}_i$: 
\begin{align}
	\alpha_{i,t} ^T = \text{softmax} (\frac{q^{tem}_{i,t} {K^{tem}_i}^T}{\sqrt{D}}\odot m^T_t), \label{TempAttnVect}
\end{align}

\noindent
where $\odot$ is an element-wise product, and $m_t$ is a \emph{mask} with its first $t$ entries equal to $1$, and remaining entries $0$. %set to $0$. 
%As a result, at any time $t$, information beyond this time will not be used to assign credit. 
%This ensures causality.  
The mask preserves causality by ensuring that at any time $t$, information beyond $t$ will not be used to assign credit. 
%The mask is needed to ensure causality, so that when assigning credit at a time $t$, we only utilize information available up to this time. 
%
%A temporal position embedding method \cite{devlin2018bert} is used to maintain information about the relative position of a state in an episode. 
A temporal positional embedding \cite{devlin2018bert} maintains information about relative positions of states in an episode. 
Position embeddings are learnable vectors associated to each temporal position of %the length of 
a trajectory. 
The sum of position and trajectory embeddings forms the input to the temporal attention module. %, as shown in Fig. \ref{AttnNetwSchematic}. 
The output of this module is $\mathbf{X} \in \mathbb{R}^{N\times T \times D}$, got by stacking $\mathbf{x}_i, i\in {0, \dots, N-1}$. 
The temporal attention process can be described by a function $f_{tem}(\mathbf{E})\rightarrow \mathbf{X}$. 

The output of the temporal attention module results in an assessment of %(the embedding of) 
each agent's observation at any single time-step relative to observations at other time-steps of an episode. 
To obtain further insight into how an agent's observation is related to other agents' observations, an agent-attention module is concatenated to the temporal-attention module.   
\subsubsection{\textbf{Agent-Attention Module}}
%Similar to the temporal-attention module, the 
The agent-attention module uses the transpose of $\mathbf{X}$, denoted $\bar{\mathbf{X}} \in \mathbb{R}^{T \times N \times D}$. 
Each row of $\bar{\mathbf{X}}$, $\mathbf{x}_t \in \mathbb{R}^{N \times D}$ is transformed to a \emph{query} $Q^{agt}_t=\mathbf{x}_tW^{agt}_q$, \emph{key} $K^{agt}_t=\mathbf{x}_tW^{agt}_k$, and \emph{value} $V^{agt}_t=\mathbf{x}_tW^{agt}_v$. 
Here, $W^{agt}_q, W^{agt}_k, W^{agt}_v \in \mathbb{R}^{D \times D}$ are learnable parameters. 
The $i^{th}$ row $z_{t,i}$ of the agent attention feature $\mathbf{z}_t \in \mathbb{R}^{N \times D}$ is a weighted sum, % of values:
$z_{t,i} = \beta _{t,i}^T V^{agt}_t$.
%\begin{align}
%	z_{t,i} = \beta _{t,i}^T V^{agt}_t.
%\end{align}
Maintaining causality is not necessary when computing the agent attention weight vector $\beta _{t,i} \in \mathbb{R}^{N\times 1}$. These weights are determined similar to the temporal attention weight vector in Eqn. (\ref{TempAttnVect}), except without a masking operation. Therefore,
%
%It should be noted that when computing attention weight vector $\beta_i \in \mathbb{R}^{N\times 1}$, causality is not required and therefore:
\begin{align}
	\beta _{t,i}^T = \text{softmax} (\frac{ q^{agt} _{t,i} {K^{agt}_t }^T}{\sqrt{D}}). \label{AgAttnVect}
\end{align}
The agent attention procedure can be described by a function $f_{agt}(\mathbf{X})\rightarrow \mathbf{Z}$, where $\mathbf{Z} \in \mathbb{R}^{T \times N \times D}$.
\subsubsection{\textbf{Concatenating Attention Modules}}
%We concatenate the temporal and agent attention modules in order to fuse information from the individual modules. 
The output of the temporal attention module is an entity $\mathbf{X}$ that attends to information at time-steps along the length of an episode for each agent. 
Passing $\mathbf{X}$ through the agent attention module results in an output $\mathbf{Z}$ that is attended to by embeddings at all time-steps and from all agents. 
The data-flow of this process can be written as a composition of functions: $f_{att}:=f_{agt}\circ f_{tem}$. 
%
%The mapping $f_{atac}$ is obtained by repeated composition of the temporal and agent attention modules. 
%That is, $f_{atac}:= f_{att} \circ \dots \circ f_{att}$. 
The temporal and agent attention modules can be repeatedly composed to improve expressivity. 
The position embedding is required only at the first temporal attention module when more than one is used. 
\subsection{Credit Assignment}

The output of the attention modules is used to assign credit at each time-step along the length of the episode. 
Let $f_{arel}:= f_c \circ (f_{att} \circ \dots \circ f_{att})$, where $f_c: \mathbb{R}^{T \times N \times D} \rightarrow \mathbb{R}^T$. 
In order to carry out temporal credit assignment effectively, we leverage a property of permutation invariance. 
%We discuss the credit assignment process in detail in the following. 
%
%After obtaining the attention feature $\mathbf{Z} \in \mathbb{R}^{T \times N \times D}$, we need to assign a credit to each time step.
%Let $f_c$ denote the function mapping $\mathbf{Z}$ to $\mathbf{r} \in \mathbb{R}^T$.
%Then the overall credit assignment function $f=f_c \circ f_{agt}\circ f_{tem}$.
%In the following, we first introduce permutation invariance in multi-agent reinforcement learning.
%
\subsubsection{\textbf{Permutation Invariance}}
Agents sharing the same action and observation spaces are termed \emph{homogeneous}. 
When homogeneous agents $ag1$ and $ag2$ cooperate to achieve a goal, 
%the identities of agents that visit a state should not matter. 
%For example, 
the reward when $ag1$ observes $ob1$ and $ag2$ observes $ob2$  
%the reward when homogeneous agents $ag1$ and $ag2$ have observations $ob1$ and $ob2$ 
should be the same as the case when $ag1$ observes $ob2$ and $ag2$ observes $ob1$. 
This property is called \emph{permutation invariance}, and 
%Permutation invariance 
has been shown to improve the sample-efficiency of multi-agent credit assignment as the number of agents increase \cite{liu2020pic, gangwani2020learning}.  
When this property is satisfied, %one can expect that 
the output of the function $f_{arel}$ should be invariant to the order of the agents' observations. % of the agents. 
Formally, if the set of all permutations along the agent dimension (second dimension of $\mathbf{E}$) is denoted $\mathcal{H}$, then $f_{arel}(h_1(\mathbf{E})) = f_{arel}(h_2(\mathbf{E}))$ must be true for all $h_1, h_2 \in \mathcal{H}$.
%$h(f_{arel}(\mathbf{E}))$ should be equal to $f_{arel}(h(\mathbf{E}))$, where $h \in \mathcal{H}$. 
%We define the set of all possible shuffling function along agent dimension (the second dimension of $\mathbf{Z}$) as $\mathcal{H}$, then $f(\mathbf{Z})$ should be equal to $f(h(\mathbf{Z})), h\in\mathcal{H}$. \textcolor{red}{give a formal definition of h?}

The function $f_{att}$ is permutation invariant along the agent dimension by design. 
%Therefore, in order to enable permutation invariant credit assignment, one sufficient condition is that the function $f_c$ is permutation invariant. 
A sufficient condition for $f_{arel}$ to be permutation invariant %credit assignment i
is that the function $f_c$ be permutation invariant. 
%This can be accomplished in a number of ways. 
%There are many ways to achieve this.
%For example, we can calculate the mean along the second two dimensions of $\mathbf{Z}$. 
%Or more sophistic method can be adopted, like graph convolution networks \cite{liu2020pic} on $\mathbf{Z}$ along the agent dimension.
%In this work, 
%The authors of \cite{zaheer2017deep} established necessary and sufficient conditions for permutation invariance in neural networks. 
%In our paper,  
To ensure this, 
we apply a multi-layer perceptron (MLP), %shared by each agent dimension, 
add the MLP outputs element-wise, and pass it through another MLP. 
%\textcolor{red}{The outputs are summed up element-wise and then passed through another MLP.}
%This is denoted by a function $g:\mathbb{R}^{D}\rightarrow \mathbb{R}$, %then simply sum up the output of MLP.
When functions $g_1$ and $g_2$ associated to the MLPs are continuous and shared among agents, the evaluation at time $t$ is the \emph{predicted reward} $\hat{r}_t := g_2\big(\sum_{i=0}^{N-1}g_1(z_{t,i})\big)$. 
It was shown in \cite{zaheer2017deep} that any permutation invariant function can be represented by the above equation.
%\textcolor{red}{\begin{align}
%	\hat{r}_t = g_2\big(\sum_{i=0}^{N-1}g_1(z_{t,i})\big).
%	\end{align}}
%
\begin{rem}
\emph{AREL} can be adapted to the \emph{heterogeneous} case when cooperative agents are divided into homogeneous groups. %, where agents within a group share the same observation space.
Similar to a position embedding, we can apply an \emph{agent-group embedding} such that agents within a group share an agent-group embedding. 
This will maintain permutation invariance of observations within a group, while enabling identification of agents from different groups. 
\emph{AREL} will also work in the case when the multi-agent system is fully heterogeneous. 
This is equivalent to a scenario when there is only one agent in each homogeneous group. 
Therefore, \emph{AREL} can handle agent types ranging from fully homogeneous to fully heterogeneous. 
%In this way, observation permutation inside an agent group will not change the credit assignment output, and in the meanwhile agents from different groups can be identified. 
%By utilizing agent-group embedding, partial permutation invariance can be utilized.
\end{rem}
\subsubsection{\textbf{Credit Assignment Learning}}
Given a reward $R_T$ at the end of an episode of length $T$, the goal is to learn a temporal decomposition of $R_T$ %in order 
to assess contributions of agents at each time-step along the trajectory. 
Specifically, we want to learn %a sequence 
$\{\hat{r}_t\}_{t=0}^{T}$ satisfying $ \sum_{t=0}^{T}\hat{r}_t = R_T$. 
Since $f_{arel}^{\theta}( \mathbf{E} )$ is a vector in $\mathbb{R}^T$, its $t^{th}$ entry is denoted $f_{arel}^{\theta}( \mathbf{E}_t )$ ($=\hat{r}_t$). 
The sequence $\{\hat{r}_t\}_{t=0}^{T}$ 
%This sequence of redistributed rewards 
is learned by minimizing a \emph{regression loss}, 
%$l_r(\theta) := \mathbb{E}_{\mathbf{E,R_T}} \big[\big( \sum_{t}( f_{atac}^{\theta}( \mathbf{E}_t ) ) - R_T \big)^2\big]$
$l_r(\theta) := \mathbb{E}_{\mathbf{E,R_T}} \big[\frac{1}{T}\big( \sum_{t}( f_{arel}^{\theta}( \mathbf{E}_t ) ) - R_T \big)^2\big]$, where $\theta$ are neural network parameters. % associated to the neural networks.% that make up the attention modules. 

%\textcolor{blue}{
The redistributed rewards will be provided as an input to a MARL algorithm. 
We want to discourage $\{\hat{r}_t\}_{t=0}^{T}$ from being sparse, since sparse rewards may impede learning policies \cite{devlin2011theoretical}. 
%We also want to incorporate the possibility that not all intermediate states will contribute equally to the episodic reward. 
We observe that more than one combination of $\{\hat{r}_t\}_{t=0}^{T}$ can minimize $l_r (\theta)$. 
We add a regularization loss $l_{reg}(\theta)$ to select among solutions that minimize $l_r(\theta)$. 
Specifically, we aim to choose a solution that also minimizes the variance $l_v(\theta)$ of the redistributed rewards, and set $l_{reg}(\theta) = l_v(\theta)$ (we examine other choices of $l_{reg}(\theta)$ in the \emph{Appendix}). 
Using $l_v(\theta)$ as a regularization term in the loss function leads to learning less sparsely redistributed rewards. 
%}
With $\omega \in \mathbb R_{\ge 0}$ denoting a hyperparameter, %to balance the relative significance of these two terms, 
the combined loss function used to learn $f_{arel}^{\theta}$ is:% given by: 
\begin{align}
	loss_{total}(\theta) = l_r(\theta) + \omega l_v (\theta), \label{LossTotal}
\end{align}
%
%\textcolor{blue}{
where $l_v(\theta):= \mathbb{E}_{\mathbf{E}}\big[\frac{1}{T}\sum_{t}(f_{arel}^{\theta}(\mathbf{E}_t)-\bar{f}^{\theta}(\mathbf{E}))^2 \big]$, and $\bar{f}^{\theta}(\mathbf{E}):=$\\ $ \big(\sum_t f_{arel}^{\theta}(\mathbf{E}_t)\big)/T$. 
The form of $loss_{total}(\theta)$ in Eqn. (\ref{LossTotal}) incorporates the possibility that not all intermediate states will contribute equally to $R_T$, 
%the episodic reward, 
and additionally results in learning less sparsely redistributed rewards. %, while additionally incorporating possibility that not all intermediate states would contribute equally to the episodic reward. 
%}
%
%\textcolor{blue}{
Note that $\arg \min_{\theta}[loss_{total}(\theta)]$ %minimizing Eqn. (\ref{LossTotal}) 
will not typically yield $l_r(\theta) =l_v (\theta)= 0$ %simultaneously 
(which corresponds to a uniform redistribution of rewards). 
Since some states may be common to different episodes, the redistributed reward $\hat{r}_t$ at each time-step 
cannot be arbitrarily chosen. 
%Since some states may be common to different episodes, a uniform reward redistribution may not be valid. 
For e.g., consider $N$ different episodes $\{E_i\}_{i=1}^N$, each of length $L$, with distinct cumulative episodic rewards $R_i$. 
If an intermediate state $s$ is common to episodes $E_j$ and $E_k$, %(as it is typical in experiments for more than one episode to share a common state), 
under a uniform redistribution, distinct rewards $R_j/L$ and $R_k/L$ will be assigned to $s$, which is not possible. %the same state. 
Thus, $l_r(\theta)=0$ and $l_v(\theta)=0$ will not both be true, implying that %in practice, 
a uniform redistribution may not be viable.
%}
%
\subsection{Algorithm}
\begin{algorithm} [!ht]
	\caption{\emph{AREL}}%: Agent-Temporal attention for multi-agent credit assignment}
	\begin{algorithmic}[1] \label{Algo1}
		\renewcommand{\algorithmicrequire}{\textbf{Input:}}
		\REQUIRE Number of agents $N$. Reward weight $\alpha \in [0,1]$. \\
		Initialize parameters $\theta, \phi$ for credit assignment and RL  (policy/ critic) modules respectively. \\
		%Initialize parameters $\phi$ for RL module (policy/ critic). \\ 
		Experience buffer for storing trajectories $B_e\leftarrow \emptyset$.  Prediction function update frequency $M$.
		\FOR{Episode $k=0,\dots$}
		\STATE Reset episode return $R_T\leftarrow0$; Reset trajectory for current episode $\tau\leftarrow\emptyset$
%		\STATE Reset trajectory for current episode $\tau\leftarrow\emptyset$
		\FOR{step $t=0,\dots,T-1$}
		\STATE Sample action $a^i_t\sim \pi^i_{\phi}(o^i_t)$, for $i=0,\dots,N-1$
		\STATE Take action $a_t$; Observe $o_{t+1}=(o^0_{t+1},\dots,o^{N-1}_{t+1})$
		\STATE Store transition $\tau\leftarrow\tau \cup \{(o_t, a_t,o_{t+1})\}$
		\ENDFOR
		\STATE Update episode reward $R_T$; Store trajectory $B_e \leftarrow B_e \cup {(\tau, R_T)}$
%		\STATE Store trajectory $B_e \leftarrow B_e \cup {(\tau, R_T)}$
%		\IF{using on-policy RL algorithm}
%		\STATE Predict reward $\hat{r}_t$ using $f_{\theta}(\tau)$
%		\STATE Update $\phi$ using $\{(o_t,a_t,o_{o_{t+1}})\} \in \tau$ and corresponding $\hat{r}_t$
%		\ENDIF
%		\IF{using off-policy RL algorithm}
		\STATE Sample a batch of trajectories $B_{\tau}\sim B_e$
		\STATE Predict reward $\hat{r}_t$ using $f^{\theta}_{arel}(\tau)$ for each $\tau \in B_{\tau}$
		\STATE Update $\phi$ using $\{(o_t,a_t,o_{o_{t+1}})\} \in \tau$ and weighted reward $\alpha \hat{r}_t + (1-\alpha)\mathbf{1}_{t=T}R_T$
%	    \ENDIF
		\IF{$k$ mod $M$ is 0}
		\FOR{each gradient update}
		\STATE Sample a batch from $B_e$, and compute estimate of total loss, $\hat{loss}_{total}(\theta)$
		%\STATE Update $\theta$ by minimizing the total loss equation  (\ref{LossTotal})
		\STATE $\theta \leftarrow \theta - \nabla_{\theta} \hat{loss}_{total}(\theta)$
		\ENDFOR
		\ENDIF
		\ENDFOR
	\end{algorithmic}
\end{algorithm}
\emph{AREL} is summarized in Algorithm \ref{Algo1}. 
Parameters $\phi$ of RL modules %(e.g., policy, critic parameters) 
and $\theta$ of the credit assignment function are randomly initialized.
Observations and actions of agents are collected in each episode (Lines 2-6). 
Trajectories and episodic rewards are stored in an experience buffer $B_e$ (Line 8). 
The reward $\hat{r}_t$ at each time step for every trajectory in a batch $B_{\tau}$ (sampled from $B_e$) is predicted (Lines 9-10). 
The predicted $\hat{r}_t$ changes as $\theta$ is updated, but the episode reward $R_T$ remains the same. 
A weighted sum $\alpha \hat{r}_t + (1-\alpha)\mathbf{1}_{t=T}R_T$ ($\mathbf{1}_{t=T}$ is an indicator function) is used to update %the parameter 
$\phi$ in a stable manner by using a MARL algorithm (Line 11). 
%, and 
%The update of $\phi$ 
%can be carried out using any MARL algorithm. 
The credit assignment function $f_{arel}^{\theta}$ is updated when $M$ new trajectories are available (Lines 12-17).

\subsection{Analysis}\label{Sec:Analysis}
\begin{figure*}
	\begin{subfigure}{0.31\textwidth}
		\includegraphics[width=\linewidth]{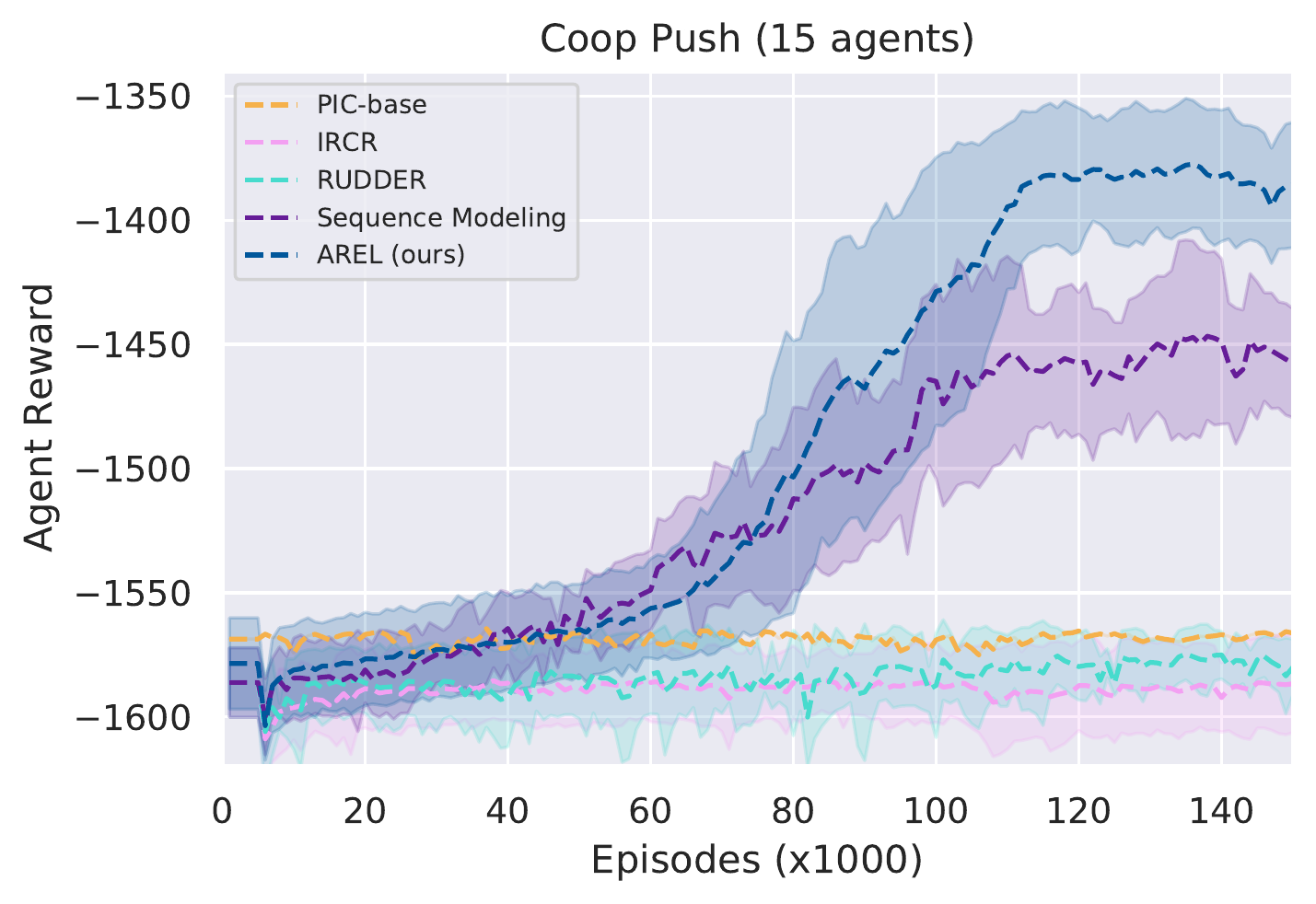}
		\caption{Cooperative Push} \label{fig:2a}
	\end{subfigure}%
	\hspace*{\fill}   % maximize separation between the subfigures
	\begin{subfigure}{0.31\textwidth}
		\includegraphics[width=\linewidth]{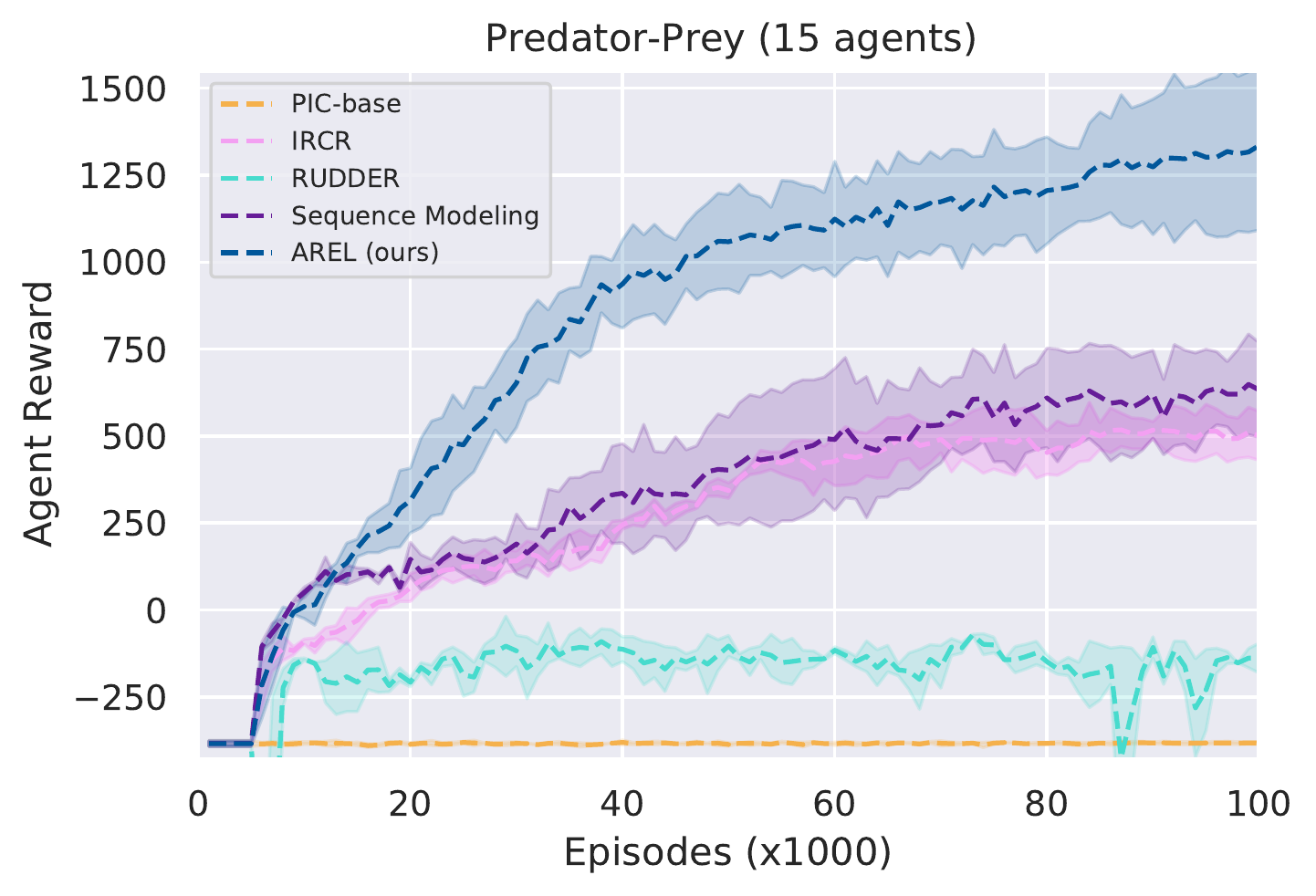}
		\caption{Predator and Prey} \label{fig:2b}
	\end{subfigure}%
	\hspace*{\fill}   % maximizeseparation between the subfigures
	\begin{subfigure}{0.31\textwidth}
		\includegraphics[width=\linewidth]{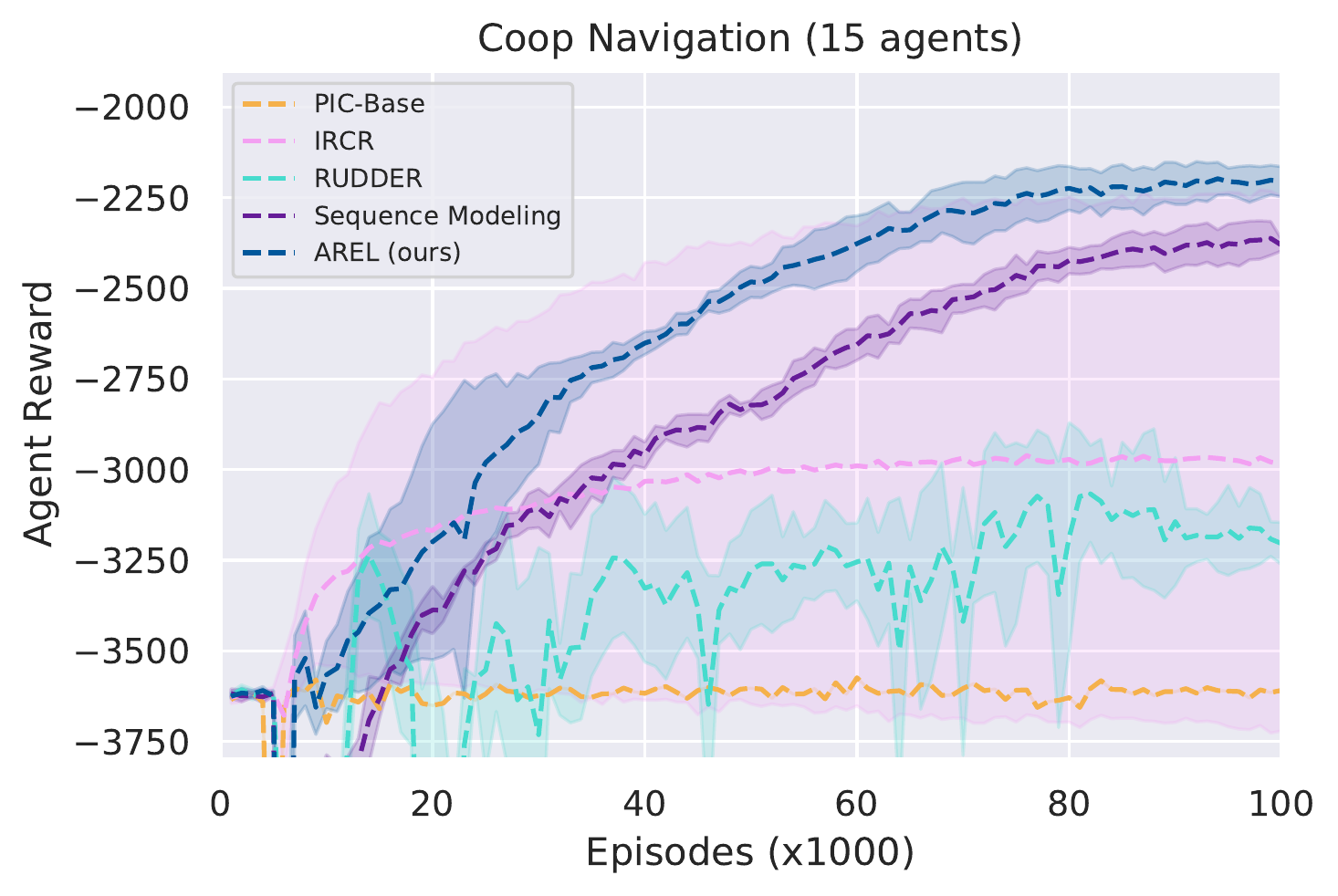}
		\caption{Cooperative Navigation} \label{fig:2c}
	\end{subfigure}
	\caption{Average agent rewards and standard deviation for tasks in Particle World with episodic rewards and $N=15$. \emph{AREL} (dark blue) results in the highest average rewards in all tasks. %In comparison, the PIC baseline (orange) and RUDDER (green) are unable to learn policies to complete the tasks. Using a surrogate objective in Trajectory Smoothing (pink) results in comparable rewards in the Cooperative Navigation task, but the reward curve has a high variance. Sequence modeling (purple) obtains a higher reward than the former three methods, but does not explicitly model agent-attention, which results in a lower average reward than \emph{AREL}.
} \label{FigGraphsN15}
\end{figure*}
In order to establish a connection between redistributed rewards from \emph{Line 10} of Algorithm \ref{Algo1} and the episodic reward, we define \emph{return equivalence} of \emph{decentralized partially observable sequence-Markov decision processes (Dec-POSDP)}. 
This generalizes the notion of return equivalence introduced in \cite{arjona2019rudder} in the fully observable setting for the single agent case. 
%
%In this section, we define a notion of \emph{return equivalence} \cite{arjona2019rudder} on \emph{decentralized partially observable sequence-Markov decision process (Dec-POSDP)}. 
%
A Dec-POSDP is a decision process with Markov transition probabilities but has a reward distribution that need not be %required to be 
Markov.
We present a result that establishes that return equivalent Dec-POSDPs will have the same optimal policies. %, and 
%We also 
%provide a bound on the loss in Eqn. (\ref{LossTotal}) when estimates of the predicted rewards are unbiased. 
%The proof and detailed analyses are provided in \emph{Appendix A}.
%
%In this section, we introduce \emph{sequence decision processes}, an entity that has Markovian transition probabilities, but does not require rewards to be Markov. 
%Sequence decision processes with different reward structures are said to be \emph{return-equivalent} if the expected returns for any trajectory on these processes are equal. 
%We present a result that establishes that return equivalent sequence decision processes will have the same optimal policies. 
%We also provide an interpretation of the overall loss in Eqn. (\ref{LossTotal}) in terms of a bias-variance trade-off in a mean square estimator. 
%
%\subsection{Return-equivalent Stochastic Game}
%\subsection{Return-equivalent Partially observable Sequence-Markov Decision Process}
%Similar to the definition of sequence-Markov decision process (SDP) in \cite{arjona2019rudder}, a decentralized partially observable sequence-Markov decision process (Dec-POSDP) is a decision process that has Markov transition probabilities but a reward probability that is not required to be Markov.
%We can define return-equivalent Dec-POSDPs which have the same optimal policies.
\begin{df}\label{DecPOSDPDefn}
%	Two partially observable sequence-Markov decision processes $\tilde{\mathcal{P}}$ and $\mathcal{P}$ are return-equivalent if they differ only in their reward but for each joint policy $\pi$ have the same expected return $\tilde{V}^{\pi}(s_0)=V^{\pi}(s_0)$ at $t=0$.
%	$\tilde{\mathcal{P}}$ and $\mathcal{P}$ are strictly return-equivalent if they have the same expected return for any trajectory:
Dec-POSDPs $~\tilde{\mathcal{P}}$ and $\mathcal{P}$ are \textbf{return-equivalent} if they differ only in their reward functions but have the same return for any trajectory $\tau$.% That is, $\tilde{R}_0(\tau) = R_0(\tau), \forall \tau$. 
%	\begin{align}
%		\tilde{R}_0(\tau) = R_0(\tau), \forall \tau \nonumber.
%	\end{align}
%	
\end{df}
\begin{thm}\label{RetEqThm}
Given an initial state $s_0$, return-equivalent Dec-POSDPs will have the same optimal policies. 
%	Two partially observable sequence-Markov decision processes $\tilde{\mathcal{P}}$ and $\mathcal{P}$ that are return-equivalent will have the same optimal policies.
\end{thm}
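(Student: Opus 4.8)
The plan is to show that return-equivalence forces the two Dec-POSDPs to assign the \emph{same expected return to every policy}, from which equality of the optimal-policy sets is immediate.

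First I would fix an arbitrary decentralized joint policy $\pi=(\pi^1,\dots,\pi^n)$ and observe that, since $\tilde{\mathcal P}$ and $\mathcal P$ share the state space $S$, the joint action space $A$, the observation spaces, the observation function $Z$, the transition kernel $P$, the horizon $T$, and the initial state $s_0$, the probability measure they induce over trajectories $\tau=(o_0,a_0,\dots,o_T)$ is one and the same. Concretely, this probability factorizes as a product of the (deterministic) initial term at $s_0$, the observation terms coming from $Z(s_t,i)$, the policy terms $\pi^i(a^i_t\mid o^i_t)$, and the transition terms $P(s_{t+1}\mid s_t,a_t)$ — none of which involve a reward function — so the factorizations for $\tilde{\mathcal P}$ and $\mathcal P$ are literally identical. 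I will denote this common measure by $\Pr^{\pi}_{s_0}$.

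Second, I would rewrite the objective as an expectation over trajectories, $J_{\mathcal P}(\pi)=\mathbb{E}_{\tau\sim\Pr^{\pi}_{s_0}}\!\big[\mathrm{ret}_{\mathcal P}(\tau)\big]$, where $\mathrm{ret}_{\mathcal P}(\tau)=\sum_{t=0}^{T}\gamma^t r_t$ is the return of $\tau$ under $\mathcal P$'s reward; this regrouping is legitimate precisely because the (possibly non-Markov) reward is still a function of the realized trajectory, which is what a Dec-POSDP guarantees. By Definition~\ref{DecPOSDPDefn}, $\mathrm{ret}_{\tilde{\mathcal P}}(\tau)=\mathrm{ret}_{\mathcal P}(\tau)$ for every $\tau$, and combining this with the equality of trajectory measures from the first step gives $J_{\tilde{\mathcal P}}(\pi)=J_{\mathcal P}(\pi)$ for every $\pi$. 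Since the two objective functions then coincide on the (common) space of decentralized policies, $\arg\max_{\pi}J_{\tilde{\mathcal P}}(\pi)=\arg\max_{\pi}J_{\mathcal P}(\pi)$, which is exactly the claim.

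The step I expect to be the main obstacle is the first one: making fully precise, in the partially observable and decentralized setting, that the induced trajectory distribution is reward-independent — in particular pinning down what a ``policy'' is (a map from local action–observation information, with centralization used only during training) and verifying that the factorization of $\Pr^{\pi}_{s_0}$ genuinely never references $r$. A secondary point worth one sentence is to state that the ``return of a trajectory'' in Definition~\ref{DecPOSDPDefn} is taken to be the same functional (the $\gamma$-discounted sum with the same $\gamma$) that appears inside $J$, so that equal per-trajectory returns really do translate into equal expected returns; everything after that is bookkeeping.
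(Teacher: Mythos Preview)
Your proposal is correct and follows essentially the same approach as the paper: show that the trajectory distribution under any fixed policy is identical for the two Dec-POSDPs (since only the reward differs), then use return-equivalence to conclude equal expected returns for every policy, hence identical optimal policies. The paper carries this out by writing the expected return as $\sum_{\tau} R_0(\tau)\, p^{\pi}(\tau)$ and explicitly factorizing $p^{\pi}(\tau)$, which is exactly the step you flag as the ``main obstacle.''
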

According to Definition \ref{DecPOSDPDefn}, any two return equivalent Dec-POSDPs will have the same expected return for any trajectory $\tau$. 
That is, $\tilde{R}_0(\tau) = R_0(\tau), \forall \tau$. 
This is used to prove Theorem \ref{RetEqThm}. 

%\begin{proof}[\textbf{Theorem \ref{RetEqThm}}]
\begin{proof}
	Consider two return-equivalent Dec-POSDPs $\tilde{\mathcal{P}}$ and $\mathcal{P}$. 
	Since $\tilde{\mathcal{P}}$ and $\mathcal{P}$ 
	have the same transition probability and observation functions, the probabilities that a trajectory $\tau$ is realized will be the same if both Dec-POSDPs are provided with the same policy. 
	For any joint agent policy $\pi:=(\pi^1,\dots,\pi^n)$ and sequence of states $s:=(s_0, \dots, s_{T})$, we have:
%	The joint agent policy and a sequence of states are denoted as $\pi:=(\pi^1,\dots,\pi^n)$ and $s:=(s_0, \dots, s_{T})$, respectively.
%	For any given common policy $\pi$, we have:
	\begin{align}
	\mathbb{E}_{\tau \sim (\pi,\tilde{Z},\tilde{P})}&\big[\tilde{R}_0(\tau)\big] \nonumber \\
	&=\sum_{\tau}\tilde{R}_0(\tau) \underbrace{\sum_{s}\tilde{p}(s_0)\prod_{t=0}^{T-1}\pi(a_t|o_t)\tilde{Z}(o_t|s_t)\tilde{p}(s_{t+1}|s_t,a_t)}_{\tilde{p}^{\pi}(\tau)} \nonumber\\		
	&= \sum_{\tau}\tilde{R}_0(\tau) \underbrace{\sum_{s}p(s_0)\prod_{t=0}^{T-1}\pi(a_t|o_t)Z(o_t|s_t)p(s_{t+1}|s_t,a_t)}_{p^{\pi}(\tau)} \nonumber\\
	&= \sum_{\tau}R_0(\tau) \underbrace{\sum_{s}p(s_0)\prod_{t=0}^{T-1}\pi(a_t|o_t)Z(o_t|s_t)p(s_{t+1}|s_t,a_t)}_{p^{\pi}(\tau)} \nonumber\\
	&=\mathbb{E}_{\tau \sim (\pi,Z,P)} \big[R_0(\tau)\big]. \nonumber
	\end{align}
	These equations follow from %the definition of return-equivalent Dec-POSDPs in 
	Definition \ref{DecPOSDPDefn}.
	Let $\pi^*$ denote an optimal policy for $\tilde{\mathcal{P}}$. Then, we have:
	\begin{align}
	\mathbb{E}_{\tau \sim (\pi^*,\tilde{Z},\tilde{P})}\big[\tilde{R}_0(\tau)\big]&=\mathbb{E}_{\tau \sim (\pi^*,Z,P)}\big[R_0(\tau)\big] \nonumber\\
	\geq \mathbb{E}_{\tau \sim (\pi,\tilde{Z},\tilde{P})}\big[\tilde{R}_0(\tau)\big]&=\mathbb{E}_{\tau \sim (\pi,Z,P)}\big[R_0(\tau)\big]. \nonumber
	\end{align}
	Therefore, $\pi^*$ will also be an optimal policy for $\mathcal{P}$.%, which completes the proof.
\end{proof} 

When $l_r(\theta) = 0$ in Eqn. (\ref{LossTotal}), a Dec-POSDP with the redistributed reward will be return-equivalent to a Dec-POSDP with the original episodic reward. 
Theorem \ref{RetEqThm} indicates that in this scenario, the two Dec-POSDPs will have the same optimal policies. 
An additional result in \emph{Appendix A} 
%The following result 
gives a bound on $loss_{total}(\theta)$ when the estimators $f^\theta_{arel}(E_t)$ are unbiased at each time-step. 

\section{Experiments}
%\begin{figure*}
%	\begin{subfigure}{0.31\textwidth}
%		\includegraphics[width=\linewidth]{figures/coop_push_N15.pdf}
%		\caption{Cooperative Push} \label{fig:2a}
%	\end{subfigure}%
%	\hspace*{\fill}   % maximize separation between the subfigures
%	\begin{subfigure}{0.31\textwidth}
%		\includegraphics[width=\linewidth]{figures/predator_prey_N15.pdf}
%		\caption{Predator and Prey} \label{fig:2b}
%	\end{subfigure}%
%	\hspace*{\fill}   % maximizeseparation between the subfigures
%	\begin{subfigure}{0.31\textwidth}
%		\includegraphics[width=\linewidth]{figures/simple_spread_N15.pdf}
%		\caption{Cooperative Navigation} \label{fig:2c}
%	\end{subfigure}
%%	
%	\caption{Average agent rewards and standard deviation for tasks in Particle World with episodic rewards and $N=15$. \emph{AREL} (dark blue) results in the highest average rewards in all tasks. %In comparison, the PIC baseline (orange) and RUDDER (green) are unable to learn policies to complete the tasks. Using a surrogate objective in Trajectory Smoothing (pink) results in comparable rewards in the Cooperative Navigation task, but the reward curve has a high variance. Sequence modeling (purple) obtains a higher reward than the former three methods, but does not explicitly model agent-attention, which results in a lower average reward than \emph{AREL}.
%} \label{FigGraphsN15}
%\end{figure*}
%
In this section, we describe the tasks that we evaluate \emph{AREL} on, and present results of our experiments. 
Our code is available at \textbf{\url{https://github.com/baicenxiao/AREL}}.
%In this section, we describe the tasks that we evaluate \emph{AREL} on. % and details of our evaluation methods. 
%We compare %the performance of 
%\emph{AREL} with three state-of-the-art reward redistribution methods, and carry out multiple ablations. 
%We also carry out ablations to examine the impact of the different components in \emph{AREL}. 
%Additional details are presented in the \emph{Appendix} included in the \textbf{Supplementary Material}.
%
\subsection{Environments and Tasks}
%
%
%We examine three tasks from \emph{Particle World} \cite{lowe2017multi} where multiple agents share a two-dimensional space with continuous states and actions. 
%We also evaluate three maps from the \emph{StarCraft Multi-Agent Challenge} \cite{samvelyan19smac}, where agents have continuous observations and discrete actions. 
We study tasks from \emph{Particle World} \cite{lowe2017multi} and the \emph{StarCraft Multi-Agent Challenge} \cite{samvelyan19smac}. 
%As noted in \cite{lowe2017multi, samvelyan19smac}, these  
These have been identified as challenging multi-agent environments in \cite{lowe2017multi, samvelyan19smac}. %, and encompass continuous and discrete actions. 
%These 
%are challenging multi-agent environments with partially observable states, and encompass continuous and discrete actions, as noted in \cite{lowe2017multi, samvelyan19smac}.  
In each task, a reward is received by agents only at the end of an episode. 
No reward is provided at other time steps. 
%The delayed %nature of the 
%reward makes it difficult for agents to receive immediate feedback on the quality of their actions at each time step. 
We briefly summarize the tasks below and defer detailed task descriptions to \emph{Appendix B}.\\
(1) \textbf{Cooperative Push}: $N$ agents work together to move a large ball to a landmark. \\
(2) \textbf{Predator-Prey}: $N$ predators seek to capture $M$ preys. $L$ landmarks impede movement of agents.\\ 
%Predators are rewarded when they collide with a prey. \\
(3) \textbf{Cooperative Navigation}: $N$ agents seek to reach $N$ landmarks. The maximum reward is obtained when there is exactly one agent at each landmark.\\
(4) \textbf{StarCraft}: Units from one group (controlled by RL agents) collaborate to attack units from another (controlled by heuristics).
We report results for three maps: 2 Stalkers, 3 Zealots (2s3z); 1 Colossus, 3 Stalkers, 5 Zealots (1c3s5z); 3 Stalkers vs. 5 Zealots (3s\_vs\_5z).
\subsection{Architecture and Training}
In order to make the agent-temporal attention module more expressive, we use a transformer architecture with multi-head attention \cite{vaswani2017attention} for both agent and temporal attention.
%Specifically, %in our experiments, 
%The transformer architecture applies, in sequence: an attention layer, layer normalization, two feed forward layers with ReLU activation, and another layer normalization.
%Before each layer normalization, residual connections are added. 
The permutation invariant critic (\emph{PIC}) based on the multi-agent deep deterministic policy gradient (MADDPG) from \cite{liu2020pic} is used as the base RL algorithm in Particle World. 
In StarCraft, we use QMIX \cite{rashid2018qmix} as the base RL algorithm. 
The value of $\alpha$ is set to $1$ in Particle World and $0.8$ in StarCraft. 
Additional details are presented in \emph{Appendix C}.
%Following MADDPG \cite{lowe2017multi}, the actor policy is parameterized by a two-layer MLP with 128 hidden units per layer, and ReLU activation function. 
%The permutation invariant critic is a two-layer graph convolution net with 128 hidden units per layer and a max pooling at the top, and ReLU activation. 
%The learning rates for the actor and critic are 0.01. 
%The learning rate is linearly decreased to zero at the end of training. 
%
\subsection{Evaluation}
%
%We use the permutation invariant critic (\emph{PIC}) from \cite{liu2020pic} as the baseline, and 
%We compare the performance of  \emph{AREL} against the \emph{PIC} baseline  from \cite{liu2020pic} (i.e., in the absence of an attention mechanism). 
We compare \emph{AREL} with three state-of-the-art methods:\\%, summarized below:\\ 
(1) \textbf{\emph{RUDDER}} \cite{arjona2019rudder}: A long short-term memory (LSTM) network is used for reward decomposition along the length of an episode.\\% through `contribution analysis'.\\
(2) \textbf{\emph{Sequence Modeling}} \cite{liu2019sequence}: %A transformer-based 
An 
attention mechanism is used for temporal decomposition of rewards along an episode.\\
(3) \textbf{\emph{Iterative Relative Credit Refinement (IRCR)}} \cite{gangwani2020learning}: `Guidance rewards' for temporal credit assignment are learned using a surrogate objective.
%A surrogate objective is used to learn `guidance rewards' for temporal credit assignment.
%\begin{itemize}
%\item \emph{RUDDER} \cite{arjona2019rudder}: This method uses a long short-term memory (LSTM) network for reward decomposition along the length of an episode. 
%\item \emph{Sequence Modeling} \cite{liu2019sequence}: This method uses a Transformer-based attention mechanism for temporal decomposition of rewards along an episode. 
%\item \emph{Trajectory-space Smoothing} \cite{gangwani2020learning}: This method uses a surrogate objective to learn dense `guidance rewards' for temporal credit assignment. 
%\end{itemize}

RUDDER and Sequence Modeling were originally developed for the single agent case. 
We adapted these methods to MARL by concatenating observations from all agents. 
We added the variance-based regularization loss 
%used the regularized loss from Eqn. (\ref{LossTotal}) 
in our experiments for Sequence Modeling, and observed that 
incorporating the regularization term resulted in an improved performance compared to without regularization.
% as the input.}
%We also carry out multiple ablation studies. 
%As an ablation, we evaluate the effect of removing the agent attention block, and uniformly weighting the attention of individual agents. 
%We also examine the effect of the value of the coefficient $\omega$ of the regularization term in $loss_{total}(\theta)$ (Eqn. (\ref{LossTotal})) and the number of agent-temporal attention blocks on agent rewards. 
%
%
\subsection{Results}\label{SecResults}
\begin{figure*}
	\begin{subfigure}{0.31\textwidth}
		\includegraphics[width=\linewidth]{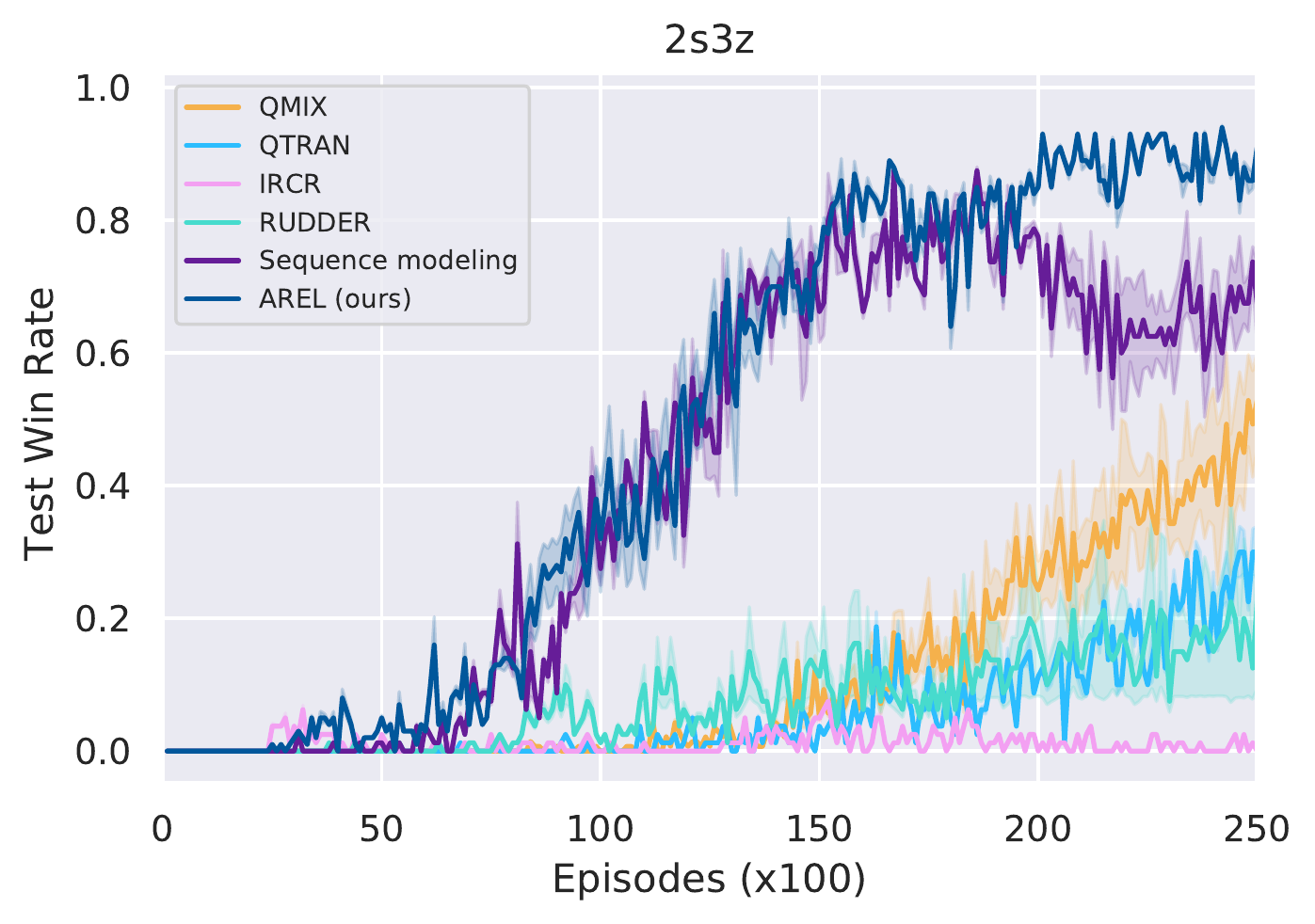}
		\caption{2s3z} \label{fig:2s3z}
	\end{subfigure}%
	\hspace*{\fill}   % maximize separation between the subfigures
	\begin{subfigure}{0.31\textwidth}
		\includegraphics[width=\linewidth]{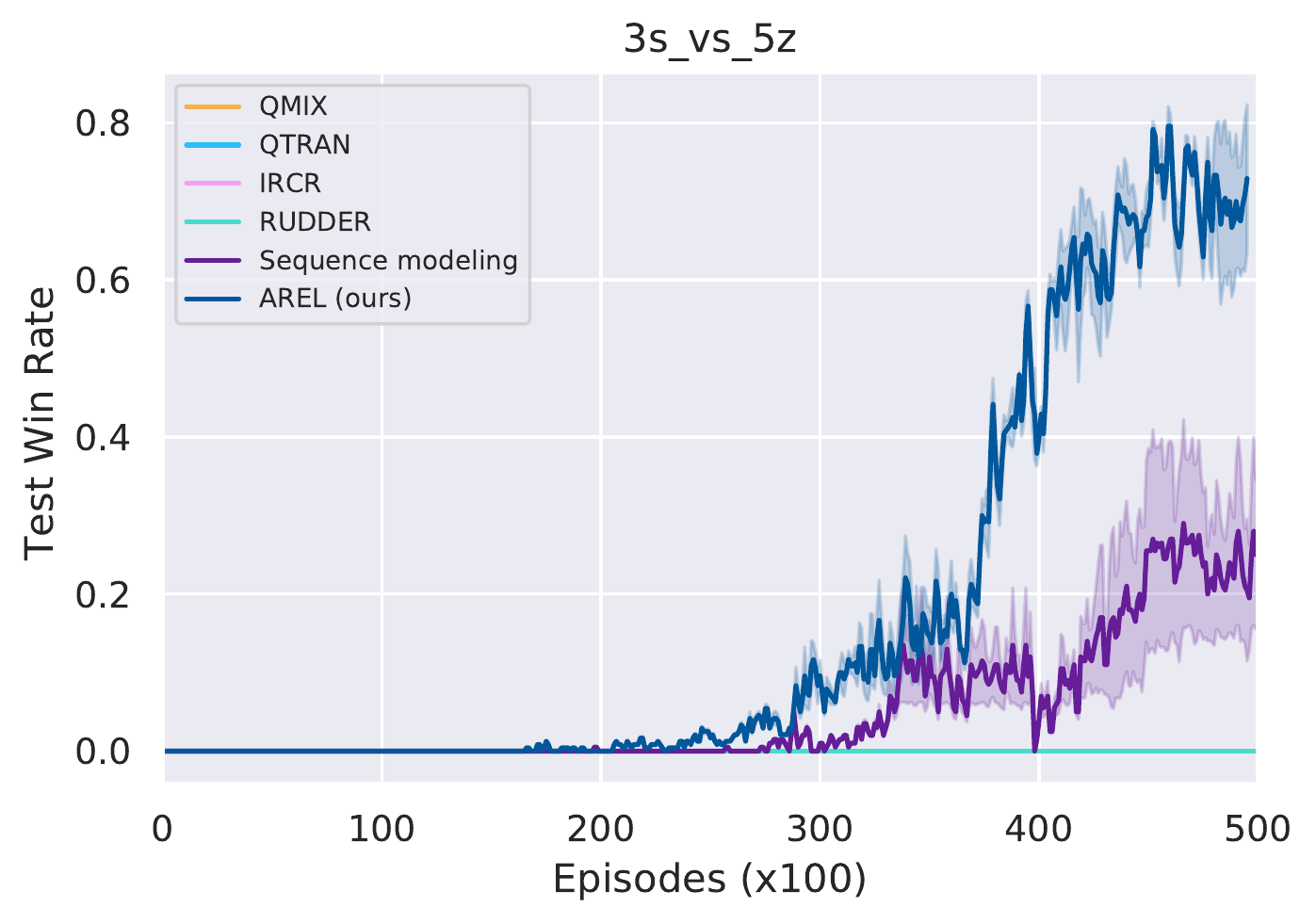}
		\caption{3s vs 5z} \label{fig:3svs5z}
	\end{subfigure}%
	\hspace*{\fill}   % maximizeseparation between the subfigures
	\begin{subfigure}{0.31\textwidth}
		\includegraphics[width=\linewidth]{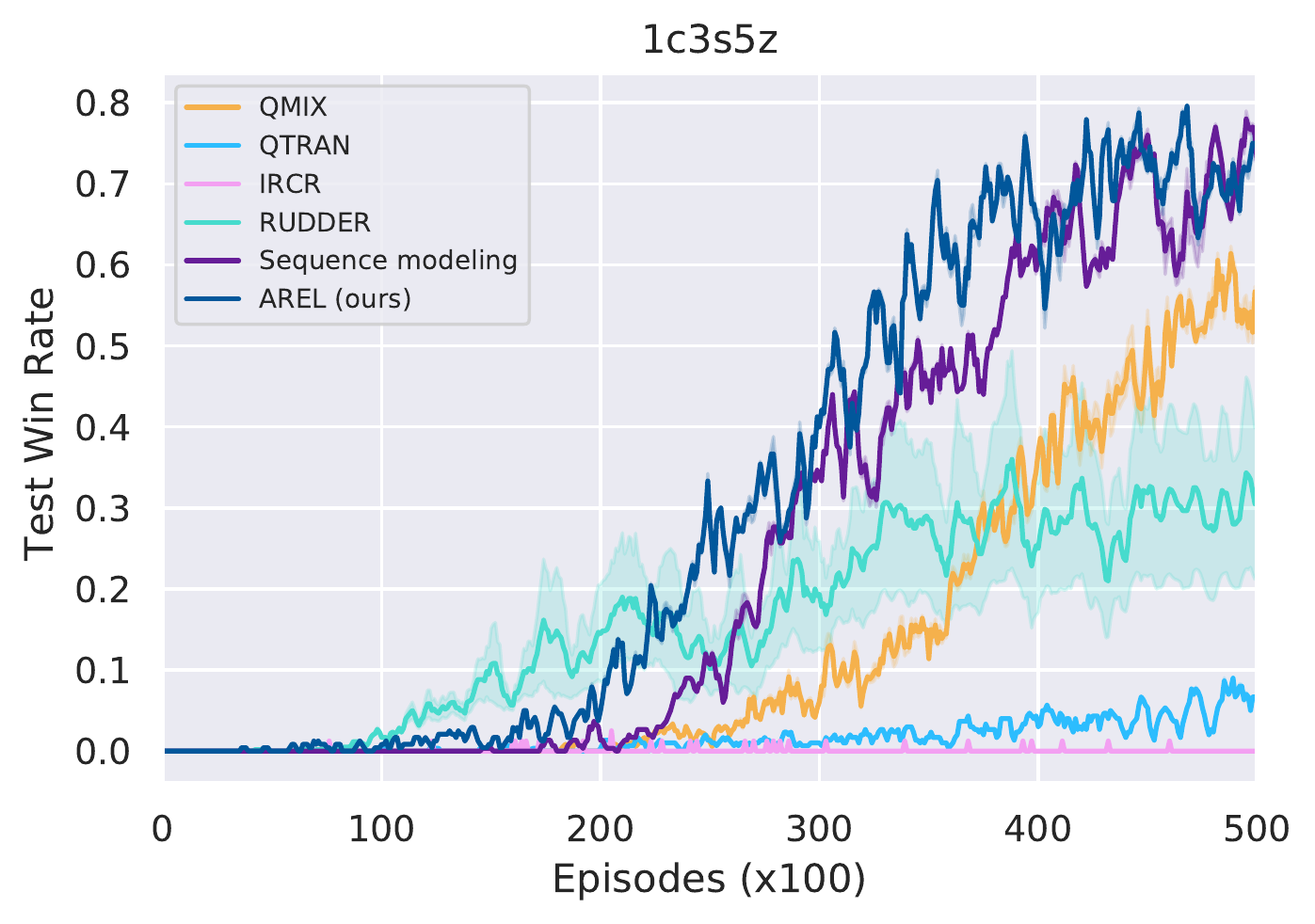}
		\caption{1c3s5z} \label{fig:1c3s5z}
	\end{subfigure}
	\caption{Average test win rate and variance in StarCraft. \emph{AREL} (dark blue) results in the highest win rates in 2s3z and 3s\_vs\_5z, and obtains a comparable win rate to Sequence Modeling in 1c3s5z. 
	} \label{FigGraphsStar}
\end{figure*}
\begin{figure*}
	\begin{subfigure}{0.31\textwidth}
		\includegraphics[width=\linewidth]{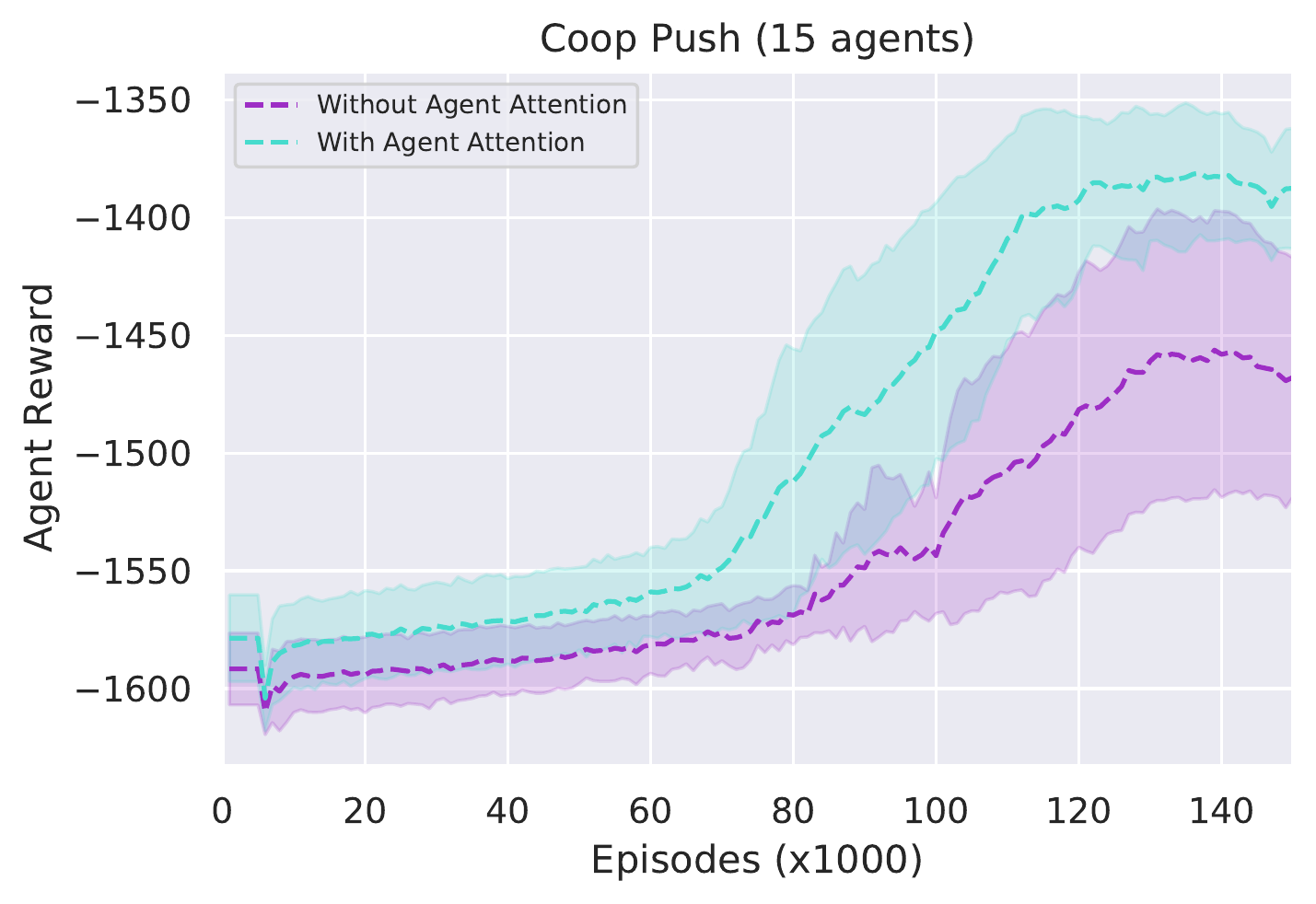}
		\caption{Agent Attention} \label{fig:Abla}
	\end{subfigure}%
	\hspace*{\fill}   % maximize separation between the subfigures
	\begin{subfigure}{0.31\textwidth}
		\includegraphics[width=\linewidth]{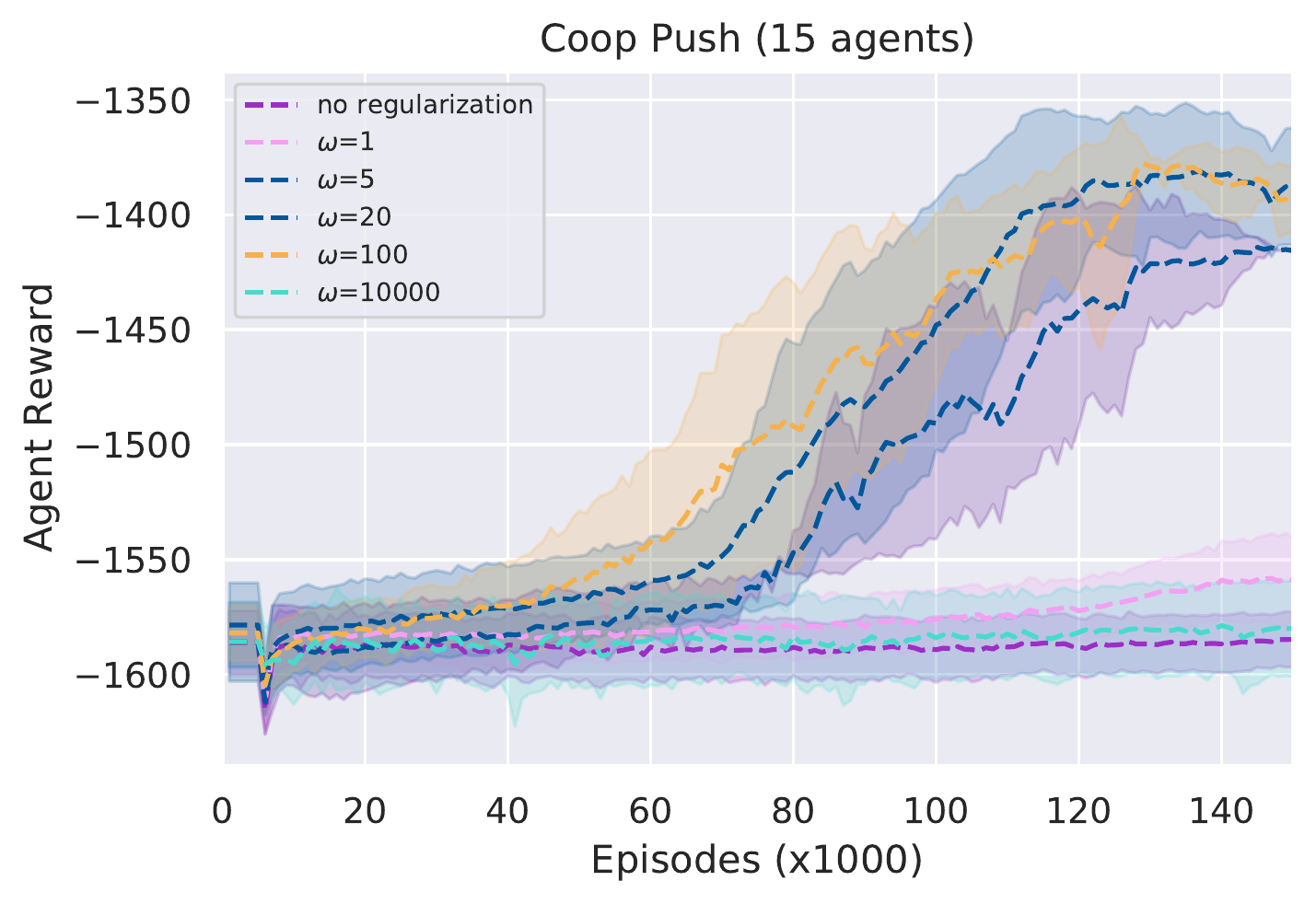}
		\caption{Regularization} \label{fig:Ablb}
	\end{subfigure}%
	\hspace*{\fill}   % maximizeseparation between the subfigures
	\begin{subfigure}{0.31\textwidth}
		\includegraphics[width=\linewidth]{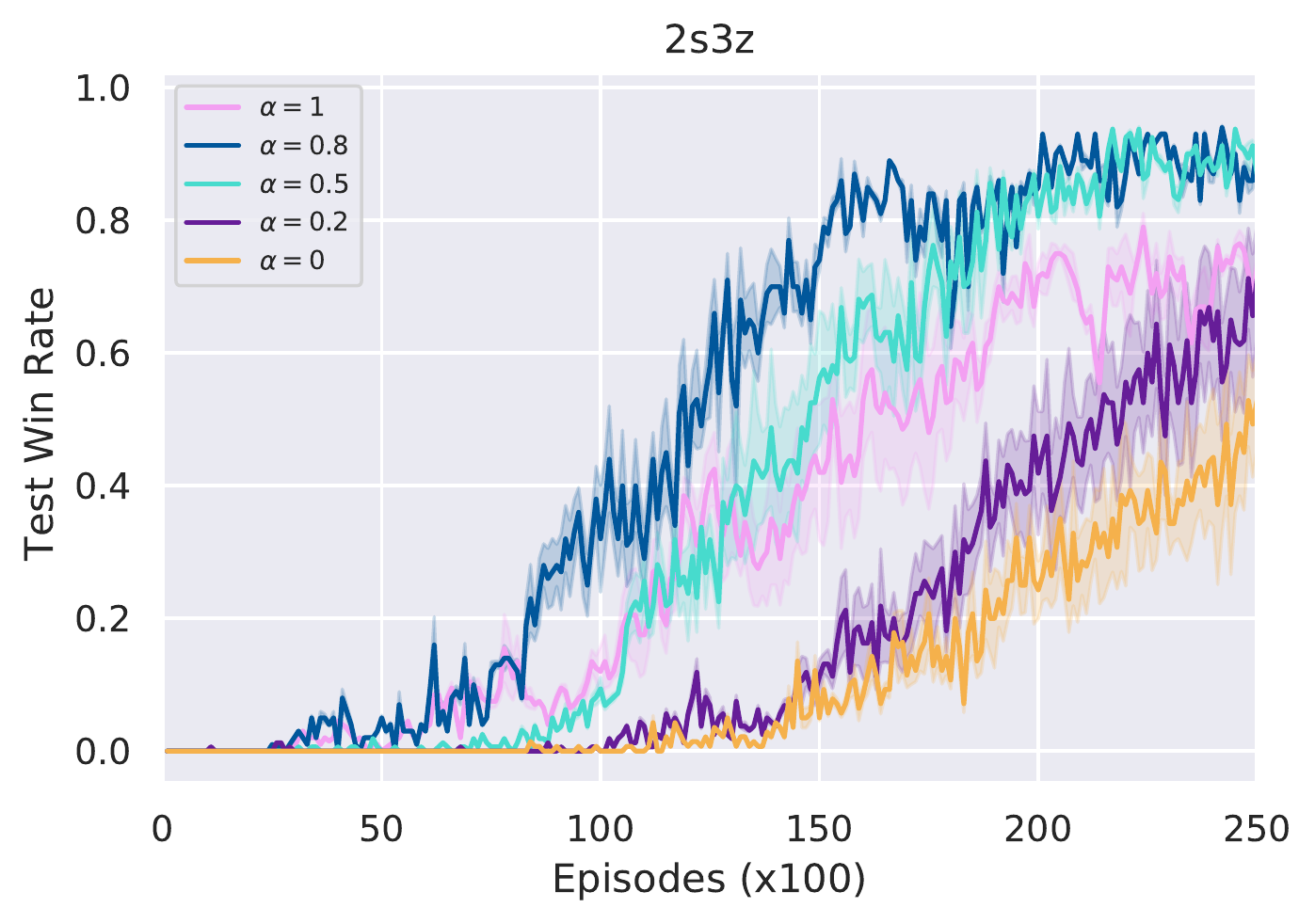}
		\caption{Reward Weight} \label{fig:Ablc}
	\end{subfigure}
	\caption{Ablations: Effects of the agent attention module (Fig. \ref{fig:Abla}) and regularization parameter $\omega$ in Eqn (\ref{LossTotal}) (Fig. \ref{fig:Ablb}) in Cooperative Push, and reward weight $\alpha$ (Fig. \ref{fig:Ablc}) in the 2s3z StarCraft map.
	} \label{FigAblations}
\end{figure*}
\subsubsection{\textbf{\emph{AREL} enables improved performance}}
Figure \ref{FigGraphsN15} shows results of our experiments for tasks in Particle World for $N = 15$. 
In each case, \emph{AREL} is able to guide agents to learn policies that result in higher average rewards 
%, and higher rewards during the early stages of training, 
compared to other methods. 
%\emph{AREL} also results in higher rewards during the early stages of training.  
This is a consequence of using an attention mechanism to redistribute an episodic reward along the length of an episode, and also characterizing the contributions of individual agents.% to the reward.

%In comparison, t
The \emph{PIC} baseline \cite{liu2020pic} fails to learn policies to complete tasks with episodic rewards. 
A similar result of failure to complete tasks was observed when using \emph{RUDDER} \cite{arjona2019rudder}. 
%when rewards are provided only at the end of an episode. 
%We observe that 
%\emph{RUDDER} \cite{arjona2019rudder} is also unable to learn policies to complete these tasks. 
An explanation for this could be that RUDDER only carries out a temporal redistribution of rewards, but does not consider the effect of  agents contributing differently to a reward. 

\emph{Sequence Modeling} \cite{liu2019sequence} performs better than \emph{RUDDER} and the \emph{PIC} baseline, possibly because it uses %an %transformer-based 
attention to redistribute episodic rewards. 
This was shown to outperform LSTM-based models, including \emph{RUDDER}, in \cite{liu2019sequence} in single-agent episodic RL, due to the relative ease of training the attention mechanism. 
%For a fair comparison, we used the regularized loss from Eqn. (\ref{LossTotal}) in our experiments. 
%Sequence modeling is seen to perform worse when the regularized term is not incorporated in the loss function. 
We believe that absence of an explicit characterization of agent-attention %is a possible justification for the 
resulted in a lower reward for this method compared to \emph{AREL}. 

Using a surrogate objective in \emph{IRCR} \cite{gangwani2020learning} results in obtaining rewards comparable to \emph{AREL} in some runs in the Cooperative Navigation task. However, the reward when using \emph{IRCR} has a much higher variance compared to that obtained when using \emph{AREL}.  %curves %are unstable and 
%has a higher variance. 
%\emph{Smooth} also obtains an average reward comparable to \emph{AREL} in the Predator-Prey task with $N = 15$. 
%We observe that 
%Our method consistently obtains higher average rewards than \emph{IRCR} during the early stages of training. 
%The performance of \emph{IRCR} is comparable to \emph{RUDDER} and the \emph{PIC} baseline in other tasks. 
A possible reason for this is that \emph{IRCR} does not characterize the relative contributions of agents at intermediate time-steps. 

Figure \ref{FigGraphsStar} shows the results of our experiments for the three maps in StarCraft. 
\emph{AREL} achieves the highest average win rate in the 2s3z and 3s\_vs\_5z maps, and obtains a comparable win rate to \emph{Sequence Modeling} in 1c3s5z. 
\emph{Sequence Modeling} does not explicitly model agent-attention, which could explain the lower average win rates %than \emph{AREL} 
in 2s3z and 3s\_vs\_5z.
\emph{RUDDER} achieves a nonzero, albeit much lower win rate than \emph{AREL} in two maps, possibly because the increased episode length might affect the redistribution of the episode reward for this method. 
%The performance of \emph{QMIX} is comparable to \emph{AREL} in two maps, but it takes a longer time to achieve the same win rate. 
%Without redistributing delayed reward, QMIX(orange) is able to achieve nonzero win rate in 2s3z and 1c3s5z but it is slower than \emph{AREL}. And in 3s\_vs\_5z, without reward redistribution, QMIX is unable to obtain nonezero win rate.
\emph{IRCR} and \emph{QTRAN} \cite{son2019qtran} obtain the lowest win rates. % in all maps. 
Additional experimental results are provided in \emph{Appendix D}.
\subsubsection{\textbf{Ablations}}
We carry out several ablations to evaluate %impact of the 
components of \emph{AREL}. % (Figure (\ref{FigAblations})).
Figure \ref{fig:Abla} demonstrates the impact of the agent-attention module. %on rewards. % in Cooperative Push for $N=15$. 
%This module is seen to play a crucial role in agents obtaining higher rewards. 
In the absence of agent-attention (while retaining permutation invariance among agents through the shared temporal attention module), rewards are significantly lower. %, both in the early stages, and at the end of training. 
We study the effect of the value of $\omega$ in %$loss_{total}(\theta)$ in 
Eqn. (\ref{LossTotal}) on rewards %in Cooperative Push 
%with $N=15$ 
in Figure \ref{fig:Ablb}. 
%The rewards are robust over a large range of values of $\omega$. 
This term is critical in ensuring that agents learn good policies. 
This is underscored by observations that rewards are significantly lower for very small or very large $\omega$ ($\omega = 0, \omega = 1, \omega = 10000$). 
%We also examine the effect of the number of agent-temporal attention blocks, $n$ (\textbf{depth}) on rewards in this task in Figure \ref{fig:Ablb}. 
%The depth has negligible impact on average rewards at the end of training. 
%However, we observe that rewards during the early stages of training are lower for $n=1$, and these rewards also have a larger variance than the other cases ($n=3, 6$). %The average rewards and variance of these rewards are similar for the latter two values of the depth.  
Third, we evaluate the effect of mixing the original episodic reward and redistributed reward by changing the reward weight $\alpha$ 
%in the 2s3z StarCraft map 
in Figure \ref{fig:Ablc}.  
The reward mixture influences win rates; $\alpha=$ 0.5 or 0.8 yields the highest win rate. The win rate is $\sim 10\%$ lower when using redistributed reward alone ($\alpha = 1$). 
Additional ablations and evaluating the %effect of the 
choice of regularization loss %in $loss_{total}(\theta)$ 
are shown in \emph{Appendices E and F}.% and \emph{Appendix F}.% respectively.

\begin{figure}[ht]
	\centering
	\includegraphics[width=0.75\linewidth]{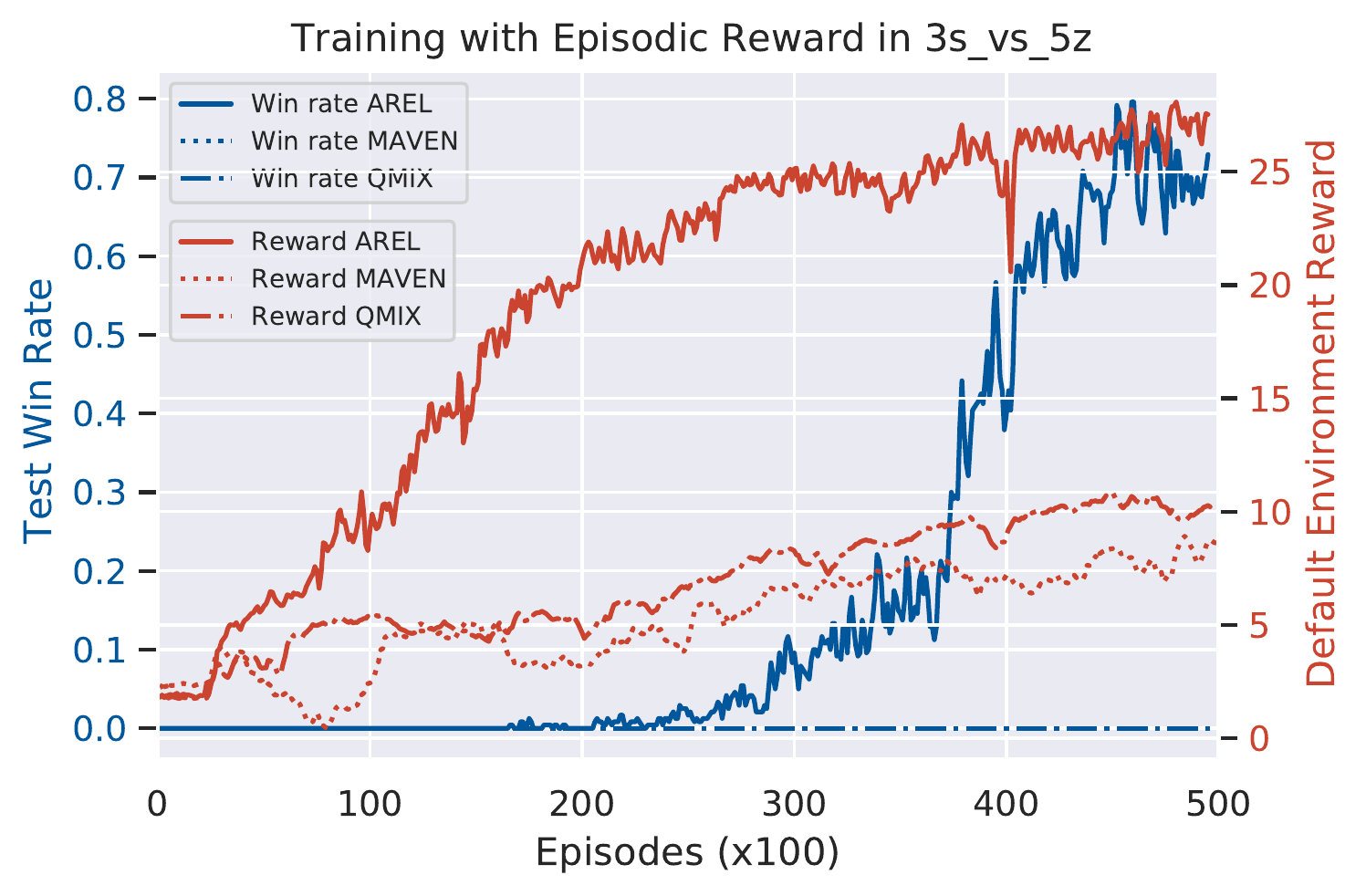}
	\caption{Comparison of AREL with QMIX and a strategic exploration technique, MAVEN in the $3s\_ vs \_ 5z$ StarCraft map (avg. over 5 runs). AREL yields highest rewards and win rates.}\label{Explorn}
\end{figure}
\begin{figure*}[!ht]
	\includegraphics[width=0.55\linewidth]{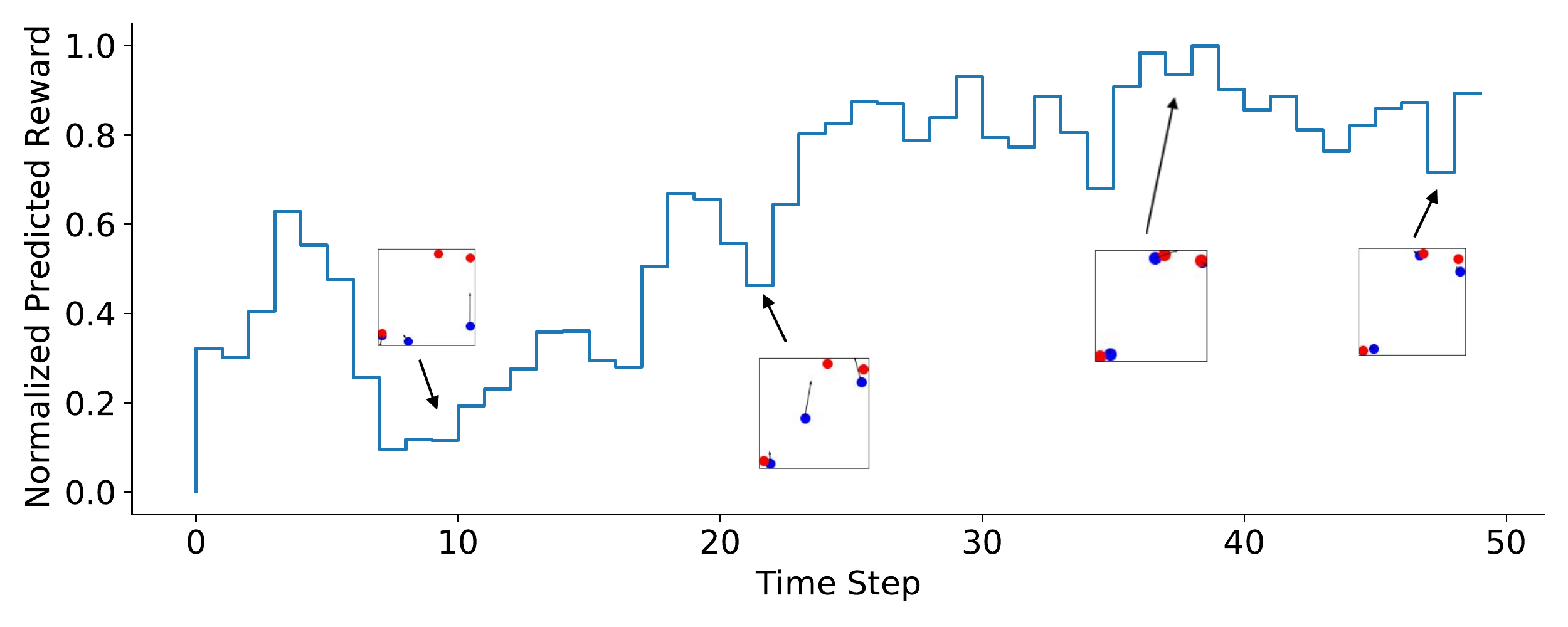}
	\caption{An instantiation of the Cooperative Navigation task with $N=3$ where rewards are provided only at the end of an episode. Blue and red dots respectively denote agents and landmarks. Arrows on agents represent their directions of movement. The objective of this task is for each agent to cover a distinct landmark. The $y-$axis of the graph shows the $0-1$ normalized predicted rewards for a sample trajectory. The positions of agents relative to landmarks are shown at several points along this trajectory. The figure shows a scenario where two agents are close to a single landmark. In this case, one of them must remain close to this landmark, while the other moves towards a different landmark. The predicted redistributed reward encourages such an action, since it has a higher magnitude when agents navigate towards distinct landmarks. The predicted redistributed reward by \emph{AREL} is not uniform along the length of the episode.} \label{Pred_Rew}
\end{figure*}
\subsubsection{\textbf{Credit Assignment vs. exploration}}
%\textcolor{blue}{
This section demonstrates the importance of effective redistribution of an episodic reward \emph{vis-a-vis} strategic exploration of the environment. 
The episodic reward $R_T (= \sum_t r_t)$ takes continuous values and provides fine-grained information on performance (beyond only win/ loss). 
AREL learns a redistribution of $R_T$ by identifying critical states in an episode, and does not provide exploration abilities beyond that of the base RL algorithm. 
The redistributed rewards of AREL can be given as input to any RL algorithm to learn policies (in our experiments, we demonstrate using QMIX for StarCraft; MADDPG for Particle World).  
Figure \ref{Explorn} illustrates a comparison of \emph{AREL} with a state-of-the-art exploration strategy, MAVEN \cite{mahajan2019maven} and with QMIX \cite{rashid2018qmix} in the $3s\_ vs \_ 5z$ StarCraft map. 
We observe that when rewards are delayed to the end of an episode, effectively redistributing the reward can be more beneficial than strategically exploring the environment to improve win-rates or total rewards. 
%}
\subsubsection{\textbf{Interpretability of Learned Rewards}}
Figure \ref{Pred_Rew} presents an interpretation of the decomposed predicted rewards \emph{vis-a-vis} the relative positions of agents to landmarks in the Cooperative Navigation task with $N = 3$. 
When the reward is provided only at the end of an episode, \emph{AREL} is used to learn a temporal redistribution of this episodic reward. 
The predicted rewards are normalized to a $0-1$ scale for ease of representation. 
The positions of the agents relative to the landmarks are shown at several points along a sample trajectory. 
Successfully trained agents must learn policies that enable each agent to cover a distinct landmark. 
For example, in a scenario where two agents are close to a single landmark, one of them must remain close to this landmark, while the other moves towards a different landmark. 
%This insight is revealed in the positions of the agents at various steps along the trajectory. 
We observe that the magnitude of the predicted rewards is consistent with this insight in that it is higher when agents navigate away and towards different landmarks. 

%\textcolor{blue}{
This visualization in Figure \ref{Pred_Rew} reveals that the attention mechanism in \emph{AREL} is able to learn to redistribute an episodic reward effectively in order to successfully train agents to accomplish task objectives in cooperative multi-agent reinforcement learning. 
Moreover, it reveals that the redistributed reward predicted by \emph{AREL} is not uniform along the length of the episode. 

\subsection{Discussion}

This paper focused on developing techniques to effectively learn policies in MARL environments when rewards were delayed or episodic. 
Our experiments demonstrate that \emph{AREL} can be used as a module that enables more effective credit assignment by identifying critical states through capturing long-term temporal dependencies between states and an episodic reward. 
Redistributed rewards predicted by \emph{AREL} are dense, which can then be provided as an input to MARL algorithms that learn value functions for credit assignment (we used MADDPG \cite{lowe2017multi} and QMIX \cite{rashid2018qmix} in our experiments). 

By including a variance-based regularization term, the total loss in Eqn. (\ref{LossTotal}) enabled incorporating the possibility that not all intermediate states would contribute equally to an episodic reward, while also learning less sparse redistributed rewards. 
Moreover, any exploration ability available to the agents was provided solely by the MARL algorithm, and not by \emph{AREL}. 
We further demonstrated that effective credit assignment was more beneficial than strategic exploration of the environment when rewards are episodic. 

\section{Conclusion}
%\textcolor{red}{
%This paper presented a solution approach for long-term temporal credit assignment in multi-agent tasks with episodic rewards.
%Solving such MARL problems require addressing two challenges: 
%identifying (1) relative importance of states along the length of an episode (along time), and (2) relative importance of individual agent's state at any single time-step (among agents).
%To deal with these two challenges, attention-based method, \emph{AREL}, were used to characterize the effect of actions on state transitions along trajectories (temporal attention), and how each agent was influenced by other agents at each time-step (agent attention).
%To improve sample efficiency, potential agent permutation invariance among homogeneous agents is considered.
%The temporally redistribution of episodic reward predicted by AREL offers dense supervision and could be integrated with any given MARL algorithms.
%AREL was evaluated on tasks from the Particle World environment and the StarCraft Multi-Agent Challenge, and compared with three state-of-the-art reward redistribution techniques. 
%\emph{AREL} resulted in agents obtaining higher rewards in Particle World and better win rates in StarCraft.}
%
This paper studied the multi-agent temporal credit assignment problem in MARL tasks with episodic rewards. 
Solving this problem required addressing the twin challenges of identifying the relative importance of states along the length of an episode and individual agent's state at any single time-step. 
We presented an attention-based method called \emph{AREL} to deal with the above challenges. 
%\emph{AREL} used a property of permutation invariance among homogeneous agents to improve sample efficiency. 
The temporally redistributed reward predicted by \emph{AREL} was dense, and could be integrated with MARL algorithms. 
%\emph{AREL} was evaluated on tasks from Particle World and StarCraft, and was compared with three state-of-the-art reward redistribution techniques. 
%\emph{AREL} resulted in agents obtaining higher rewards in Particle World and better win rates in StarCraft. 
\emph{AREL} was evaluated on tasks from Particle World and StarCraft, and was successful in obtaining higher rewards and better win rates than three state-of-the-art reward redistribution techniques. 
%
%
%We note that our experiments require computational resources to train the attention modules of \emph{AREL} in addition to those needed to train deep RL algorithms. 
%Using these resources might result in higher energy consumption, especially as the number of agents grows. 
%This is a potential limitation of the methods studied in this paper. 
%However, we believe that \emph{AREL} partially addresses this concern by sharing certain modules among all agents in order to improve scalability.
%
\section*{Acknowledgment}

This work was supported by the Office of Naval Research via Grant N00014-17-S-B001.
\bibliographystyle{ACM-Reference-Format} 
\bibliography{CredAssRL}

%%%%%%%%%%%%%%%%%%%%%%%%%%%%%%%%%%%%%%%%%%%%%%%%%%%%%%%%%%%%%%%%%%%%%%%%
\clearpage
%\onecolumn
%\begin{center}\textbf{\LARGE Agent-Temporal Attention for Reward Redistribution in\\ Episodic Multi-Agent Reinforcement Learning} \end{center}

%\begin{center}\textbf{\Large Paper ID: zZyY} \end{center}

\section{Appendices}

These appendices include detailed analysis and proofs of the theoretical results in the main paper. 
They also contain details of the environments and present additional experimental results and ablation studies.

\subsection*{\large Appendix A: Analysis}

Using the variance of the predicted redistributed rewards as the regularization loss allows us to provide an interpretation of the overall loss in Eqn. (\ref{LossTotal}) in terms of a bias-variance trade-off in a mean square estimator. 
The variance-regularized predicted rewards are analyzed in detail in the sequel. 
%We provide an interpretation of the overall loss in Eqn. (\ref{LossTotal}) in terms of a bias-variance trade-off in a mean square estimator. 

First, assume that the episodic reward $R_T$ is given by the sum of `ground-truth' rewards $r_t$ at each time step. 
That is, $R_T = \sum_t r_t$. 
The objective in Eqn. (\ref{LossTotal}) is:%can then be written as: 
\begin{align}
\arg \min_{\theta} loss_{total}(\theta)&= \arg \min_{\theta} (l_r(\theta) + \omega l_v (\theta)). \label{LossTotalObj}
\end{align}
This type of regularization will determine $ f_{arel}^{\theta}$ in a manner such that it will be \emph{robust to overfitting}. 

For a sample trajectory, the expectation over $R_T$ in $l_r(\theta)$ can be omitted. 
Let $\bar{f}^{\theta}_t:= \mathbb{E}_{\mathbf{E}}  (f_{arel}^{\theta}( \mathbf{E}_t))$ (i.e., the mean of the predicted rewards). 
Moreover, let $\bar{f}^{\theta}_t = \bar{f}^{\theta} (\mathbf{E}):= \frac{1}{T}\sum_t (f_{arel}^{\theta}( \mathbf{E}_t))$ (i.e., $\mathbf{E}_t$ is an ergodic process). 
Then, the following holds:
\begin{align}
loss_{total}(\theta)
&=\mathbb{E}_{\mathbf{E}} \big[\frac{1}{T}\big(
\sum_t (f_{arel}^{\theta}( \mathbf{E}_t ) - r_t ) \big)^2 \nonumber \\ 
 & \qquad \qquad + \frac{\omega}{T}\sum_t \big(f_{arel}^{\theta}( \mathbf{E}_t ) - \bar{f}^{\theta}_t \big)^2 \big] \label{LossTotExpanded}\\
&\leq \mathbb{E}_{\mathbf{E}} \big[ \frac{1}{T}
T\sum_t \big(f_{arel}^{\theta}( \mathbf{E}_t ) - r_t \big)^2 \nonumber \\ 
& \qquad \qquad + \frac{\omega}{T}\sum_t \big(f_{arel}^{\theta}( \mathbf{E}_t ) - \bar{f}^{\theta}_t \big)^2 \big] \label{CauchySchwarzIneq}
\end{align}

The first term in (\ref{CauchySchwarzIneq}) is obtained by applying the Cauchy-Schwarz inequality to the first term of (\ref{LossTotExpanded}). 
Consider $\mathbb{E}_{\mathbf{E}} [\sum_t \big(f_{arel}^{\theta}( \mathbf{E}_t ) - r_t \big)^2]$. 
From linearity of the expectation operator, this is equal to $\sum_t[\mathbb{E}_{\mathbf{E}} \big(f_{arel}^{\theta}( \mathbf{E}_t ) - r_t \big)^2]$. 
Then, $f_{arel}^{\theta}( \mathbf{E}_t )$ can be interpreted as an estimator of $r_t$, and the expression above is the mean square error of this estimator. 
By adding and subtracting $\bar{f}^{\theta}_t$, the mean square error can be expressed as the sum of the variance and the squared bias of the estimator \cite{domingos2000unified}. 
Formally, 
\begin{align*}
&\mathbb{E}_{\mathbf{E}} [\big(f_{arel}^{\theta}( \mathbf{E}_t ) - r_t \big)^2]=\mathbb{E}_{\mathbf{E}} [\big(f_{arel}^{\theta}( \mathbf{E}_t ) - \bar{f}^{\theta}_t+\bar{f}^{\theta}_t-r_t \big)^2]\nonumber \\
&=\mathbb{E}_{\mathbf{E}}[(f_{arel}^{\theta}( \mathbf{E}_t ) - \bar{f}^{\theta}_t)^2+(\bar{f}^{\theta}_t-r_t)^2%\nonumber \\&\qquad \qquad \qquad
+2(f_{arel}^{\theta}( \mathbf{E}_t ) - \bar{f}^{\theta}_t)(\bar{f}^{\theta}_t-r_t)].\nonumber
\end{align*}
After distributing the expectation in the above expression, we obtain the following: \\
\textbf{(a)}: since $(\bar{f}^{\theta}_t-r_t)$ is constant, the third term $\mathbb{E}_{\mathbf{E}}[(f_{arel}^{\theta}( \mathbf{E}_t ) - \bar{f}^{\theta}_t)(\bar{f}^{\theta}_t-r_t)]$ is equal to      
%$\mathbb{E}_{\mathbf{E}}[(f_{arel}^{\theta}( \mathbf{E}_t ) - \bar{f}^{\theta}_t)(\bar{f}^{\theta}_t-r_t)] = 
$(\bar{f}^{\theta}_t-r_t)\mathbb{E}_{\mathbf{E}}[(f_{arel}^{\theta}( \mathbf{E}_t ) - \bar{f}^{\theta}_t)] = (\bar{f}^{\theta}_t-r_t) (\bar{f}^{\theta}_t - \bar{f}^{\theta}_t) = 0$; \\
\textbf{(b)}: the first two terms correspond to the variance of a random variable $f_{arel}^{\theta}( \mathbf{E}_t )$ and the square of a bias between $\bar{f}^{\theta}_t$ and $r_t$, and may not be both zero.\\
%The first term in the above equation is the variance of $f_{arel}^{\theta}( \mathbf{E}_t )$, the second term is the square of the bias, and the third term evaluates to zero. 
Substituting these in Eqn. (\ref{CauchySchwarzIneq}),% we obtain: 
\begin{align}
loss_{total}(\theta) &\leq \sum_t \big[(1+\frac{\omega}{T})\mathbb{E}_{\mathbf{E}}[(f_{arel}^{\theta}( \mathbf{E}_t ) - \bar{f}^{\theta}_t)^2]\nonumber \\
&\qquad \qquad +\mathbb{E}_{\mathbf{E}}[(\bar{f}^{\theta}_t-r_t)^2]\big]. \label{LossTotalBound}
\end{align}

Therefore, the total loss is upper-bounded by an expression that represents the sum of a bias and a variance. 
The parameter $\omega$ will determine the relative significance of each term. 
Let $loss_{rhs}(\theta)$ represent the term on the right hand side of (\ref{LossTotalBound}). 
If we denote by $\theta_1$ the parameters that minimize $loss_{total}(\theta)$, and by $\theta_2$ the parameters that minimize $loss_{rhs}(\theta)$, then an optimization carried out on $loss_{total}(\theta)$ can be related to one carried out on $loss_{rhs}(\theta)$ as: 
\begin{align*}
loss_{total}(\theta_1) \leq loss_{total}(\theta_2) \leq loss_{rhs}(\theta_2) \leq loss_{rhs}(\theta_1).
\end{align*}
%
%\begin{rem} 
%An implicit assumption made here is that $r_t \equiv r_t^{\theta}$ at each time-step. This is required to ensure that the expectation operator in the above set of inequalities is valid. 
%\end{rem}

For the special case when the mean of the predicted rewards, $\bar{f}^{\theta} = \frac{R_T}{T}$, the first term of $loss_{total}(\theta)$ in Eqn. (\ref{LossTotExpanded}) will evaluate to zero. 
The optimization of $loss_{total}(\theta)$ in this case is then reduced to minimizing the square error of predictors $f^\theta_{arel}(E_t)$ at each time $t \in \{0,1,\dots,T-1\}$. 
This setting is consistent with the principle of maximum entropy when the objective is to distribute an episodic return uniformly along the length of the trajectory. 

Consider a single time-step $t_1$, and assume that there are enough samples (say, $S_{max}$) to `learn' $f^\theta_{arel}(E_{t_1})$. 
Then, at each time-step $t_1$, the goal is to solve the problem $\mathbb{E}_{\theta} [(f^\theta_{arel}(E_{t_1}) - \frac{R_T}{T})^2]$. 

The squared loss above will admit a bias-variance decomposition \cite{domingos2000unified, geman1992neural} that is commonly interpreted as a trade-off between the two terms. 
This underscores an insight that the complexity of the estimator (in terms of dimension of the set containing the parameters $\theta$) should achieve an `optimal' balance \cite{friedman2001elements, goodfellow2016deep}. 
This is represented as a U-shaped curve for the total error, where bias decreases and variance increases with the complexity of the estimator. 
However, recent work has demonstrated that the variance of the prediction also decreases with the complexity of the estimator \cite{belkin2019reconciling, neal2018modern}. 

%We present a bound on the error of the variance of the predictor at each time-step. 
In order to determine a bound on the error of the variance of the predictor at each time-step, we first state a result from \cite{neal2018modern}. 
We use this to provide a bound on $loss_{total}(\theta)$ when the estimators $f^\theta_{arel}(E_t)$ are unbiased at each time-step in Theorem \ref{ThmBoundLoss}. 

\begin{thm}\cite{neal2018modern}\label{ThmBiasVar}
Let $N$ be the dimension of the parameter space containing $\theta$. 
Assume that the parameters $\theta$ are initialized by a Gaussian as $\theta_0 \sim \mathcal{N}(0, \frac{1}{N}I)$. 
Let $L_{t_1}=o(\sqrt{N})$ be a Lipschitz constant associated to $f^\theta_{arel}(E_{t_1})$. 
Then, for some constant $C>0$, the variance of the prediction satisfies $Var_{\theta_0}(f^\theta_{arel}(E_{t_1})) \leq 2CL_{t_1}^2/N$. 
\end{thm}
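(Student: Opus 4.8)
The plan is to derive the bound from the concentration of Lipschitz functions of a Gaussian vector, i.e.\ the Gaussian Poincar\'e inequality. First I would absorb the scaling of the initialization: writing $\theta_0 = \frac{1}{\sqrt{N}} g$ with $g \sim \mathcal{N}(0, I_N)$ and $\phi(g) := f^\theta_{arel}(E_{t_1})$ evaluated at $\theta = \frac{1}{\sqrt{N}} g$, the chain rule gives $\|\nabla_g \phi(g)\| = \frac{1}{\sqrt{N}} \|\nabla_\theta f^\theta_{arel}(E_{t_1})\| \le L_{t_1}/\sqrt{N}$, so $\phi$ is $(L_{t_1}/\sqrt{N})$-Lipschitz on $\mathbb{R}^N$ in the Euclidean norm.

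Next I would invoke the Gaussian Poincar\'e inequality: for $g \sim \mathcal{N}(0, I_N)$ and any locally Lipschitz $\psi$, $\mathrm{Var}(\psi(g)) \le \mathbb{E}[\|\nabla \psi(g)\|^2]$. Applied to $\psi = \phi$, this gives $\mathrm{Var}_{\theta_0}(f^\theta_{arel}(E_{t_1})) = \mathrm{Var}_g(\phi(g)) \le L_{t_1}^2/N$, which already yields the stated form (any $C \ge 1/2$ works for this step); equivalently, the Gaussian log-Sobolev / Herbst argument gives the same conclusion via sub-Gaussianity with variance proxy $L_{t_1}^2/N$. Keeping a free constant $C > 0$ leaves room for the localization step that follows.

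The main obstacle is that $f^\theta_{arel}$ is assembled from softmax attention blocks and several shared MLPs and is not globally Lipschitz, so $L_{t_1}$ must be read as a Lipschitz constant valid only on a neighborhood $\mathcal{B}$ of the initialization, and the Poincar\'e step cannot be applied verbatim on all of $\mathbb{R}^N$. To handle this I would split $\mathrm{Var}_{\theta_0}(\cdot)$ according to whether $\theta_0 \in \mathcal{B}$. Since $\mathbb{E}\|\theta_0\|^2 = 1$ and $g \mapsto \|g\|$ is $1$-Lipschitz, Gaussian concentration gives $\Pr(\|\theta_0\| > 1 + t/\sqrt{N}) \le e^{-t^2/2}$, so choosing, say, $\mathcal{B} = \{\|\theta_0\| \le 2\}$ makes $\Pr(\theta_0 \notin \mathcal{B})$ exponentially small in $N$. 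On $\mathcal{B}$ one applies the Lipschitz/Poincar\'e bound from the previous paragraph; off $\mathcal{B}$ one controls the contribution to the variance using a crude global smoothness bound for $f^\theta_{arel}$ together with Gaussian tail estimates for $\|\theta_0\|$ and its low-order moments. The hypothesis $L_{t_1} = o(\sqrt{N})$ is exactly what forces this off-$\mathcal{B}$ correction to be of smaller order than $L_{t_1}^2/N$, so that after collecting terms the variance is still at most $2 C L_{t_1}^2/N$ for a universal $C > 0$.

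Finally, if $\mathrm{Var}_{\theta_0}$ is intended for the \emph{trained} predictor rather than the predictor at initialization, the same chain of inequalities goes through provided the map $\theta_0 \mapsto \theta_{\mathrm{final}}$ is itself Lipschitz (for instance in the lazy-training / NTK regime, or under a bounded gradient-flow assumption), since a composition of Lipschitz maps is Lipschitz and the additional Lipschitz constant can be folded into $C$.
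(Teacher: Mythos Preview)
The paper does not prove this theorem at all: it is quoted verbatim as a result from \cite{neal2018modern} and then invoked as a black box in the proof of Theorem~\ref{ThmBoundLoss}. So there is no ``paper's own proof'' to compare against.

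That said, your sketch is essentially the standard route to such a bound and is correct in outline. The key step---rescaling $\theta_0=\tfrac{1}{\sqrt{N}}g$ with $g\sim\mathcal{N}(0,I_N)$ and applying the Gaussian Poincar\'e inequality to the $(L_{t_1}/\sqrt{N})$-Lipschitz map $g\mapsto f^\theta_{arel}(E_{t_1})$---immediately yields $\mathrm{Var}_{\theta_0}\le L_{t_1}^2/N$, which is already the claimed bound with $C=\tfrac{1}{2}$. This is precisely the mechanism behind the result in \cite{neal2018modern}, and your final paragraph on how the bound transfers to the trained predictor in the lazy/NTK regime also matches how that reference justifies the statement for learned networks. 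The one soft spot is the localization argument: splitting on $\{\theta_0\in\mathcal{B}\}$ and bounding the off-$\mathcal{B}$ variance requires an explicit global growth bound on $f^\theta_{arel}$ (say, polynomial in $\|\theta\|$), which you invoke but do not supply; absent such a bound the exponential tail of $\Pr(\theta_0\notin\mathcal{B})$ alone does not control the contribution. For the purposes of this paper, though, the theorem is simply cited, so your reconstruction already goes further than what the authors provide.
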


\begin{thm}\label{ThmBoundLoss}
Let $\bar{f}^{\theta} := \mathbb{E}_{\mathbf{E}}  (f_{arel}^{\theta}( \mathbf{E}))= R_T/T$ denote the mean of the predicted rewards. Assume that there are $S_{max}$ samples to `learn' $f^\theta_{arel}(E_{t_1})$ at each time-step $t_1 \in \{0,1,\dots,T-1\}$. Then, for unbiased estimators $f^\theta_{arel}(E_{t_1})$ and associated Lipschitz constants $L_{t_1}=o(\sqrt{N})$,% at each $t_1$, 
\begin{align*}
loss_{total}(\theta) &\leq \frac{2 \omega C}{NT} \sum_t (L_0^2 + \dots L_{T-1}^2).
\end{align*}
\end{thm}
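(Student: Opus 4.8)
The plan is to combine the special-case simplification of $loss_{total}(\theta)$ derived just before the statement with the variance bound of Theorem~\ref{ThmBiasVar}. First I would use the hypothesis $\bar{f}^{\theta} = \mathbb{E}_{\mathbf{E}}(f_{arel}^{\theta}(\mathbf{E})) = R_T/T$ together with the standing assumption $R_T = \sum_t r_t$ to kill the regression term: since $\sum_t f_{arel}^{\theta}(\mathbf{E}_t) = T\bar{f}^{\theta} = R_T = \sum_t r_t$, the first summand inside Eqn.~(\ref{LossTotExpanded}), namely $\tfrac1T(\sum_t(f_{arel}^{\theta}(\mathbf{E}_t) - r_t))^2$, is identically zero. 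Hence $loss_{total}(\theta) = \omega\, l_v(\theta) = \tfrac{\omega}{T}\,\mathbb{E}_{\mathbf{E}}\big[\sum_t (f_{arel}^{\theta}(\mathbf{E}_t) - \bar{f}^{\theta}_t)^2\big]$.

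Next I would push the expectation inside the sum by linearity and recognize each term $\mathbb{E}_{\mathbf{E}}[(f_{arel}^{\theta}(\mathbf{E}_t) - \bar{f}^{\theta}_t)^2]$ as the variance of the predictor $f_{arel}^{\theta}(E_t)$ at time $t$, using $\bar{f}^{\theta}_t = \mathbb{E}_{\mathbf{E}}(f_{arel}^{\theta}(\mathbf{E}_t))$. Here the unbiasedness hypothesis enters: for an estimator of the ground-truth reward $r_t$ that is unbiased, the bias--variance decomposition of the mean square error carried out in the derivation preceding Eqn.~(\ref{LossTotalBound}) collapses to its variance component, so the quantity being summed is precisely $Var_{\theta_0}(f_{arel}^{\theta}(E_t))$, the object controlled by Theorem~\ref{ThmBiasVar}. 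The assumption that $S_{max}$ samples suffice to learn each $f_{arel}^{\theta}(E_{t_1})$ is what places us in the regime where this decomposition and the Gaussian-initialization hypothesis of Theorem~\ref{ThmBiasVar} are applicable.

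Finally I would invoke Theorem~\ref{ThmBiasVar} with the Gaussian initialization $\theta_0 \sim \mathcal{N}(0, \tfrac{1}{N}I)$ and the given Lipschitz constants $L_{t_1} = o(\sqrt{N})$, obtaining $Var_{\theta_0}(f_{arel}^{\theta}(E_t)) \le 2CL_t^2/N$ for each $t \in \{0,\dots,T-1\}$, substitute these into the expression for $loss_{total}(\theta)$ from the first step, and factor out the constants: $loss_{total}(\theta) \le \tfrac{\omega}{T}\sum_t \tfrac{2CL_t^2}{N} = \tfrac{2\omega C}{NT}\big(L_0^2 + \dots + L_{T-1}^2\big)$, which is the claimed bound.

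The main obstacle I anticipate is the bookkeeping around the two sources of randomness — the trajectory distribution over $\mathbf{E}$ that appears in the definition of $l_v(\theta)$ versus the weight-initialization distribution over $\theta_0$ that appears in Theorem~\ref{ThmBiasVar}. Making the identification of $\mathbb{E}_{\mathbf{E}}[(f_{arel}^{\theta}(\mathbf{E}_t) - \bar{f}^{\theta}_t)^2]$ with $Var_{\theta_0}(f_{arel}^{\theta}(E_t))$ precise is exactly what the ergodicity and unbiasedness assumptions in the statement are there to license, and spelling out that interchange carefully is the delicate point; once it is granted, the remaining algebra (linearity of expectation, summing the per-time-step bounds, collecting constants) is entirely routine.
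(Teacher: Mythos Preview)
Your proposal is correct and follows essentially the same route as the paper: kill the regression term via the hypothesis $\bar f^\theta = R_T/T$, identify the remaining variance term, and then apply Theorem~\ref{ThmBiasVar} at each time step using unbiasedness to equate mean-square error with variance. Your write-up is in fact more careful than the paper's brief proof, and your flagged concern about reconciling the $\mathbf{E}$-expectation with the $\theta_0$-variance is precisely the point the paper glosses over.
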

%
%
%Let $\bar{f}^{\theta}:= \mathbb{E}_{\mathbf{E}}  (f_{arel}^{\theta}( \mathbf{E}))$ denote the mean of the predicted rewards and assume 
%%Consider the case %when the mean of the predicted rewards, 
%$\bar{f}^{\theta} = R_T/T$. 
%The more general case is analyzed in \emph{Appendix B}. 
%In this setting, the first term of $loss_{total}(\theta)$ in Eqn. (\ref{LossTotExpanded}) will evaluate to zero. 
%Theorem \ref{ThmBoundLoss} indicates that the optimization of $loss_{total}(\theta)$ will be equivalent to minimizing the square error of predictors $f^\theta_{arel}(E_t)$ at each time $t \in \{0,1,\dots,T-1\}$. 
%This setting is consistent with the principle of maximum entropy when we are interested in distributing an episodic return uniformly along the length of the trajectory. 
%
%At a single time-step $t_1$, assume that we have enough %there is a sufficient number of 
%samples (say, $S_{max}$) to `learn' $f^\theta_{arel}(E_{t_1})$.  
%Then, the goal is to solve the problem $\mathbb{E}_{\theta} [(f^\theta_{arel}(E_{t_1}) - \frac{R_T}{T})^2]$ at each $t_1$. 
%This setting is consistent with the principle of maximum entropy when the objective is to distribute an episodic return uniformly along the length of the trajectory.

\begin{proof}%[\textbf{Theorem \ref{ThmBoundLoss}}]  
The proof follows from applying Theorem \ref{ThmBiasVar} at each time step $t_1 \in \{0,1,\dots,T-1\}$. 
Since the estimators at each time-step are unbiased, the mean square error of predictors $f^\theta_{arel}(E_t)$ is equal to the variance of the predictor. 
The expectation operator in Eqn. (\ref{LossTotExpanded}) is now over a constant, which completes the proof. 
\end{proof}

The assumption on estimators being unbiased is reasonable. 
Proposition \ref{PropUnb} indicates that in the more general case (i.e. for an estimator that may not be unbiased), the prediction is concentrated around its mean with high probability. 

\begin{prop}\cite{wainwright2019high} \label{PropUnb}
Under a Gaussian initialization of parameters as $\theta_0 \sim \mathcal{N}(0, \frac{1}{N}I)$, at each time-step $t_1$, the following holds:
\begin{align*}
\mathbb{P}(|f^\theta_{arel}(E_{t_1}) - \mathbb{E}_\theta[f^\theta_{arel}(E_{t_1})]| > \epsilon) \leq 2 ~exp(-\frac{CN\epsilon^2}{L^2}).
\end{align*}
\end{prop}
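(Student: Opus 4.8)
The plan is to recognize this proposition as a direct instance of the Gaussian concentration inequality for Lipschitz functions (the Borell--TIS / Tsirelson--Ibragimov--Sudakov inequality, stated as Theorem 2.26 in Wainwright's \emph{High-Dimensional Statistics}), applied to the map $\theta \mapsto f^\theta_{arel}(E_{t_1})$ for a fixed input $E_{t_1}$. First I would recall the standing assumption already used in the hypotheses of Theorem \ref{ThmBiasVar}: for the fixed trajectory feature $E_{t_1}$, the function $g(\theta) := f^\theta_{arel}(E_{t_1})$ is $L$-Lipschitz in the Euclidean norm on the parameter space $\mathbb{R}^N$, with $L = L_{t_1}$ the same Lipschitz constant appearing earlier.

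Next I would put the randomness in standard form. Since $\theta_0 \sim \mathcal{N}(0, \tfrac{1}{N} I_N)$, write $\theta_0 = \tfrac{1}{\sqrt{N}} Z$ with $Z \sim \mathcal{N}(0, I_N)$, and set $h(Z) := g(\tfrac{1}{\sqrt{N}} Z)$. Because precomposing with the linear map $Z \mapsto \tfrac{1}{\sqrt{N}} Z$ scales the Lipschitz constant by $\tfrac{1}{\sqrt{N}}$, the function $h$ is $\tfrac{L}{\sqrt{N}}$-Lipschitz. Then I would invoke the Gaussian concentration inequality: for any $\tilde{L}$-Lipschitz $h$ and $Z \sim \mathcal{N}(0, I_N)$, $\mathbb{P}(|h(Z) - \mathbb{E}[h(Z)]| > \epsilon) \le 2 \exp(-\epsilon^2 / (2\tilde{L}^2))$. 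Substituting $\tilde{L} = L/\sqrt{N}$ yields $\mathbb{P}(|f^\theta_{arel}(E_{t_1}) - \mathbb{E}_\theta[f^\theta_{arel}(E_{t_1})]| > \epsilon) \le 2 \exp(-N\epsilon^2 / (2L^2))$, which is exactly the claimed bound with $C = 1/2$ (any $C \le 1/2$ works, and $C$ can be kept generic to absorb the Lipschitz constant convention).

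The only non-mechanical step, and hence the main obstacle, is justifying that $g(\theta) = f^\theta_{arel}(E_{t_1})$ is genuinely Lipschitz in $\theta$ with a finite constant. For a deep attention network this would require bounding a product of operator norms of the layer weight matrices together with the Lipschitz constants of the softmax and MLP nonlinearities; in practice one either assumes the weights lie in a bounded set (e.g.\ via spectral normalization or weight clipping) or restricts attention to a compact neighborhood of $\theta_0$, guaranteeing existence of such an $L$. Here I would simply carry this over as a standing assumption, consistent with Theorem \ref{ThmBiasVar}, after which the concentration estimate is immediate. A secondary bookkeeping point is to ensure the Lipschitz constant is tracked with respect to $\theta$ rather than $Z$, so that the $1/\sqrt{N}$ rescaling of the Gaussian is correctly propagated; this is precisely what produces the dimension factor $N$ in the exponent and matches the scaling used in Theorem \ref{ThmBiasVar}.
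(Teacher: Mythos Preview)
Your proposal is correct; the paper does not supply its own proof of this proposition but simply cites it from Wainwright's text, and your argument---rescaling $\theta_0 = N^{-1/2}Z$ to a standard Gaussian and invoking the Borell--TIS concentration inequality for Lipschitz functions---is exactly the standard derivation underlying that citation. Your identification of the Lipschitz assumption on $\theta \mapsto f^\theta_{arel}(E_{t_1})$ as the substantive hypothesis (inherited from Theorem~\ref{ThmBiasVar}) and the resulting constant $C = 1/2$ are both on point.
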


Theorem \ref{ThmBoundLoss} indicates that the optimization of $loss_{total}(\theta)$ will be equivalent to minimizing the square error of predictors $f^\theta_{arel}(E_t)$ at each time $t \in \{0,1,\dots,T-1\}$. 

\subsection*{\large Appendix B: Detailed Task Descriptions} \label{TaskDescriptions}

This Appendix gives a detailed description of the tasks that we evaluate \emph{AREL} on. % in this Appendix. 
In each experiment, a reward is obtained by the agents only at the end of an episode. 
No reward is provided at other time steps. 
%The delayed %nature of the 
%reward makes it difficult for agents to receive immediate feedback on the quality of their actions at each time step. 

\begin{itemize}
\item \textbf{Cooperative Push}: This task has $N$ agents working together to move a large ball to a landmark. Agents are rewarded when the ball reaches the landmark. Each agent observes its position and velocity, relative position of the target landmark and the large ball, and relative positions of the $k$ nearest agents. We report results for $(N, k) = (3, 2)$, $(6, 5)$, and $(15, 10)$. 
At each time step, the distance $d^t_{a,b}$ between agents and the ball, distance $d^t_{b,l}$ between ball and landmark, and whether the agents touch the ball $I^t$ is recorded. These quantities, though, will be not be immediately revealed to the agents. 
Agents receive a reward $\sum_{t=1}^T(-\lambda_1 d^t_{a,b} - \lambda_2 d^t_{b,l} + \lambda_3 I^t)$ at the end of each episode at time $T$. 

\item \textbf{Predator-Prey}: This task has $N$ predators working together to capture $M$ preys. $L$ landmarks impede movement of the agents. Preys can move faster than predators, and predators obtain a positive reward when they collide with a prey. The $M$ prey agents are controlled by the environment. Each predator observes its position and velocity, relative locations of the $l$ nearest landmarks, and relative positions and velocities of the $k_1$ nearest prey and $k_2$ nearest predators. We report results for $(N, M, L, k_1,  k_2, l) = (3, 1, 2, 1, 3, 2)$, $(6, 2, 3, 2, 6, 3)$, and $(15, 5, 5, 3, 6, 3)$. 
At each time step, the distance $d^t_{prey,pred}$ between a prey and the closest predator, and whether a predator touches a prey $I^t$ is recorded. These quantities, though, will be not be immediately revealed to the agents. 
The agents receive a reward $\sum_{t=1}^T(-\lambda_1 d^t_{prey,pred} + \lambda_2 I^t)$ at the end of each episode at time $T$.

\item \textbf{Cooperative Navigation}: This task has $N$ agents seeking to reach $N$ landmarks. The maximum reward is obtained when there is exactly one agent at each landmark. Agents are also penalized for colliding with each other. Each agent observes its position, velocity, and the relative locations of the $k$ nearest landmarks and agents. We report results for $(N,k) = (3, 2), (6, 5), (15, 5)$. 
At each time step, the distance $d^t_{a,l}$ between an agent and the closest landmark, and whether an agent collides with other agents $I^t$ is recorded. These quantities, though, will be not be immediately revealed to the agents. 
The agents receive a reward $\sum_{t=1}^T(-\lambda_1 d^t_{a,l} - \lambda_2 I^t)$ at the end of each episode at time $T$.

\item \textbf{StarCraft}: We use the SMAC benchmark from \cite{samvelyan19smac}. 
The environment comprises two groups of army units, and units from one group (controlled by learning agents) collaborate to attack units from the other (controlled by handcrafted heuristics).
%Units from the same group collaborate to attack units from the other group. 
%One group is controlled by handcrafted heuristics and the other is controlled by our learning agents.
Each learning agent controls one army unit. 
We report results for three maps: 2 Stalkers and 3 Zealots (2s3z); 1 Colossus, 3 Stalkers, and 5 Zealots (1c3s5z); and 3 Stalkers versus 5 Zealots (3s\_vs\_5z).
In 2s3z and 1c3s5z, two groups of identical units are placed
symmetrically on the map. 
In 3s\_vs\_5z, the learning agents control 3 Stalkers to attack 5 Zealots controlled by the StarCraft AI.
In all maps, units can only observe other units if they are both alive and located within the \emph{sight range}. 
The 2s3z and 1c3s5z maps comprise heterogeneous agents, since there are different types of units, while the 3s\_vs\_5z is a homogeneous map. 
%The reward structures for the maps in StarCraft are more complex. 
In all our experiments, the default environment reward is delayed and revealed only at the end of an episode. The reader is referred to \cite{samvelyan19smac} for a detailed description of the default rewards. 
\end{itemize}

\subsection*{\large Appendix C: Implementation Details} \label{ImplementationDetails}
\begin{figure*}
	\begin{subfigure}{0.31\textwidth}
		\includegraphics[width=\linewidth]{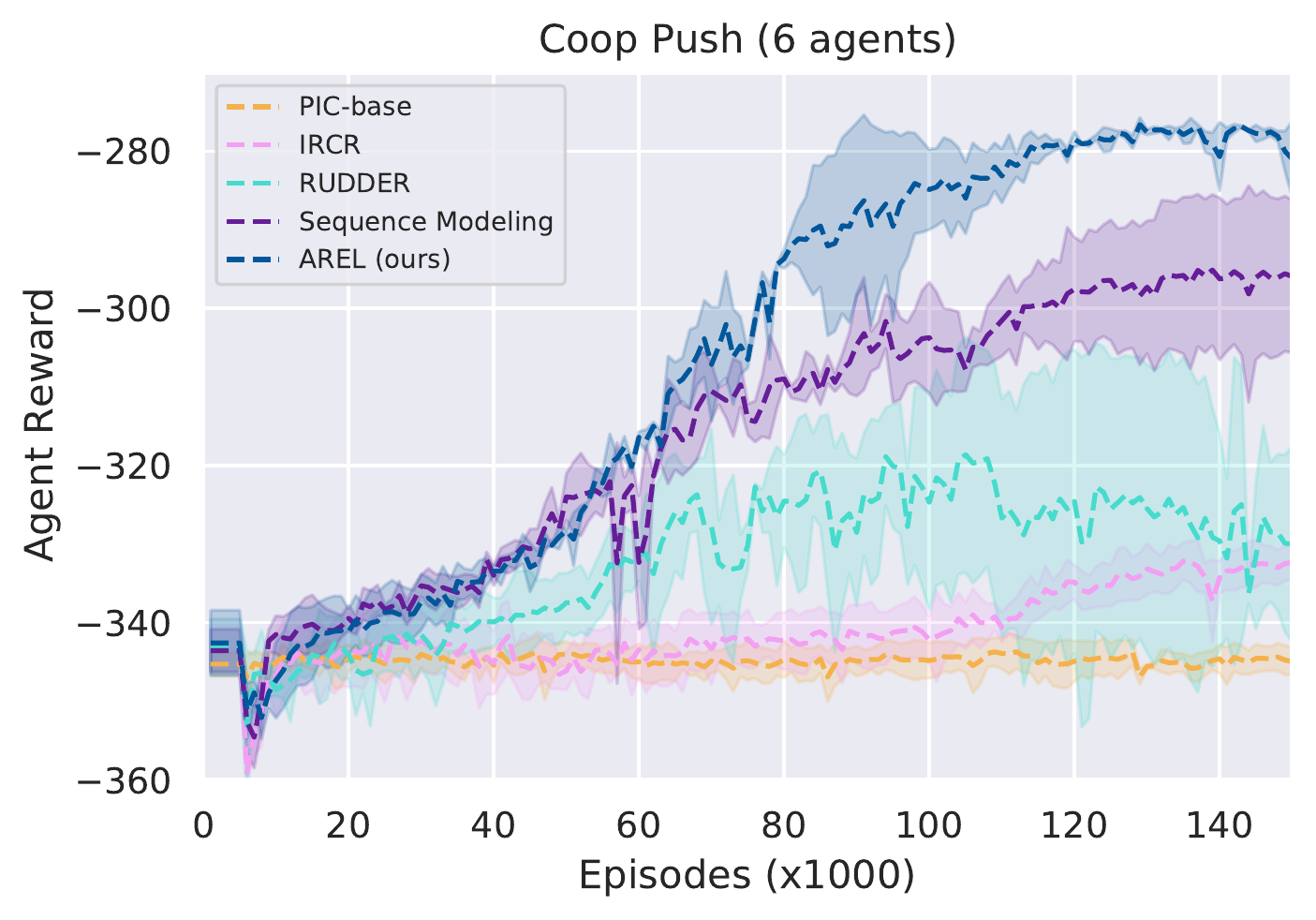}
		\caption{Cooperative Push} \label{fig:1a}
	\end{subfigure}%
	\hspace*{\fill}   % maximize separation between the subfigures
	\begin{subfigure}{0.31\textwidth}
		\includegraphics[width=\linewidth]{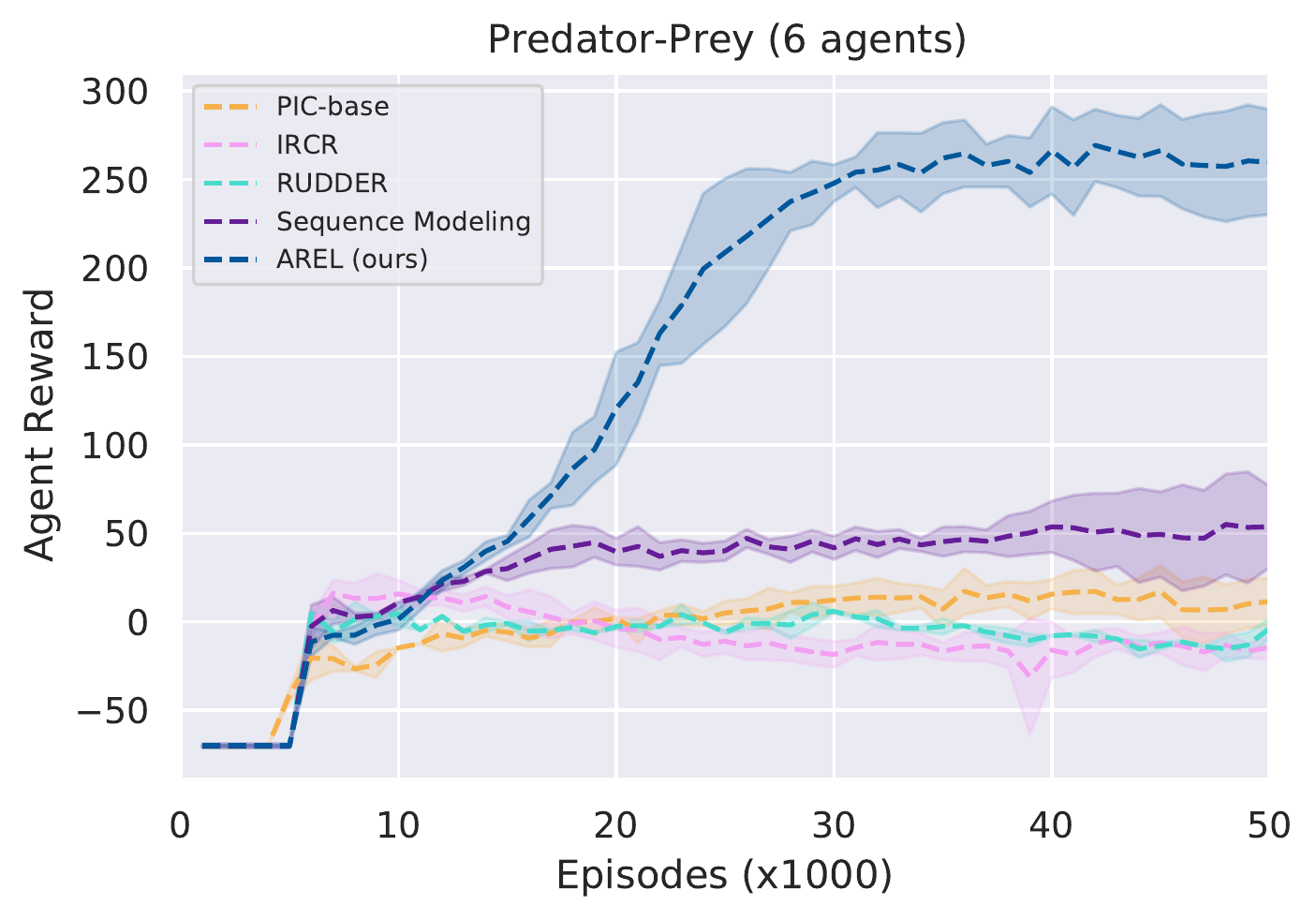}
		\caption{Predator and Prey} \label{fig:1b}
	\end{subfigure}%
	\hspace*{\fill}   % maximizeseparation between the subfigures
	\begin{subfigure}{0.31\textwidth}
		\includegraphics[width=\linewidth]{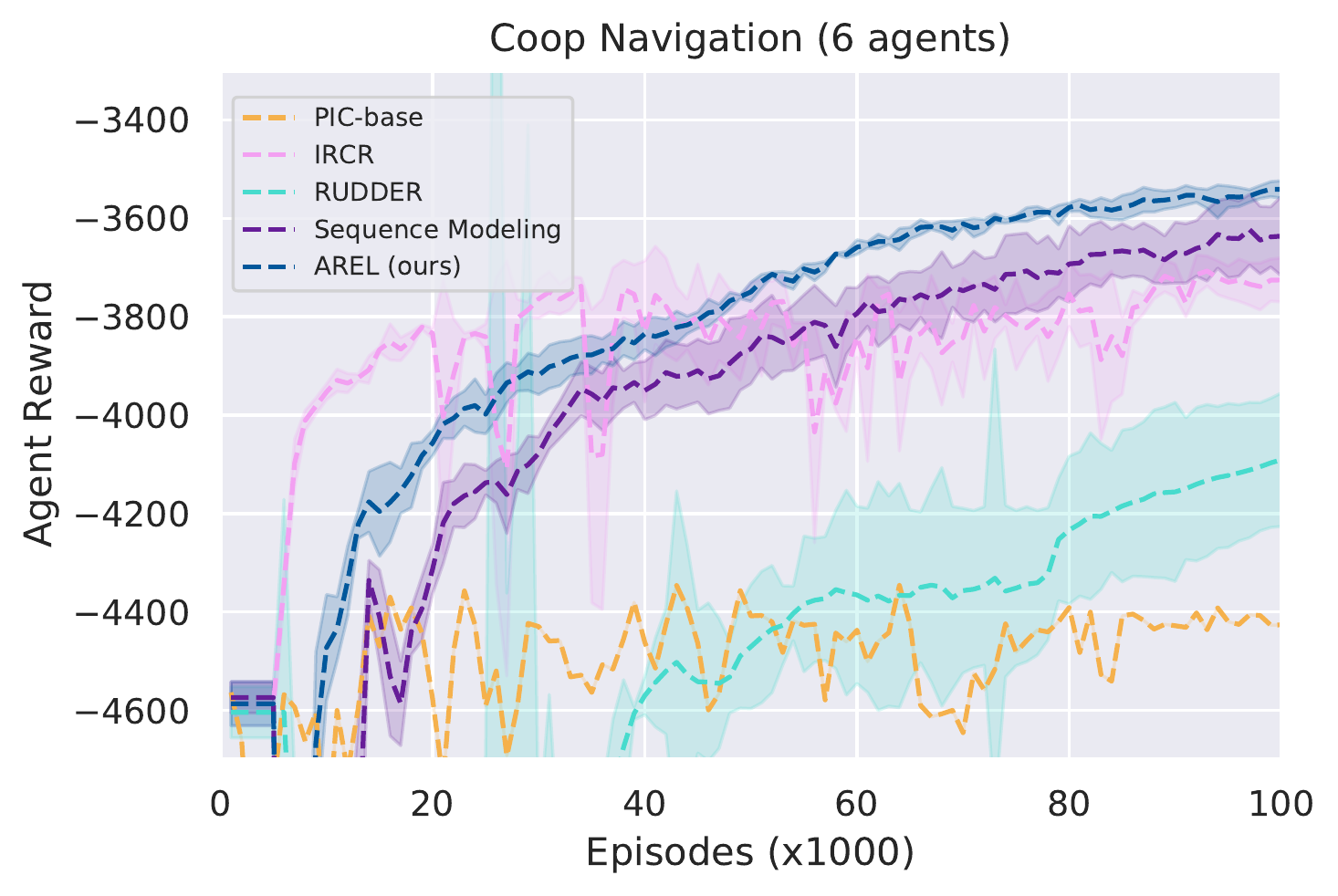}
		\caption{Cooperative Navigation} \label{fig:1c}
	\end{subfigure}
	
	\caption{Average agent rewards and standard deviation for tasks in the Particle World environment with episodic rewards and $N=6$. \emph{AREL} (dark blue) results in the highest average rewards in all the tasks. %In comparison, the PIC baseline (orange) and RUDDER (green) are unable to learn policies to complete the tasks. Using a surrogate objective in Trajectory Smoothing (pink) results in comparable rewards during the early stages of training in Cooperative Navigation, but the reward curve is unstable. Sequence modeling (purple) obtains a higher reward than the former three methods, but does not explicitly model agent-attention, which results in a lower average reward than \emph{AREL}.
	} \label{FigGraphsN6}
\end{figure*}

\begin{figure*}
	\begin{subfigure}{0.31\textwidth}
		\includegraphics[width=\linewidth]{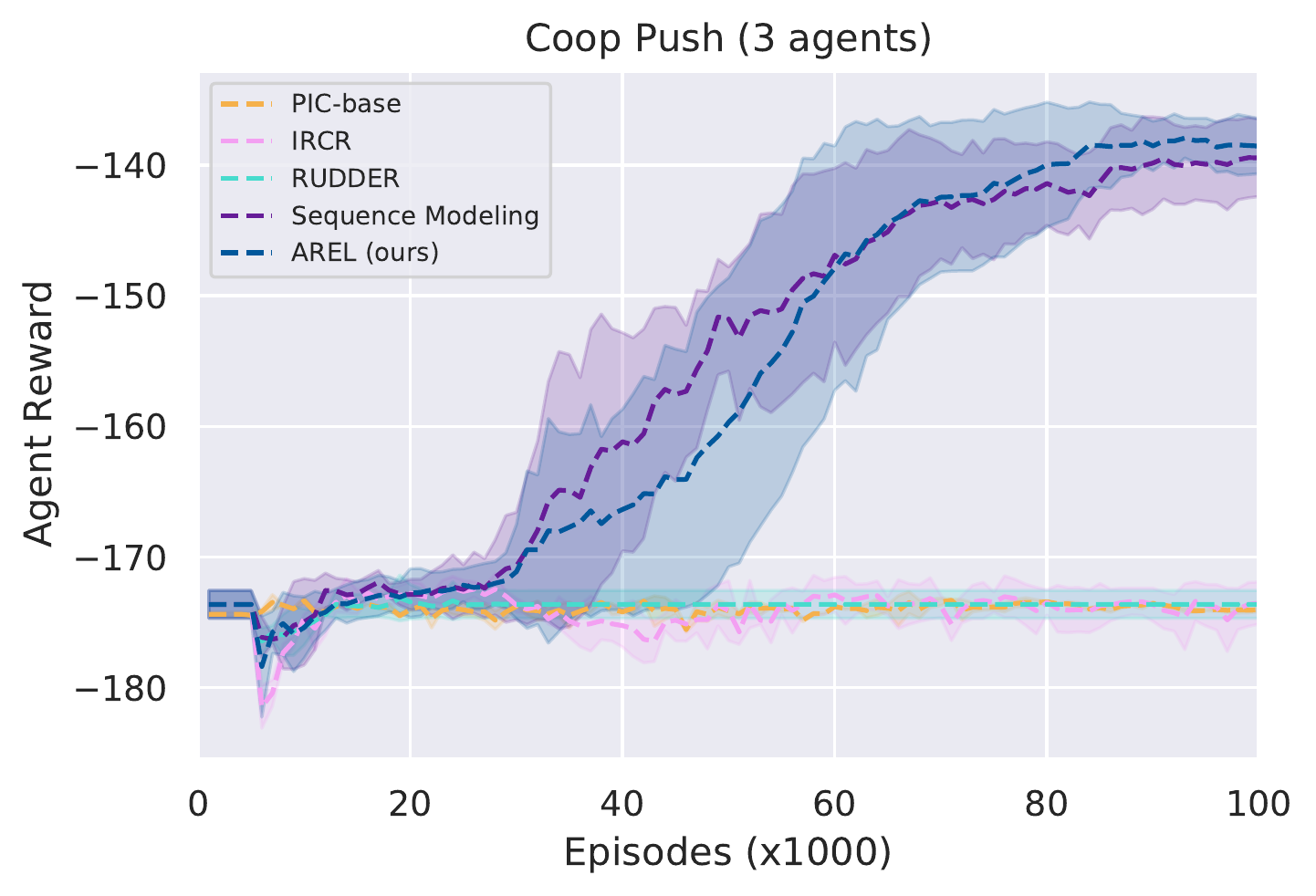}
		\caption{Cooperative Push} \label{fig:3a}
	\end{subfigure}%
	\hspace*{\fill}   % maximize separation between the subfigures
	\begin{subfigure}{0.31\textwidth}
		\includegraphics[width=\linewidth]{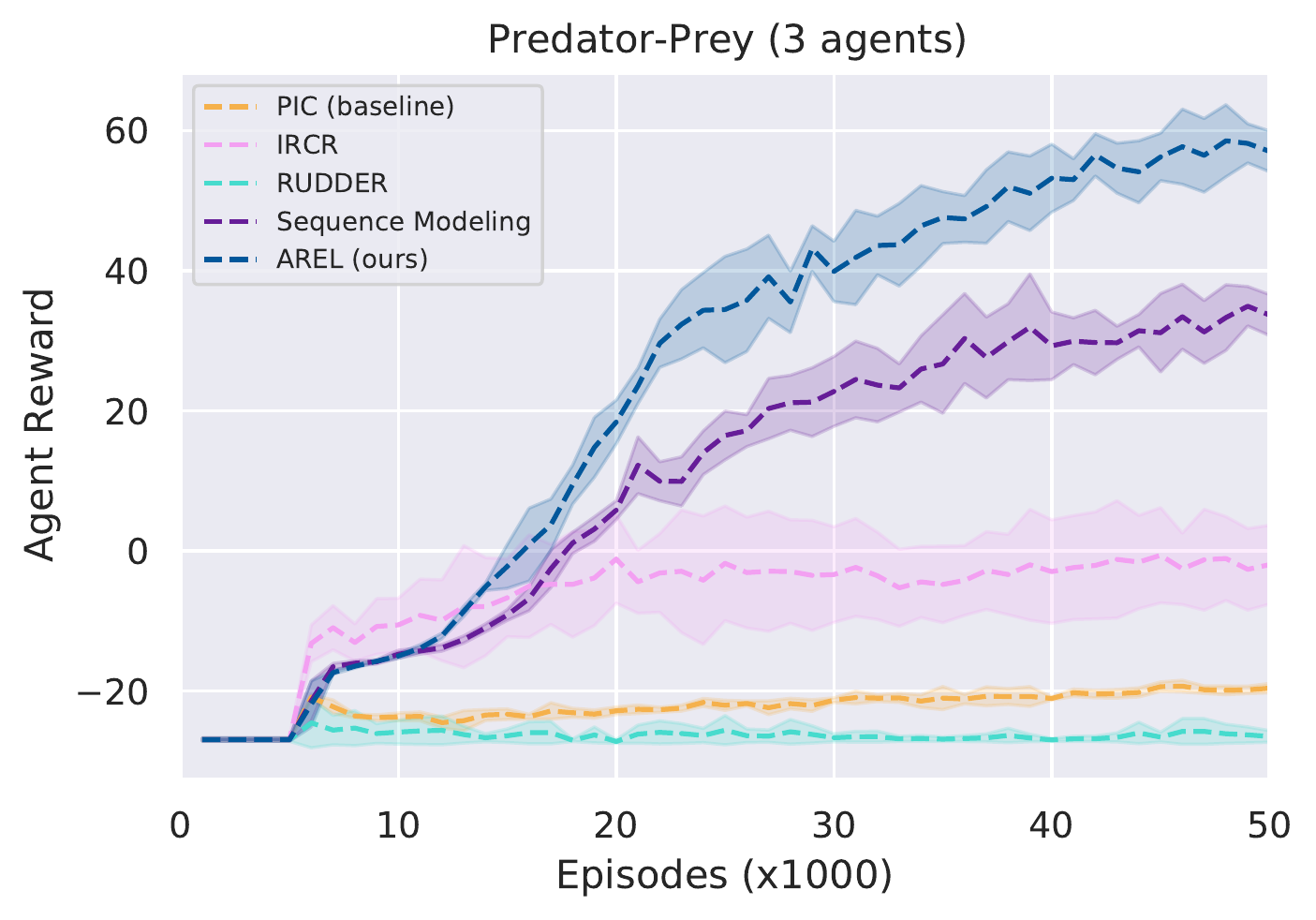}
		\caption{Predator and Prey} \label{fig:3b}
	\end{subfigure}%
	\hspace*{\fill}   % maximizeseparation between the subfigures
	\begin{subfigure}{0.31\textwidth}
		\includegraphics[width=\linewidth]{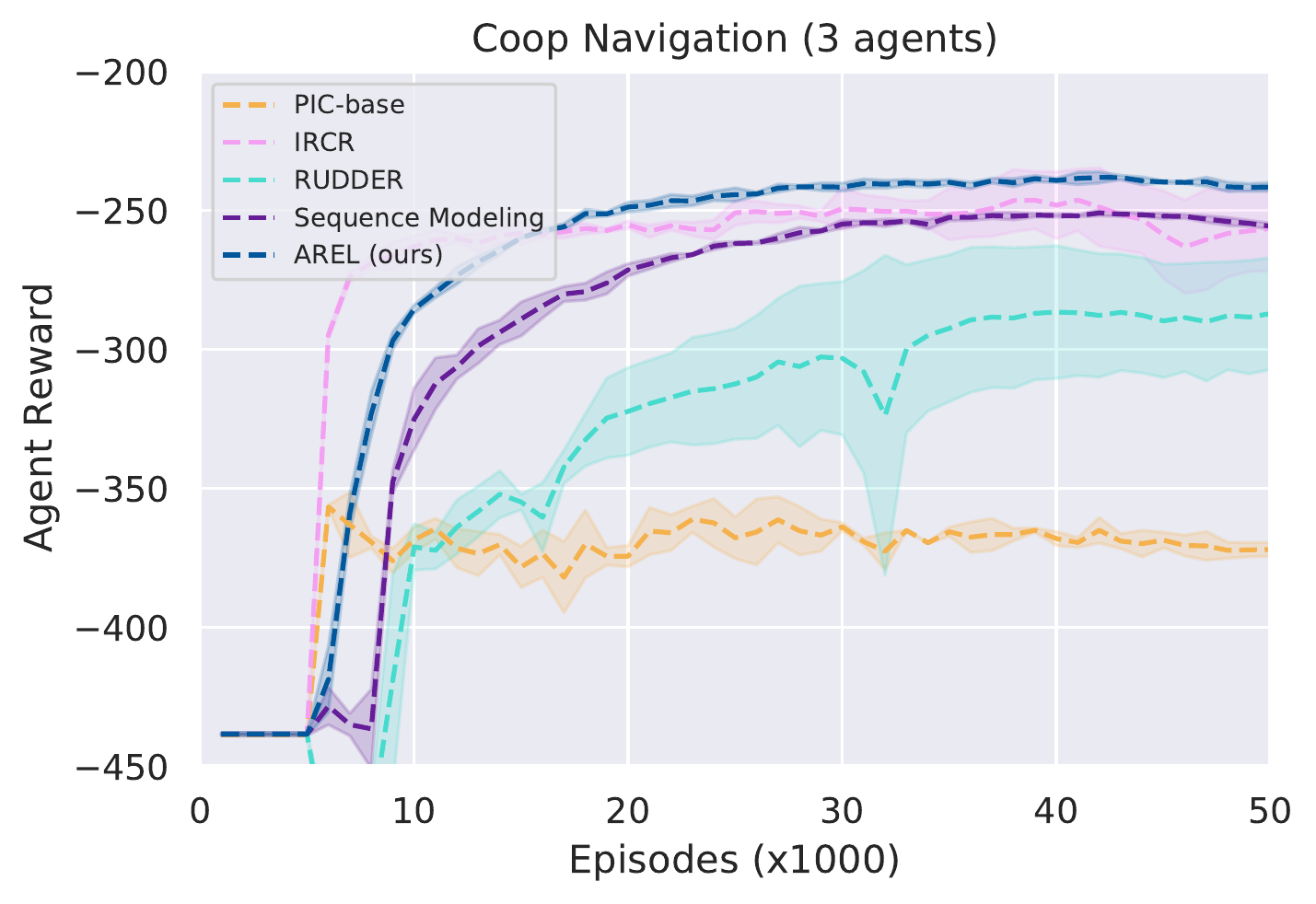}
		\caption{Cooperative Navigation} \label{fig:3c}
	\end{subfigure}
	
	\caption{Average agent rewards in Particle World environment with episodic rewards and $N=3$. \emph{AREL} (dark blue) results in the highest average rewards. %In comparison, the PIC baseline (orange) and RUDDER (green) are unable to learn policies to complete the tasks. Using a surrogate objective in Trajectory Smoothing (pink) results in comparable rewards in Cooperative Navigation. The performance of Sequence modeling (purple) is comparable to \emph{AREL} in this setting, which indicates that the effect of modeling agent attention is less pronounced when there are fewer agents.
	} \label{FigGraphsN3}
\end{figure*}

All the results presented in this paper are averaged over 5 runs with different random seeds. 
We tested the following values for the regularization parameter in Eqn. (\ref{LossTotal}): $\{0.1, 1, 10, 20, 50, 100\}$, and observed that $\omega = 20$ resulted in the best performance. 

%The regularization parameter $\omega$ in Equation \ref{LossTotal} is set to 20, which is searched among $\{0.1, 1, 10, 20, 50, 100\}$.
In order to make the agent-temporal attention module more expressive, we use a transformer architecture with multi-head attention \cite{vaswani2017attention} for both agent and temporal attention.
Specifically, in our experiments, the transformer architecture applies, in sequence: an attention layer, layer normalization, two feed forward layers with ReLU activation, and another layer normalization.
Before each layer normalization, residual connections are added.

When dimension of the observation space exceeds $100$, a single fully connected layer with $100$ units is applied to compress the observation before attention module. 
The credit assignment block that produces redistributed rewards $\hat{r}_t$ consists of two MLPs, each with a single hidden layer of 50 units.
Neural networks for credit assignment are trained using Adam with learning rate $1\times 10^{-4}$. 

In Particle World, credit assignment networks are updated for $1000$ batches every $1000$ episodes. 
Each batch contains $256$ fully unrolled episodes uniformly sampled from the trajectory experience buffer $B_e$.  
In StarCraft, credit assignment networks are updated for $400$ batches every $200$ episodes. %, the credit assignment networks are updated for 400 batches. 
Each batch contains $32$ fully unrolled episodes uniformly sampled from $B_e$. 

During training and testing, the length of each episode in Particle World is kept fixed at $25$ time steps, except in Predator-Prey with $N=15$, where the episode length is set to $50$ time steps.
In StarCraft, an episode is restricted to have a maximum length of $120$ time steps for 2s3z, $180$ time steps
for 1c3s5z, and $250$ time steps for 3s\_vs\_5z. 
If both armies are alive at the end of the episode, we count it as a loss for the team of learning agents. 
An episode terminates after one army has been defeated, or the time limit has been reached.

In Particle World, we use the permutation invariant critic (\emph{PIC}) based on MADDPG from \cite{liu2020pic} as the base reinforcement learning algorithm. 
The code is based on the implementation available at \hyperlink{https://github.com/IouJenLiu/PIC}{https://github.com/IouJenLiu/PIC}.
Following MADDPG \cite{lowe2017multi}, the actor policy is parameterized by a two-layer MLP with 128 hidden units per layer, and ReLU activation function. 
The permutation invariant critic is a two-layer graph convolution net with 128 hidden units per layer, max pooling at the top, and ReLU activation. 
%The replay buffer consists of the last 1000000 steps, from which a batch of size 1024 is uniformly sampled for training via Adam optimizer.
Learning rates for actor and critic are 0.01, and is linearly decreased to zero at the end of training. 
Trajectories of the first $5000$ episodes are sampled randomly for filling the experience buffer.
During training, uniform noise was added for exploration during action selection.

In StarCraft, we use QMIX \cite{rashid2018qmix} as the base algorithm. 
The QMIX code is based on the implementation from \hyperlink{https://github.com/starry-sky6688/StarCraft}{https://github.com/starry-sky6688/StarCraft}. 
In the implementation, all agent networks share a deep recurrent Q-network with recurrent layer comprised of a GRU with a 64-dimensional hidden
state, with a fully-connected layer before and after.  
%The replay buffer consists of the last 5000 episodes, from which a batch of size 32 is uniformly sampled for training.
Trajectories of the first $20000$ episodes are sampled randomly to fill the experience buffer.
Target networks are updated every 200 training episodes.
QMIX is trained using RMSprop with learning rate $5 \times 10^{-4}$.
Throughout training, $\epsilon$-greedy is adopted for exploration, and  
$\epsilon$ is annealed linearly from 1.0 to 0.05 over 50k time steps and kept constant for the rest of learning.

The experiments that we perform in this paper require computational resources to train the attention modules of \emph{AREL} in addition to those needed to train deep RL algorithms. 
Using these resources might result in higher energy consumption, especially as the number of agents grows. 
This is a potential limitation of the methods studied in this paper. 
However, we believe that \emph{AREL} partially addresses this concern by sharing certain modules among all agents in order to improve scalability.  

We provide a description of our hardware resources below: \\
\textbf{Hardware Configuration}: All our experiments were carried out on a machine running \emph{Ubuntu}\textsuperscript \textregistered $18.04$ equipped with a $16-$core \emph{Intel}\textsuperscript \textregistered \emph{Xeon}\textsuperscript \textregistered $3.7$ GHz CPU, two \emph{NVIDIA}\textsuperscript \textregistered \emph{GeFORCE}\textsuperscript \textregistered RTX $2080$ Ti graphics cards and a $128$ GB RAM.

\subsection*{\large Appendix D: Additional Experimental Results}  \label{AddnlExptRes}
\begin{table*}[]
	\begin{tabular}{|c|c|c|c|c|c|}
		\hline
		\multicolumn{1}{|c|}{\textbf{No. of Agents}} & \textbf{Task}          & \textbf{$R_{avg}$: AREL} & \textbf{$R_{avg}$: Uniform} & \textbf{$R_{final}$: AREL} & \textbf{$R_{final}$: Uniform} \\ \hline
		& CP       &   $\mathbf{-1490.7}$                       &     -1553.6                        & $\mathbf{-1375.2}$                         &        $-1477.6$                     \\ \cline{2-6} 
		N = 15                                       & PP          &    $\mathbf{827.7}$                      & $674.0$                            &   $\mathbf{1325.0}$                       &          $1167.0$                   \\ \cline{2-6} 
		& CN &         $\mathbf{-2691.1}$                 &    $-2811.2$                         &    $\mathbf{-2206.2}$                      &     $-2301.1$                        \\ \hline 
		& CP       & $\mathbf{-306.9}$                         &   $-317.6$                          & $\mathbf{-282.4}$                         &     $-286.3$                        \\ \cline{2-6} 
		N = 6                                        & PP          &    $\mathbf{149.4}$                      & $111.6$                            &$\mathbf{259.8}$                          & $229.7$                            \\ \cline{2-6} 
		& CN &    $\mathbf{-3902.3}$                      &     $-3950$                        &     $\mathbf{-3541.2}$                     &$-3648.6$                             \\ \hline
		& CP       &    $\mathbf{-159.0}$                      &           $-166.5$                  &  $\mathbf{-139.2}$                        &         $-146.1$                    \\ \cline{2-6} 
		N = 3                                        & PP          &   $\mathbf{23.4}$                       &           $15.2$                  &    $\mathbf{57.0}$                      &          $46.9$                   \\ \cline{2-6} 
		& CN &  $\mathbf{-274.5}$                 &       $-282.4$                      &         $\mathbf{-241.5}$                 &     $-244.4$                        \\ \hline 
	\end{tabular}

\caption{Ablation: This table demonstrates the effect of removing the agent attention module, and uniformly weighting the attention of each agent in the three tasks in Particle World. This is termed \emph{Uniform}, and we compare this with \emph{AREL}. We report the average rewards over the number of training episodes ($R_{avg}$) and the final agent reward ($R_{final}$) at the end of training in both scenarios for each task. \emph{AREL} consistently results in higher average and maximum rewards (shown in \textbf{bold}), which indicates that the agent attention module plays a crucial role in effective credit assignment.}
\label{tab:my-table}
\end{table*}

This Appendix presents additional experimental results carried out in the Particle World environment. 

Figure \ref{FigGraphsN6} shows results of our experiments for tasks in Particle World when $N = 6$. 
In each case, \emph{AREL} is consistently able to allow agents to learn policies that result in higher average rewards compared to other methods. 
This is a consequence of using an attention mechanism that enables decomposition of an episodic reward along the length of an episode, and that also characterizes contributions of individual agents to the reward. 
Performances of the \emph{PIC} baseline \cite{liu2020pic}, \emph{RUDDER} \cite{arjona2019rudder}, and \emph{Sequence Modeling} \cite{liu2019sequence} can be explained similar to that presented for the case $N=15$ in the main paper. 
Using a surrogate objective in \emph{IRCR} \cite{gangwani2020learning} results in obtaining comparable agent rewards in some cases in the Cooperative Navigation task, but the reward curves are unstable and have high variance. 
%
%In comparison, the \emph{PIC} baseline fails to learn policies to accomplish the tasks when rewards are provided only at the end of an episode. 
%We observe that \emph{RUDDER} is also unable to learn agent policies to complete these tasks. 
%An explanation for this could be that RUDDER only carries out a temporal decomposition of rewards, but does not take into consideration the effect of different agents contributing differently to a reward. 
%
%\emph{Sequence Modeling} performs better than former two methods, possibly because it uses a Transformer-based attention mechanism for the temporal decomposition of an episodic reward. 
%This was shown to outperform LSTM-based models in \cite{liu2019sequence} in single-agent episodic RL tasks, due to the relative ease of training the Transformer-based mechanism. 
%However, the absence of an explicit characterization of agent-attention is a possible justification for the smaller average reward of this method compared to \emph{AREL}. 
%
%Using a surrogate objective in \emph{IRCR} results in obtaining comparable agent rewards in some cases in the Cooperative Navigation task, but the reward curves are unstable and have a high variance. 
%%\emph{Smooth} also obtains an average reward comparable to \emph{AREL} in the Predator-Prey task with $N = 15$. 
%However, \emph{AREL} consistently obtains higher average rewards than \emph{IRCR} during the early stages of training. 
%The performance of \emph{IRCR} is comparable to \emph{RUDDER} and the \emph{PIC} baseline in all other tasks. 

Figure \ref{FigGraphsN3} show the results of experiments on these tasks when $N = 3$. 
The \emph{PIC} baseline and \emph{RUDDER} are unable to learn good policies and \emph{IRCR} results in lower rewards than \emph{AREL} in two tasks. 
The performance of \emph{Sequence Modeling} is comparable to \emph{AREL}, which indicates that characterizing agent attention plays a smaller role when there are fewer agents. 

\subsection*{\large Appendix E: Additional Ablations} \label{AddnlAblations} 
%\begin{figure*}[ht]
%\centering
%\includegraphics[width=0.51\linewidth]{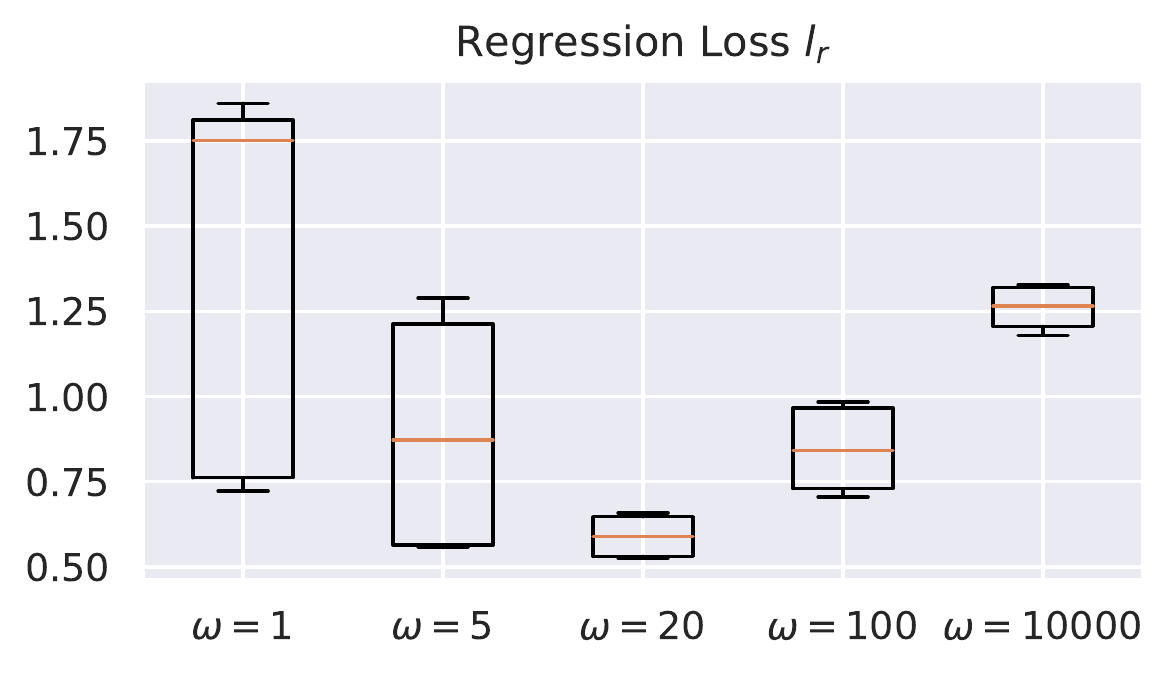}
%\caption{Ablation: Effect of value of regularization parameter $\omega$ in Eqn (\ref{LossTotal}) on magnitude of regression loss $l_r$.}\label{Diff_w_Loss_Ratio}
%\end{figure*}

\begin{figure}[ht]
\centering
\includegraphics[width=0.85\linewidth]{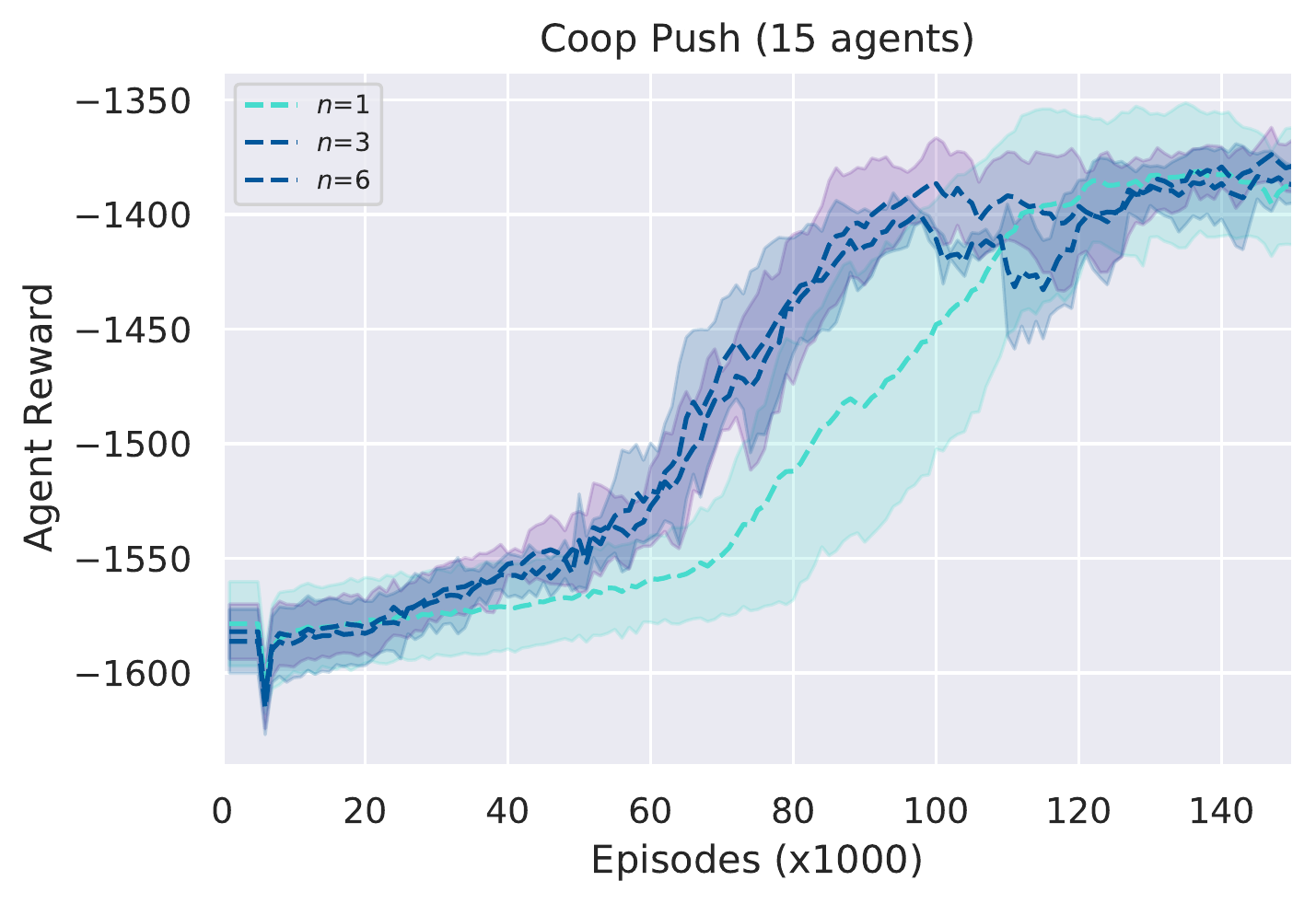}
\caption{Ablation: Effect of the number of agent-temporal attention blocks on rewards.}\label{Diff_w}
\end{figure}

This Appendix presents additional ablations that examine the impact of uniformly weighting the attention of agents %, the value of the coefficient $\omega$ in Eqn. (\ref{LossTotal}) on the magnitude of the regression loss $l_r$, 
%$loss_{total}(\theta)$ (Eqn. (\ref{LossTotal})), 
and the number of agent-temporal attention blocks.%, and the number of agent-temporal attention blocks on agent rewards. 

We evaluate the effect of removing the agent-attention block, and uniformly weighting the attention of each agent. 
This is termed uniform agent attention (\emph{Uniform}). 
The average rewards over the number of training episodes ($R_{avg}$) and the final agent reward ($R_{final}$) at the end of training obtained when using \emph{AREL} and when using \emph{Uniform} are compared. 
The results of these experiments, presented in Table \ref{tab:my-table}, indicate that $R_{avg}$ and $R_{max}$ are higher for \emph{AREL} than for \emph{Uniform}. 
This shows that the agent-attention block in \emph{AREL} plays a crucial role in performing credit assignment effectively.

We examine the effect of the number of agent-temporal attention blocks, $n$ (\textbf{depth}) on rewards in the Cooperative Push task with $N=15$ in Figure \ref{Diff_w}. 
The depth has negligible impact on average rewards at the end of training. 
However, rewards during the early stages of training are lower for $n=1$, and these rewards also have a larger variance than the other cases ($n=3, 6$). 

\subsection*{\large Appendix F: Effect of Choice of Regularization Loss}
%\textcolor{blue}{
This Appendix examines the effect of the choice of the regularization loss term in Eqn. (\ref{LossTotal}). 
The need for the regularization loss term arises due to the possibility that there could be more than one choice of redistributed rewards that minimize the regression loss alone. 
In our results in the main paper, we used the variance of the redistributed rewards as the regularization loss. 
This choice was motivated by a need to discourage the predicted redistributed rewards from being sparse, since sparse rewards might impede learning of policies when provided as an input to a MARL algorithm \cite{devlin2011theoretical}. 
By adding a variance-based regularization, the total loss enables incorporating the possibility that not all intermediate states would contribute equally to an episodic reward, while also resulting in learning redistributed rewards that are less sparse. 
%Our motivation for this choice was to devise a way to generalize the solution of the \emph{IRCR} algorithm presented in \cite{gangwani2020learning}, where the authors uniformly redistribute an episodic reward along the length of the episode. 
%}

%\textcolor{blue}{
%With respect to Eqn. (\ref{LossTotal}), the parameter $\omega$ taking a very high value can be interpreted as an approximation of \emph{IRCR} \cite{gangwani2020learning}, where the authors uniformly redistributed an episodic reward along the length of the episode. 
%An exact uniform redistribution of the predicted reward might not be possible (please see Discussion in Sec. 4.2.2). %if one is additionally interested in using information from sample trajectories to characterize the relative contributions of agents at different intermediate time-steps along an episode. 
%%However, in the presence of additional information that might be available about the regression loss, 
%This insight is additionally underscored by observations that very large values of $\omega$ results in smaller agent rewards (Fig. 4(b)), and that the redistributed rewards are indeed not uniform (visualization in \emph{Appendix I}). 
%}

%\textcolor{blue}{
We compare the variance-based regularization loss with two other widely used choices of the regularization loss- the $L_1$ and $L_2$-based losses. 
The $L_1$-based regularization encourages learning sparse redistributed rewards, and the $L_2$-based regularization discourages learning a redistributed reward of large magnitude (i.e., `spikes' in the redistributed reward). 
Specifically, we study:
%} 
\begin{align*}
loss_{total}(\theta) &= l_r (\theta) + \omega l_v (\theta)\\
loss_{total}(\theta) &= l_r (\theta) + \omega_1 l_1 (\theta)\\
loss_{total}(\theta) &= l_r (\theta) + \omega_2 l_2 (\theta), 
\end{align*}
where $l_r (\theta), l_v(\theta)$ are the regression loss %as in Eqn. (\ref{LossTotal}), 
and variance of redistributed rewards as in Eqn. (\ref{LossTotal}), $l_1 (\theta)$ ($l_2 (\theta)$) is the $L_1$ norm ($L_2$ norm) of the redistributed reward.%, and $l_2 (\theta)$ is the $L_2$ norm of the redistributed reward. 

%\textcolor{blue}{
We compare the use of the three regularization loss functions on the tasks in \emph{Particle World} with $N = 15$. 
In each task, we calculate the $0-1$-normalized reward received by the agents during the last $1000$ training steps. 
We use $\omega = 20$ for the variance-based regularization loss. 
For the other two regularization losses, we searched over $\omega_1 \in \{0.1, 1, 10, 100 \}$ and $\omega_2 \in \{0.1, 1, 10, 100 \}$, and we observed that $\omega_1 = 10$ and $\omega_2 = 10$ resulted in the best performance. 
The graph in Figure \ref{FigRegLosses} shows the average $0-1$-normalized reward. 
We observe that using a variance-based regularization loss results in agents obtaining the highest average rewards.
%}
\begin{figure}[ht]
\centering
\includegraphics[width=0.85\linewidth]{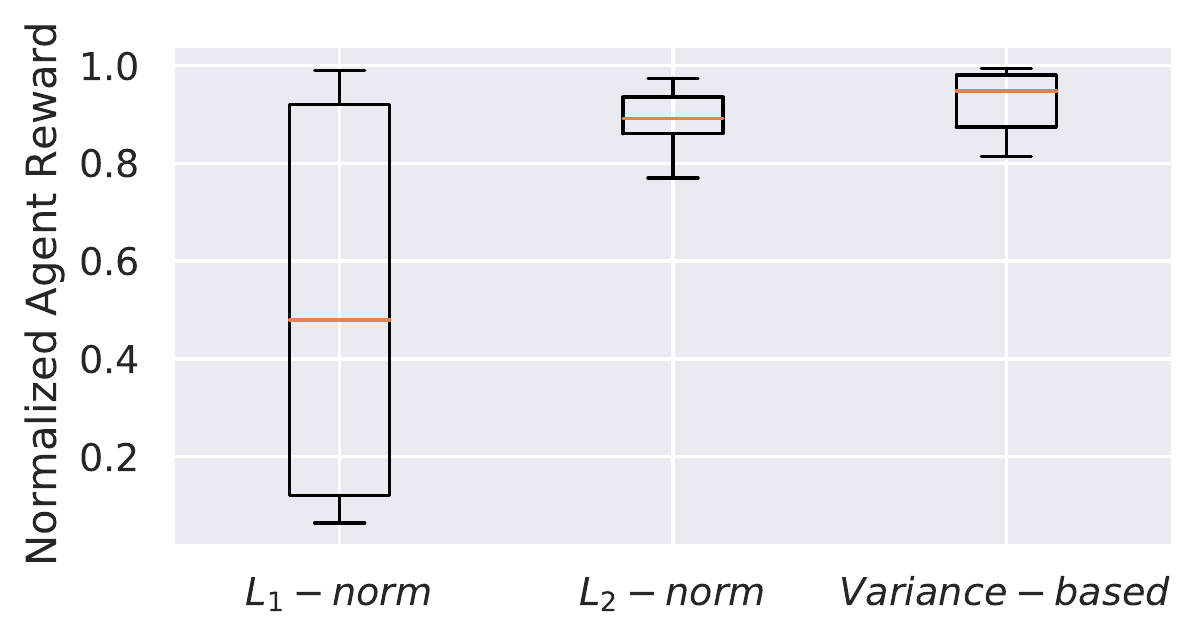}
\caption{Normalized average agent rewards for tasks in \emph{Particle World} when using variance-based, $L_1$-based, and $L_2$-based regularization losses. The variance-based regularization results in agents obtaining the highest average rewards.}\label{FigRegLosses}
\end{figure}

%\textcolor{blue}{
In particular, we observe that using an $L_1$-based regularization results in significantly smaller rewards. 
A possible reason for this is that the $L_1$-based regularization has the property of encouraging learning a \emph{sparse} redistributed reward, which hinders learning of policies when provided as an input to the MARL algorithm. 
The performance when using the $L_2$-based regularization results in a comparable, albeit slightly lower, average agent reward to using the variance-based regularization. 
This is reasonable since using the variance-based or $L_2$-based regularization will result in less sparse predicted redistributed rewards. 
%}

\subsection*{\large Appendix G: Verification of QMIX Implementation}

\begin{figure}[ht]
\centering
\includegraphics[width=0.85\linewidth]{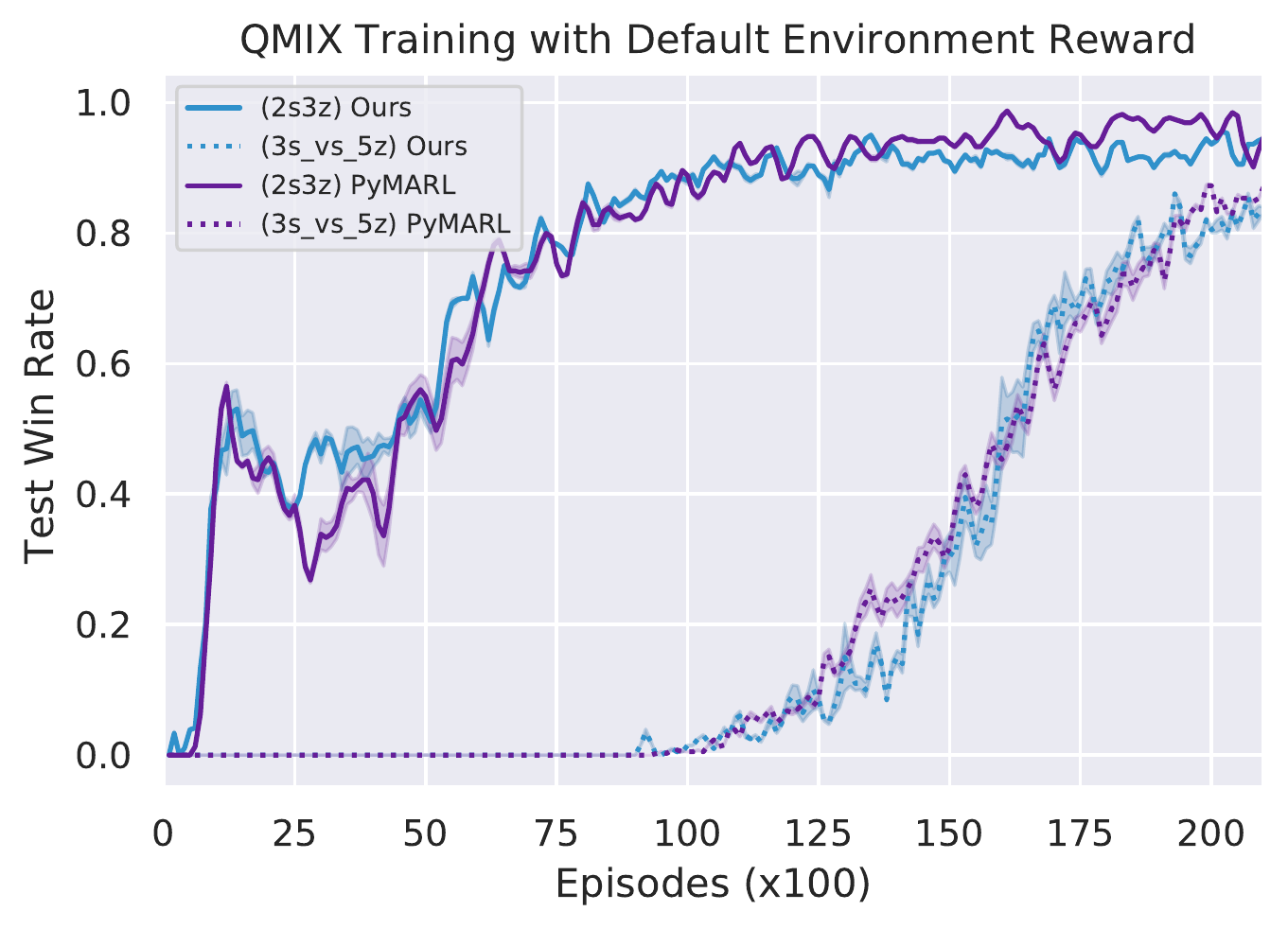}
\caption{Comparison of PyMARL and our implementations (avg. over 5 runs). Test win-rates using either implementation are almost identical.}\label{PyMARLvsOurs}
\end{figure}

%\textcolor{blue}{
This Appendix demonstrates correctness of the QMIX implementation that we use from \hyperlink{https://github.com/starry-sky6688/StarCraft}{https://github.com/starry-sky6688/StarCraft}. 
In the QMIX evaluation first used in \cite{rashid2018qmix}, rewards were not delayed. 
In our experiments, rewards are delayed and revealed only at the end of an episode. 
In such a scenario, QMIX may not be able to perform long-term credit assignment, which explains the difference in performance between the default and delayed reward cases. 
We observe that using redistributed rewards from AREL as an input to QMIX results in an improved performance compared to using QMIX alone when rewards from the environment were delayed (Figure 3 in the main paper). 
Using the default, non-delayed rewards, we compare the performance of the QMIX implementation that we used in our experiments with the benchmark implementation from \cite{samvelyan19smac}. 
Figure \ref{PyMARLvsOurs} shows that test win rates in two StarCraft maps ($2s3z$ and $3s\_vs\_5z$) using both implementations are almost identical. 
%}

\end{document}